\newcolumntype{L}{D{.}{.}{2,5}}
\newtheorem{theorem}{Theorem}[section]
\newtheorem{lemma}[theorem]{Lemma}
\newtheorem{proposition}[theorem]{Proposition}
\newtheorem{corollary}[theorem]{Corollary}
\newtheorem{assumption}{Assumption}
\newtheorem{remark}{Remark}[section]
\def\mds{\medskip}
\def\Rb{{\mathbb R}}
\def\Fc{{\mathcal F}}
\def\Nc{{\mathcal N}}
\long\def\symbolfootnote[#1]#2{\begingroup%
\def\thefootnote{\fnsymbol{footnote}}\footnotetext[#1]{#2}\footnotemark[#1]\endgroup}
\def\ba#1\ea{\begin{align*}#1\end{align*}} 
\def\banum#1\eanum{\begin{align}#1\end{align}} 
\newcounter{Fig}[figure]
\newcounter{Tab}[table]
   \samepage\vspace{0.2cm}
\newcommand{\dd}{\mbox{\boldmath $d$}}
\newcommand{\uu}{\mbox{\boldmath $u$}}
\newcommand{\vv}{\mbox{\boldmath $v$}}
\def \Eb{{\mathbb E}}
\def \Gb{{\mathbb G}}
\def \Hb{{\mathbb H}}
\def \Lb{{\mathbb L}}
\def \Pb{{\mathbb P}}
\def \Rb{{\mathbb R}}
\def \Vb{{\mathbb V}}
\def \Zb{{\mathbb Z}}
\def \Ac{{\mathcal A}}
\def \Rc{{\mathcal R}}
\def \Uc{{\mathcal U}}
\def \Fc{{\mathcal F}}
\def \Sc{{\mathcal S}}
\def \Nc{{\mathcal N}}
\newcommand{\bqa}{\begin{eqnarray*}}
\newcommand{\eqa}{\end{eqnarray*}}
\newcommand{\bqan}{\begin{eqnarray}}
\newcommand{\eqan}{\end{eqnarray}}
\newcommand{\bqt}{\begin{quote}}
\newcommand{\eqt}{\end{quote}}
\newcommand{\bt}{\begin{tabbing}}
\newcommand{\et}{\end{tabbing}}
\newcommand{\bit}{\begin{itemize}}
\newcommand{\eit}{\end{itemize}}
\newcommand{\ben}{\begin{enumerate}}
\newcommand{\een}{\end{enumerate}}
\newcommand{\beq}{\begin{equation}}
\newcommand{\eeq}{\end{equation}}
\newcommand{\beqw}{\begin{equation*}}
\newcommand{\eeqw}{\end{equation*}}
\newcommand{\bdefi}{\begin{definition}}
\newcommand{\edefi}{\end{definition}}
\newcommand{\bpro}{\begin{proposition}}
\newcommand{\epro}{\end{proposition}}
\newcommand{\blem}{\begin{lemma}}
\newcommand{\elem}{\end{lemma}}
\newcommand{\bco}{\begin{corollary}}
\newcommand{\eco}{\end{corollary}}
\newcommand{\bdes}{\begin{description}}
\newcommand{\edes}{\end{description}}
\newcommand{\bre}{\begin{remark}}
\newcommand{\ere}{\end{remark}}
\newcommand{\eps}{\epsilon}
\newcommand{\tr}{\mathrm{tr}}
\def\mds{\medskip}
\def\1{{\mathbf 1}}
\newcommand*{\addFileDependency}[1]{
  \typeout{(#1)}
  \@addtofilelist{#1}
  \IfFileExists{#1}{}{\typeout{No file #1.}}
}
\def\@seccntformat#1{\@ifundefined{#1@cntformat}%
   {\csname the#1\endcsname\quad}
   {\csname #1@cntformat\endcsname}
}
\begin{document}

\title{\bf Factor Multivariate Stochastic Volatility Models of High Dimension\thanks{
The authors are most grateful to Yoshihisa Baba for his very helpful comments and suggestions. B.P. was supported by the Japan Society for the Promotion of Science (Grant Number: 25K16617).
M.A. acknowledges the financial support of the Japan Society for the Promotion of Science (Grant Number: 22K01429, 25K05042).
}}

\medskip

\author[a]{ Benjamin Poignard}

\author[b]{Manabu Asai}

 \affil[a]{\it \footnotesize Faculty of Science and Technology, Keio University and Riken-AIP, Japan. bpoignard@math.keio.ac.jp}
 \affil[b]{\it \footnotesize Faculty of Economics, Soka University, Japan. m-asai@soka.ac.jp}

 \maketitle

\baselineskip = 7mm

\begin{abstract}
Building upon factor decomposition to overcome the curse of dimensionality inherent in multivariate volatility processes, we develop a factor model-based multivariate stochastic volatility (fMSV) framework.
We propose a two-stage estimation procedure for the fMSV model: in the first stage, estimators of the factor model are obtained, and in the second stage, the MSV component is estimated using the estimated common factor variables.
We derive the asymptotic properties of the estimators, taking into account the estimation of the factor variables. The prediction performances are illustrated by finite-sample simulation experiments and applications to portfolio allocation.
\end{abstract}

\noindent{\bf{Keywords:}} Factor Model; Forecasting; Multivariate Stochastic Volatility.

\section{Introduction}

The generalized autoregressive conditional heteroskedasticity (GARCH) and the stochastic volatility (SV) models are two popular families to specify and analyze the uncertainty of economic and financial time series. 
Multivariate GARCH (MGARCH) and Multivariate SV (MSV) models can capture time-varying covariance structures and are thus useful for forecasting variance-covariance matrices.
Within MGARCH models, the dynamic conditional correlation (DCC) models of \cite{engle2002} and \cite{tse2002}, the BEKK model of \cite{baba1985} and \cite{engle1995}, and their variants are commonly used: see the surveys of \cite{bauwens2006} and \cite{boudt2019}.
The MSV process of \cite{harvey1994} serves as the foundational framework and has been extended in various ways, including the factor model of \cite{chib2006} and the dynamic correlation model of \cite{asai2009}.
For a broader discussion on MSV models and Bayesian Markov chain Monte Carlo (MCMC) techniques for their estimation, see \cite{chib2009} and \cite{kastner2017}.

A major drawback of MGARCH and MSV models is the rapid increase in the number of parameters as the dimension of the observed vector grows. This not only complicates estimation in high dimensions but also degrades predictive performance due to overfitting. While enforcing parsimony in parameterization can mitigate this issue, overly restrictive specifications may fail to capture essential data features. Consequently, there exists a trade-off between model flexibility and parsimony.

A solution to the curse of dimensionality in MGARCH and MSV dynamics is to incorporate a factor-based structure, following the approaches of \cite{engle1990} and \cite{jacquier1999}, respectively.
These factor models for time-varying covariance processes align with asset pricing theory in finance.
\textcolor{black}{While \cite{alexander2001} and \cite{weide2002} develop MGARCH models based on principal component analysis, \cite{barigozzi2017} consider general dynamic factor models with conditional volatilities.}

In the MSV literature, various factor MSV (fMSV) models have been proposed to promote parsimonious parameterizations; see, e.g., \cite{aguilar2000}, \cite{chib2006}, \cite{jacquier1999}, \cite{liesenfeld2003}, \cite{lopes2007}, and \cite{pitt1999}.
Notably, most of these studies rely on Bayesian MCMC methods, with the exception of \cite{liesenfeld2003}, which applies a maximum likelihood approach based on efficient importance sampling. However, their fMSV model includes only a single factor.

\cite{poignard2023_jtsa} developed \textcolor{black}{an} OLS-based technique for estimating high-dimensional MSV models.
Unlike existing MSV models, this approach allows for interactions between log-volatilities of different returns across different time periods, making it more flexible than the basic MSV model of \cite{harvey1994}.
Moreover, since the method is non-Bayesian, it avoids \textcolor{black}{the} pitfalls of MCMC-based techniques, such as prior specification challenges \textcolor{black}{and} numerical instability.
Our main idea is to extend this OLS framework to the estimation of fMSV models. Specifically, we consider the following problem: given $T$ observations of a $p$-dimensional random vector $y_t$, we specify a factor decomposition \textcolor{black}{where} the common factors follow an MSV model, as in \cite{jacquier1999}.
The estimation procedure consists of two stages. First, we estimate the factor model \textcolor{black}{by maximum likelihood under suitable restrictions for identification following the framework of \cite{bai2012}}. Then, using the estimated factor scores, we estimate the MSV model for the common factors. 
\textcolor{black}{Although \cite{mucher2025} recently developed a similar two-stage estimation approach based on \cite{bai2012}, our method differs in the second stage. While \cite{mucher2025} employ the efficient method of moments, which is a computationally intensive technique, our approach offers computational efficiency.}

Our main contributions can be summarized as follows: we propose a factor model-based MSV estimation procedure \textcolor{black}{that does not rely on simulated-based techniques}; we establish consistency results of the estimators; the relevance of the method is illustrated by simulated experiments and real data applications.

The remainder of the paper is organized as follows. In Section \ref{sec:framework_setting}, we describe the fMSV model and the procedure to obtain the forecasts of  covariance matrices. 
Section \ref{sec:estimation_procedure} details the two-stage estimation method. The simulation experiments and the real data-based out-of-sample applications are provided in Section \ref{sec:simulations} and Section \ref{sec:real_data}, respectively. All the asymptotic results and secondary technical details \textcolor{black}{are deferred to the Appendices}. The implementation of the fMSV model is available in the Github repository \url{https://github.com/Benjamin-Poignard/fMSV}.  

\mds

\textbf{\emph{Notations.}} Throughout this paper, we denote the cardinality of a set $E$ by $|E|$. For $\vv \in \Rb^{d}$, the $\ell_p$ norm is $\|\vv\|_p = \big(\sum^{\text{d}}_{k=1} |\vv_k|^p \big)^{1/p}$ for $p > 0$, and $\|\vv\|_{\infty} = \max_i|\vv_i|$. We write $A^\top$ (resp. $\vv^\top$) to denote the transpose of the matrix $A$ (resp. the vector $\vv$). For a symmetric matrix $A$, $\lambda_{\min}(A)$ (resp. $\lambda_{\max}(A)$) is the minimum (resp. maximum) eigenvalue of $A$, and $\text{tr}(A)$ is the trace operator. For a matrix $B$, $\|B\|_s = \lambda^{1/2}_{\max}(B^\top B)$ and \textcolor{black}{$\|B\|_F=\{\text{tr}(B^\top B)\}^{1/2}$} are the spectral and Frobenius norms, respectively. We write $\text{vec}(A)$ to denote the vectorization operator that stacks the columns of $A$ on top of one another into a vector. We denote by $\text{vech}(A)$ the $d(d+1)/2$ vector that stacks the columns of the lower triangular part of $A \in \Rb^{d\times d}$. The matrix $I_d$ denotes the $d$-dimensional identity matrix. We denote by $O \in \Rb^{k \times l}$ the $k \times l$ zero matrix. For two matrices $A,B$, $A \otimes B$ is the Kronecker product. For two matrices $A,B$ of the same dimension, $A \odot B$ is the Hadamard product. For $f: \Rb^{d} \rightarrow \Rb$, we denote by $\nabla f$ the gradient or subgradient of $f$ and by $\nabla^2 f$ the Hessian of $f$. 

\section{fMSV Model} \label{sec:framework_setting}

We consider a $p$-dimensional vectorial stochastic process $(y_t)_{t=1,\cdots,T}$ following the fMSV structure: 
\begin{align}
& y_t = \Lambda f_t + \varepsilon_t, \quad \varepsilon_t \sim iid(0, \Sigma_{\varepsilon}), \label{eq:yt} \\
& f_t = D_t^{1/2} \zeta_t, 
\quad \textcolor{black}{D_t = \mbox{diag}\big(\exp( h_{1,t}), \ldots, \exp( h_{m,t}) \big)},
\quad \zeta_t \sim iid(0, I_m), \label{eq:ft}\\
& h_{t+1} = \mu + \Phi (h_t -\mu) + \eta_t, \quad \eta_t \sim \textcolor{black}{iid}(0, \Sigma_{\eta}), \label{eq:ht} 
\end{align}
where $f_t \in \Rb^m$ is the factor variable, $\Lambda \in \Rb^{p \times m}$ is the loading matrix, $\varepsilon_t = (\varepsilon_{1,t},\ldots,\varepsilon_{p,t})^\top \in \Rb^p$ is a random vector, which is independently and identically distributed (i.i.d.) with \textcolor{black}{a diagonal} covariance matrix $\Sigma_{\varepsilon}$, $h_t = (h_{1,t},\ldots,h_{m,t})^\top \in \Rb^m$ is a vector of log-volatilities, $D_t \in \Rb^{m \times m}$ is a diagonal matrix of volatilities, $\zeta_t = (\zeta_{1,t},\ldots,\zeta_{m,t})^\top \in \Rb^m$ is an i.i.d. random vector with covariance matrix $I_m$, $\mu =(\mu_1,\ldots,\mu_m)^\top \in \Rb^m$, $\Phi \in \Rb^{m \times m}$ \textcolor{black}{is non-diagonal}, $\Sigma_{\eta} \in \Rb^{m \times m}$ is the covariance matrix of $\eta_t$, and $(\varepsilon_t)_{t=1,\cdots,T}$, $(\zeta_t)_{t=1,\cdots,T}$, and $(\eta_t)_{t=1,\cdots,T}$ are mutually independent.
\textcolor{black}{The empirical findings of \cite{andersen2001} supports the Gaussian assumption on $\eta_t$ in the context of realized volatility. Therefore, we assume $\eta_t \sim \Nc_{\Rb^m}(0, \Sigma_{\eta})$.}

We assume that all the eigenvalues of $\Phi$ are strictly smaller than one in modulus, which guarantees that $h_t$ and $y_t$ are strictly stationary processes.
\textcolor{black}{We may relax this condition: see the discussion at the end of Subsection \ref{subsec:msv_estimation}.} 
Note that a non-diagonal $\Phi$ allows for cross-effects from other past variables, while the existing literature usually specifies a diagonal $\Phi$, except for \cite{poignard2023_jtsa}.
\textcolor{black}{We do not consider higher-order specifications for $h_t$, as empirical evidence favors the parsimonious MGARCH(1,1) model over higher-order alternatives, despite their theoretical interest.}

Suitable restrictions for the identification of the factor model can be found in, e.g., Table 1 in \cite{bai2012}.
In addition to restricting the factor model as $\Lambda_{kl}=0$ for $l>k$ and $\Lambda_{kk}=1$ for $k \leq m$, \cite{pitt1999} and \cite{chib2006} heavily rely on Bayesian MCMC \textcolor{black}{for estimation purposes}. This is a key difference with our proposed procedure:
our first stage devoted to the factor model estimation builds upon the work of \textcolor{black}{\cite{bai2012}, which lies within the likelihood framework.  
To ensure the identification of the factor model, we rely on condition IC2 of \cite{bai2012}: 
$$\text{IC2:}\;\;M_f:=\Eb[f_tf^\top_t] \; \text{is diagonal with distinct coefficients}, \;\; p^{-1}\Lambda^\top \Sigma^{-1}_{\varepsilon} \Lambda=I_m.$$ 
By equation (\ref{eq:ft}), $\Eb[f_tf^\top_t] = \Eb[\Eb[D^{1/2}_t\zeta_t\zeta^\top_t D^{1/2}_t|h_t]] = \Eb[D_t] \neq I_m$, which renders identification conditions IC3 and IC5 of \cite{bai2012} -- both imposing $M_f=I_m$ -- inappropriate for our setting. While their condition IC1 leaves $M_f$ unrestricted and their condition IC4 requires $M_f$ to be diagonal, both conditions impose a structure on $\Lambda$, similar to \cite{pitt1999} and \cite{chib2006}. Furthermore, under condition IC2, $\Vb(f_{k,t})\neq \Vb(f_{l,t})$ for $k \neq l$.
Moreover, $\Eb[f_t\varepsilon^\top_t]=\Eb[D^{1/2}_t\zeta_t\varepsilon^\top_t]=O\in \Rb^{m \times p}$. By equation (\ref{eq:yt}), the unconditional variance-covariance $\Vb(y_t)$ of $y_t$ is $\Vb(y_t)=\Lambda M_f\Lambda^\top + \Sigma_{\varepsilon}$. 
Note that \cite{bai2012} assume that the covariance $\Sigma_{\varepsilon}$ of $\varepsilon_t$ is diagonal. The time-varying covariance matrix of $y_t$ conditional on $h_t$ in model (\ref{eq:yt})-(\ref{eq:ht}) is defined by 
\begin{equation}
 H_t \equiv \Vb(y_t|h_t) =\Lambda D_t \Lambda^\top + \Sigma_{\varepsilon}.
\end{equation}}

Our procedure to obtain the forecasts of $H_t$, which will be referred as fMSV hereafter, can be broken down as \textcolor{black}{follows}:

\mds

\noindent \textbf{Stage 1(a).} 
Obtain $(\widehat{\Lambda},\widehat{\Sigma}_\varepsilon)$ via \textcolor{black}{\cite{bai2012} under IC2}.

\mds

\noindent \textbf{\textcolor{black}{Stage 1(b)}.} Given, $(\widehat{\Lambda},\widehat{\Sigma}_\varepsilon)$, \textcolor{black}{compute} the generalized least squares (GLS) estimator of $f_t$ as
\begin{equation}
\textcolor{black}{\forall t,} \; \widehat{f}_t = (\widehat{\Lambda}^\top \widehat{\Sigma}_\varepsilon^{-1} \widehat{\Lambda})^{-1} 
\widehat{\Lambda}^\top \widehat{\Sigma}_\varepsilon^{-1} y_t.
\label{eq:gls}
\end{equation}
See Section 14.7 of \cite{anderson2003introduction} for its derivation.

\mds

\noindent \textbf{Stage 2(a).}  Obtain the forecasts $\widehat{D}_{T+l}$ of $D_{T+l}$ using $\widehat{f}_t$ $(t=1,\ldots,T)$, modifying the work of \cite{poignard2023_jtsa}.

\mds

\noindent \textbf{Stage 2(b).}  Obtain the forecasts $\widehat{H}_{T+l}$ of $H_{T+l}$ as $\widehat{H}_{T+l} = \widehat{\Lambda} \widehat{D}_{T+l} \widehat{\Lambda}^\top + \widehat{\Sigma}_{\varepsilon}$ for $l=1,2,\ldots,H$.

In other words, the estimation of the fMSV model is decomposed into two stages. The first stage concerns the estimation of the factor model parameters, \textcolor{black}{denoted by $\theta_{\text{FM}}$, as described in Subsection \ref{subsec:factor_estimation}. The second stage relates to the estimation of the common-factor MSV model parameters, denoted by $\theta_{\text{MSV}}$, as detailed in Subsection \ref{subsec:msv_estimation}. Throughout this work, we assume that the number of factors $m$ is fixed and known. The value $m$ can be selected using, e.g., the method of \cite{bai2002} or \cite{onatski2010}. }
Note that \cite{poignard2023_jtsa} consider the minimum mean square linear estimator (MMSLE) to obtain the forecasts of the log-volatilities.

One point in the procedure of \cite{poignard2023_jtsa} requires a modification. 
The latter work relies on the logarithm of the squared return, $\log y_{i,t}^2$ $(i=1,\ldots,p)$, \textcolor{black}{which excludes the case $y_{i,t}=0$}.
In the context of the fMSV model, \textcolor{black}{as it allows $f_{i,t}=0$, we use $\log (f_{i,t}^2 + c_i) - c_i/(f_{i,t}^2 + c_i)$, where $c_i = 10^{-4} s_i^2$ and $s_i^2$ is the sample mean of $f_{i,t}^2$.
As discussed in Section 9.3 of \cite{fuller1996}, we may use $c_i = 0.02 s_i^2$ if we consider more robust estimation against non-normality.
However, our estimation technique does not require normal approximation for $\log f_{i,t}^2$.
Instead, we need a slight change for $\log f_{i,t}^2$ to allow the case $f_{i,t}=0$.
Our simulation and empirical results are robust to the setting $c_i = 10^{-5} s_i^2$. To perform \textbf{Stage 2}, $f_t$ will be replaced with $\widehat{f}_t$.}
The next section details the two-stage estimation method.

\section{Estimation}\label{sec:estimation_procedure}

\textcolor{black}{We denote by $\theta=(\theta^{\top}_{\text{FM}},\theta^{\top}_{\text{MSV}})^\top$ the vector of the fMSV model parameters, where $\theta_{\text{FM}}$ and $\theta_{\text{MSV}}$ correspond to the factor model parameters and stochastic volatility parameters, respectively. The estimation of $\theta_{\text{FM}}$ is detailed in Subsection \ref{subsec:factor_estimation}. Subsection \ref{subsec:msv_estimation} describes the estimation of $\theta_{\textcolor{black}{\text{MSV}}}$.}

\subsection{Estimation of the factor model}\label{subsec:factor_estimation}

\textcolor{black}{The factor model parameter vector of equation (\ref{eq:yt}) is defined by $\theta_{\text{FM}}= (\theta^\top_{\Lambda},\theta^\top_{\Sigma_{\varepsilon}})^\top$, with $\theta_{\Lambda} = \text{vec}(\Lambda) \in \Rb^{pm},\; \theta_{\Sigma_{\varepsilon}}=\text{vech}(\Sigma_{\varepsilon}) \in \Rb^{p(p+1)/2}$. It belongs to the parameter space $\Theta_{\text{FM}} := \Theta_{\Lambda}\times\Theta_{\Sigma_{\varepsilon}} \subseteq \Rb^{p m}\times \Rb^{p(p+1)/2 }$.} Estimation methods for factor models can be divided into two: likelihood-based ones (see, e.g., \cite{bai2012,bai2016,bailiao2016,poignard2020,poignard_terada_2025}) and PCA-based ones (see, e.g., \cite{stock2002a,fan2011,fan2013}). Likelihood-based methods estimate $\Lambda,\Sigma_{\varepsilon}$ \textcolor{black}{and $M_f$}, and in a second stage, estimate the factors by, e.g., \textcolor{black}{the generalized least squares method}. PCA procedures usually estimate both $\Lambda$ and $f_t,t=1,\ldots,T$, and then obtain the estimator of $\Sigma_{\varepsilon}$. 

\textcolor{black}{Throughout this paper, we will employ the maximum likelihood method under condition IC2 of \cite{bai2012} and under the diagonal assumption on $\Sigma_{\varepsilon}$, so that $\theta_{\Sigma_{\varepsilon}} \in \Rb^p$. Under IC2, $M_f:=\Eb[f_t f^\top_t] \in \Rb^{m \times m}$ is diagonal with distinct coefficients, and $p^{-1}\Lambda^\top \Sigma^{-1}_{\varepsilon}\Lambda=I_m$. 
The factor model parameters $\Lambda,M_f,\Sigma_{\varepsilon}$ are estimated by Gaussian MLE, and the corresponding asymptotic properties are provided by \cite{bai2012}. Now, following Section 8 of \cite{bai2012}, the maximization of the likelihood function is performed by the EM algorithm and under condition IC3 given by: 
$$\text{IC3:}\;\; M_f=I_m,\;\; \text{and}\;\; p^{-1}\Lambda^\top\Sigma^{-1}_{\varepsilon}\Lambda\; \text{diagonal with distinct elements (arranged in decreasing order)}.$$
This implies that the EM algorithm will provide a solution for the variance $\Vb(y_t)= \Lambda\Lambda^\top+\Sigma_{\varepsilon}$, where the corresponding solution $(\widehat{\Lambda}^{\text{IC3}},\widehat{\Sigma}^{\text{IC3}}_{\varepsilon})$ satisfies IC3. More precisely, $\widehat{\Lambda}^{\text{IC3}} = \widetilde{\Lambda}V$, $\widehat{\Sigma}^{\text{IC3}}_{\varepsilon}=\widetilde{\Sigma}_{\varepsilon}$, with $\widetilde{\Lambda}, \widetilde{\Sigma}_{\varepsilon}$ being the values obtained at the final iteration of the EM algorithm, and $V$ is the orthogonal matrix containing the eigenvectors of $p^{-1}\widetilde{\Lambda}^\top\widetilde{\Sigma}^{-1}_{\varepsilon}\widetilde{\Lambda}$ corresponding to decreasing eigenvalues. To obtain the estimators of $\Lambda,\Sigma_{\varepsilon}$ under IC2, we compute $\widehat{\Lambda}^{\text{IC2}}=\widehat{\Lambda}^{\text{IC3}}\big(p^{-1}\widehat{\Lambda}^{\text{IC3}\top}\big(\widehat{\Sigma}^{\text{IC3}}_{\varepsilon}\big)^{-1}\widehat{\Lambda}^{\text{IC3}}\big)^{-1/2}$, and $\widehat{M}^{\text{IC2}}_{f}=p^{-1}\widehat{\Lambda}^{\text{IC3}\top}\big(\widehat{\Sigma}^{\text{IC3}}_{\varepsilon}\big)^{-1}\widehat{\Lambda}^{\text{IC3}}$. Under the diagonal assumption on $\Sigma_{\varepsilon}$, we have $\widehat{\Sigma}^{\text{IC2}}_{\varepsilon}=\widehat{\Sigma}^{\text{IC3}}_{\varepsilon}=\widetilde{\Sigma}_{\varepsilon}$.
Using the estimator $\widehat{\theta}_{\text{FM}}=(\widehat{\theta}^\top_\Lambda,\widehat{\theta}^\top_{\Sigma_{\varepsilon}})^\top$ of $\theta_{\text{FM}}$ under IC2, we compute the GLS estimator $\widehat{f}_t = (\widehat{\Lambda}^\top \widehat{\Sigma}_\varepsilon^{-1} \widehat{\Lambda})^{-1} 
\widehat{\Lambda}^\top \widehat{\Sigma}_\varepsilon^{-1} y_t$ of $f_t$, for all $t$. By Theorem 6.1 of \cite{bai2012}, under large $p$, $\widehat{f}_t$ is asymptotically normally distributed.} 

\textcolor{black}{In Appendix~\ref{appendix_factor_estim}, we show the uniform consistency of $\widehat{f}_t$ under suitable moment conditions and scale conditions on $(p,T)$. This property will facilitate the proofs in the asymptotic analysis of the MSV parameter $\theta_{\text{MSV}}$ detailed in Appendix~\ref{appendix_asymptotic_prop}: it will allow us to control the estimation error resulting from the estimation of the factor variable} when plugged in the loss functions employed for the estimation of $\theta_{\text{MSV}}$. 

\textcolor{black}{As for the choice of the number of factors $m$, one may rely on} \cite{onatski2010}'s method based on an eigenvalue threshold estimator using the sample variance-covariance matrix of $y_t$. To efficiently reduce dimensionality, it is generally preferable to have a small number of factors, typically \textcolor{black}{$m\leq 5$}. As noted in \cite{deNard2021}, there is no consensual selection procedure of the number of factors, making it important to assess the sensitivity of the factor-model-based MSV performance with respect to $m$. Therefore, in our experiments, we set $m \in \{1,2,3\textcolor{black}{, 4, 5}\}$.

\subsection{Estimation of the MSV parameters}\label{subsec:msv_estimation}

\textcolor{black}{We now consider the estimation of $\theta_{\text{MSV}}$, which can be partitioned into two subvectors $\theta_1$ and $\theta_2$ corresponding to the two-stage procedure detailed hereafter in \textbf{Step 1} and \textbf{Step 2}. }

\medskip


\medskip

By transforming $f_t$ to $f_t^\ell =(\log f_{1,t}^2,\ldots,\log f_{m,t}^2)^\top$, we obtain the state space form:
\begin{equation}
\textcolor{black}{\forall t,}\; f_t^\ell = \nu + \alpha_t + \xi_t, \quad
\alpha_{t+1} = \Phi \alpha_t + \eta_t, \label{eq:ssf}
\end{equation}
where $\nu=(\nu_1,\ldots,\nu_m)^\top$, $\xi_t = (\xi_{1,t},\ldots,\xi_{m,t})^\top$, 
and $\alpha_t = h_t - \mu$ with $\nu_i = \mu_i + \Eb[\log \zeta_{i,t}^2]$ and $\xi_{i,t} = \log \zeta_{i,t}^2 - \Eb[\log \zeta_{i,t}^2]$.
We may apply the Kalman filter to estimate the \textcolor{black}{parameters} in (\ref{eq:ssf}), as suggested by Harvey et al. (1994).
\textcolor{black}{However, since we assume that $\Phi$ is non-diagonal, estimation by the Kalman filter can be a challenging task, even in a low-dimensional setting.
Therefore}, in the same spirit as in \cite{poignard2023_jtsa}, we estimate $\theta_{\text{MSV}}$ based on the following steps:
\mds

\noindent \textbf{Step 1.} 
Since $f_t^\ell$ is the sum of a VAR(1) process, $\alpha_t$, and an i.i.d. error $\zeta_t$, case (i) in Section 3 of \cite{granger1976} suggests that $f_t^\ell$ has a VARMA(1,1) representation, which leads a VAR($\infty$) model that will be employed to approximate the unobserved error. More precisely, \textcolor{black}{consider $x_t = (\log (f_{1,t}^2 + c_1) - c_1/(f_{1,t}^2 + c_1),\ldots, \log (f_{m,t}^2 + c_m) - c_m/(f_{m,t}^2 + c_m))^\top$, as detailed in the previous section, in order to obtain the state space form,
\begin{equation}
x_t = \nu^* + \alpha_t + \xi_t^*, \quad
\alpha_{t+1} = \Phi \alpha_t + \eta_t, \label{eq:ssf2}
\end{equation}
where $\nu^*=(\nu_1^*,\ldots,\nu_m^*)^\top$, $\xi_t^* = (\xi_{1,t}^*,\ldots,\xi_{m,t}^*)^\top$,
\begin{align*}
\nu_i^* &= \mu_i + \Eb\left[\log (\zeta_{i,t}^2 + c_i e^{-h_{i,t}}) 
- (\zeta_t^2 + c_i e^{-h_{i,t}})^{-1} c_i e^{-h_{i,t}} \right], \\
\xi_{i,t}^* &= \log (\zeta_{i,t}^2 + c_i e^{-h_{i,t}}) 
- (\zeta_{it}^2 + c_i e^{-h_{i,t}})^{-1} c_i e^{-h_{i,t}} - \nu_i^* + \mu_i.
\end{align*}
Since we set $c_i = 10^{-4}s_i^2$, where $s_i^2$ is the sample mean of $f_{it}^2$, $c_i e^{-h_{i,t}}$ is a negligible quantity compared to $\zeta_{i,t}^2$. Hence we treat $\xi_t^*$ as a white noise process with mean zero and covariance matrix $\Sigma_{\xi} = \Eb[\xi_t \xi_t^\top]$, that is $\xi_t^* \sim WN(0,\Sigma_{\xi})$. It holds when $c_i=0$.
As detailed in Appendix \ref{appendix_varma}, we have the VARMA(1,1) representation for $x_t$ given by
\begin{equation}
x_t = \nu^* + \Phi(x_{t-1} -\nu^*) + u_t - \Upsilon u_{t-1}, \quad u_t \sim {WN}(0,\Sigma_u), 
\label{eq:varma}
\end{equation}
where $\Upsilon$ and $\Sigma_u$ are defined by Appendix \ref{appendix_varma}.}
\textcolor{black}{From the representation, we deduce the VAR($\infty$) form of $x_t$ that we approximate by the VAR($q$) model} $x_t = \psi^* +\sum_{k=1}^q \Psi_k x_{t-k} + u^{(q)}_t$, with $u^{(q)}_t = u_t +\sum_{k>q}\Psi_k x_{t-k}$. 
The term $\sum_{k>q}\Psi_kx_{t-k}$ \textcolor{black}{corresponds to the remainder term resulting} from the truncation of the VAR($\infty$). 
Under suitable parameter decay conditions, it can be shown that $\Eb[\|u^{(q)}_t-u_t\|^r_2]$ is negligible when $q$ is large enough, assuming the existence of the $r$-th moment of $u_t$, as in \cite{chang2002} for univariate processes and \cite{chang2006} for multivariate processes. 
In our framework, the number of factors $m$ is fixed, \textcolor{black}{$q$ may potentially diverge with the sample.} We apply a \textcolor{black}{penalty function} to suitably discard the irrelevant past variables. It aims to estimate the sparse support and identify an optimal $q$ \textcolor{black}{lag} from which the past variables can be considered as negligible. 
\textcolor{black}{Define $\theta_1 := \text{vec}(\underline{\Psi}) \in \Theta_{1} \subseteq \Rb^{d_1}$, $d_1:=m+qm^2$, with $\underline{\Psi} =(\psi^*,\Psi_1,\cdots,\Psi_q)$ the parameters of the VAR($q$) model and $\Theta_1$ the parameter space of $\theta_1$. Define} the non-penalized loss $\Lb_T: \Rb^{m(q+1)T} \times \Theta_{1} \rightarrow \Rb$. \textcolor{black}{Let $\mathbf{F} = (F^\top_1,\ldots,F^\top_T)^\top \in \Rb^{m(q+1)T}$, $F_t = (f^\top_t,\ldots,f^\top_{t-q})^\top \in \Rb^{m(q+1)}$ and let $f_0,\ldots,f_{1-q}$ be initial values (chosen equal to zero). \textcolor{black}{Define $Z_{q,t-1}=(1,x^\top_{t-1},\cdots,x^\top_{t-q})^\top\in \Rb^{1+mq}$, $x_{j,t} = \log (f_{j,t}^2 + c_j) - c_j/(f_{j,t}^2 + c_j)$ for $j=1,\ldots,m$.} When replacing $f_t$ by its estimator $\widehat{f}_t$, \textcolor{black}{the variables} $\mathbf{F}, F_t, \textcolor{black}{Z_{q,t-1},x_t}$ are denoted by $\widehat{\mathbf{F}}$, $\widehat{F}_t,\textcolor{black}{\widehat{Z}_{q,t-1},\widehat{x}_t}$.} The loss $\Lb_T(\widehat{\mathbf{F}};\theta)$ is associated to a continuous function $\ell : \Rb^{m(q+1)} \times \Theta_{1} \rightarrow \Rb$ that can be written as
\begin{equation*}
\Lb_T(\widehat{\mathbf{F}};\theta_1) := \overset{T}{\underset{t=1}{\sum}} \frac{1}{2}\|\textcolor{black}{\widehat{x}}_t-\psi^*-\overset{q}{\underset{k=1}{\sum}}\Psi_k\textcolor{black}{\widehat{x}}_{t-k}\|^2_2 = \overset{T}{\underset{t=1}{\sum}} \frac{1}{2}\|\textcolor{black}{\widehat{x}}_t-\underline{\Psi} \textcolor{black}{\widehat{Z}}_{q,t-1}\|^2_2 =: \overset{T}{\underset{t=1}{\sum}} \ell(\widehat{F}_{t};\theta_1).
\end{equation*}
The true parameter is denoted by $\theta_{01}$ and is assumed sparse with true support $s := \text{card}(\Sc)$, with $\Sc:=\{k=1,\cdots,d_1: \theta_{01,k}\neq 0\}$. To estimate the latter support, we rely on the penalized M-estimation criterion given by
\begin{equation} \label{obj_crit_first_step}
\widehat{\theta}_1 = \underset{\theta_1\in \Theta_{1}}{\arg \; \min} \;  \Lb_T(\widehat{\mathbf{F}};\theta_1) + T\lambda_T \sum^{\textcolor{black}{d_1}}_{k=1}\tau(\widetilde{\theta}_{1,k}) |\theta_{1,k}|.
\end{equation}
The penalty function \textcolor{black}{is} the convex adaptive LASSO penalty of \cite{zou2006adaptive}, where $\lambda_T$ is the tuning parameter and $\tau(\widetilde{\theta}_{1,k})$ is a random weight defined by $\tau(\widetilde{\theta}_{1,k})=|\widetilde{\theta}_{1,k}|^{-\gamma}$, with $\widetilde{\theta}_{1,k}$ a consistent first step estimator. It is defined as the solution:
\begin{equation} \label{obj_crit_first_estimator}
\widetilde{\theta}_1 = \underset{\theta_1 \in \Theta_{1}}{\arg \; \min} \; \Lb_T(\widehat{\mathbf{F}};\theta_{\color{black}{1}}).
\end{equation}
Theorems \ref{bound_proba_first_step_estimator} and \ref{bound_prob_asym} in Appendix \ref{appendix_asymptotic_prop} establish the consistency of $\widetilde{\theta}_1$ and $\widehat{\theta}_1$, respectively. Theorem \ref{sparsistency} provides the \textcolor{black}{``sparsistency'' property of $\widehat{\theta}_1$, that is the recovery of the true zero entries with probability tending to one}: see \cite{lam2009} for further details on sparsistency. 

In our experiments, the selection of an optimal $\lambda_T$ is performed by \textcolor{black}{a cross-validation method that accounts for the dependent nature of the data. We will set $q=10$ when solving (\ref{obj_crit_first_estimator}) and (\ref{obj_crit_first_step}). In the latter problem, we set $\gamma=1$. All the implementation details are provided in Appendix \ref{appendix_sec:implementation_MSV}}.

\noindent\textbf{Step 2.} Define the approximated error $\widehat{u}^{(q)}_t = x_t - \widehat{\psi}^*-\sum^q_{k=1}\widehat{\Psi}_k x_{t-k}$ obtained in \textbf{Step 1}. Given $\widehat{\theta}_1$, we consider the regression
\[
x_t = c^* + \Phi x_{t-1}+ \Xi \widehat{u}^{(q)}_{t-1} + v_t,
\]
where the vector of parameters is $\theta_2:=(c^{*\top},\theta^\top_{\Phi},\theta^\top_{\Xi})^\top \in \Theta_2 \subseteq \Rb^{d_2}$, $d_2 := m(1+2m)$, and since we replace $u_{t-1}$ by $\widehat{u}^{(q)}_{t-1}$, $(v_t)$ is the error term for this auxiliary regression. 

\textcolor{black}{The estimation of $\theta_2$} relies on the second step loss $\Gb_T : \Rb^{m(q+1)T} \times \Theta_1 \times \Theta_2 \rightarrow \Rb$, where $\Gb_T(\widehat{\mathbf{F}};\widehat{\theta}_1;\theta_2)$ is the loss associated to a continuous function $g: \Rb^{m(q+1)} \times \Theta_1 \times \Theta_2 \rightarrow \Rb$. \textcolor{black}{As in \textbf{Step 1}, we replace the latent $x_t$ with $\widehat{x}_t$, so that the second step loss is}
\begin{equation*}
\Gb_T(\widehat{\mathbf{F}};\widehat{\theta}_1;\theta_2) =  \overset{T}{\underset{t=1}{\sum}}\frac{1}{2}\|\textcolor{black}{\widehat{x}}_t-c^*-\Phi \textcolor{black}{\widehat{x}}_{\textcolor{black}{t-1}}-\Xi \textcolor{black}{\widehat{\overline{u}}}^{(q)}_{t-1}\|^2_2  := \overset{T}{\underset{t=1}{\sum}} \frac{1}{2}\|\textcolor{black}{\widehat{x}}_t-\Gamma \widehat{K}_{t-1}(\widehat{\theta}_1)\|^2_2=:
\overset{T}{\underset{t=1}{\sum}} g(\widehat{F}_{t};\widehat{\theta}_1;\theta_2),
\end{equation*}
where $\widehat{K}_{t-1}(\widehat{\theta}_1) = (1,\textcolor{black}{\widehat{x}}^\top_{\textcolor{black}{t-1}},\textcolor{black}{\widehat{\overline{u}}}^{(q)\top}_{t-1})^\top \in \Rb^{1+2m}$ and $\Gamma = (c^*,\Phi,\Xi) \in \Rb^{m\times (1+2m)}$. The dependence on the first step estimator is through \textcolor{black}{$\widehat{\overline{u}}^{(q)}_t=\widehat{x}_t - \widehat{\psi}^*-\sum^q_{k=1}\widehat{\Psi}_k \widehat{x}_{t-k}$}. The problem of interest is
\begin{equation}\label{obj_crit_second}
\widehat{\theta}_2 = \underset{\theta_2\in\Theta_2}{\arg\;\min}\;\Gb_T(\widehat{\mathbf{F}};\widehat{\theta}_1;\theta_2).
\end{equation}
Theorem \ref{bound_proba_second_step_estimator} in Appendix \ref{appendix_asymptotic_prop} establishes the consistency of $\widehat{\theta}_2$.

\noindent\textbf{Step 3.} From the decomposition of the unconditional variance-covariance of $x_t$ given by
\begin{equation}
\Sigma_x = \Sigma_{\alpha} + \textcolor{black}{\Sigma_{\xi}}, 
\label{eq:deco}
\end{equation}
where $\Sigma_x = \Eb[(x_t-\gamma)(x_t-\gamma)^\top]$, $\Sigma_{\xi} = \Eb[\xi_t \xi_t^\top]$, and $\Sigma_{\alpha} = \Eb[\alpha_t \alpha_t^\top]$, the estimators of $\Sigma_{\zeta},\Sigma_{\alpha}$ are deduced as
\begin{equation}\label{adhoc_estim}
\widehat{\Sigma}_{\textcolor{black}{\xi}} = r S_x, \quad \widehat{\Sigma}_{\alpha} = (1-r) S_x,
\end{equation}
where $0 < r < 1$ and $S_x$ is the sample covariance matrix of \textcolor{black}{$\widehat{x}_t$}. This ad hoc method aims to treat the positive-definiteness of the estimators and to deal with the high-dimensional issue of $\widehat{\Sigma}_{\textcolor{black}{\xi}}$.
While we consider a naive decomposition based on equation (\ref{eq:deco}) for the former, we set $r = (\pi^2/2)(m^{-1} \tr (S_{x}))^{-1}$ in (\ref{adhoc_estim}).
Here, $ \pi^2/2$ is the value of $\Eb[\textcolor{black}{\xi}_{i,t}^2]$ when \textcolor{black}{$\zeta_{i,t}$} follows the standard normal distribution. 
The ad hoc estimators yield $\tr (\widehat{\Sigma}_{\textcolor{black}{\xi}}) = r \tr(S_x) = m \pi^2/2$ and $\tr (\widehat{\Sigma}_{\alpha}) =  \tr(S_x) - m \pi^2/2$.
Using this approach, we are able to estimate $\tr({\Sigma}_{\textcolor{black}{\xi}})$ and $\tr({\Sigma}_{\alpha})$ with accuracy and consistency, respectively.
More importantly, the computational cost is negligible, compared with alternative estimators (e.g., the GMM type method) that would require a numerical optimization with constraints on the positive-definiteness of $S_{\xi}$ and $S_x-S_{\xi}$, where $S_{\xi}$ is an alternative estimator. 

\textcolor{black}{We dot not impose the restriction $\rho(\Phi)<1$ apriori.
For economic and financial data, we need to control the case where several eigenvalues
of $\widehat{\Phi}$ exceed 1, using the random walk model.
By replacing such eigenvalues of $\widehat{\Phi}$ by one’s, and setting corresponding values of $\widehat{c}$ to be zero, we may construct a state space model for $x_t$ to obtain out-of-sample and in-sample forecasts by the Kalman filter and smoother, respectively.}

\textcolor{black}{
In Appendix \ref{comp_cost}, we compare the computational efficiency of our two-stage procedure with the MCMC approach proposed by \cite{kastner2017}. The empirical results show that our method is computationally more efficient than the MCMC benchmark.
}


\section{Simulations}\label{sec:simulations}

This section presents numerical experiments on the in-sample comparison between a true $p \times p$ dynamic variance-covariance $(H_t)$ and the conditional forecast $(\widehat{H}_t)$ obtained from a specific model. 

\subsection{Data generating processes}

The simulated $p$-dimensional observations $(y_t)$ satisfy:
\begin{equation*}
\forall t =1,\ldots,T,\; y_t = H^{1/2}_t \eta_t, \; \text{with} \; \forall k=1,\ldots,p, \eta_{k,t} = \sqrt{\gamma-2}\,\nu_{k,t}/\sqrt{\delta}, \; \text{where} \; \nu_{k,t} \sim t(\delta),
\end{equation*}
where $t(\gamma)$ is a centered Student distribution with $\delta$ degrees of freedom, $\delta >2$. Hereafter, $\delta=3$. The true variance-covariance $(H_t)$ is based on two data generating processes (DGPs):
\begin{itemize}
    \item[(i)] \textbf{DGP 1}: $\forall t=1,\ldots,T, \; H_t = \Gamma + A {\color{black}{y_{t-1}y_{t-1}^\top}} A^\top + B H_{t-1} B^\top$, 
    with $\Gamma$ positive-definite and $A, B$ diagonal $p \times p$ matrices. This is a diagonal-BEKK-based process. Denoting by $\Uc(a,b)$ the uniform distribution on $[a,b]$, $\Gamma$ is generated as follows: we simulate a matrix $K \in \Rb^{p\times p}$ with entries in the uniform distribution \textcolor{black}{$\Uc(-0.2,0.2)$}, set $\Gamma = K\,K^\top/p$ and then replace its diagonal elements by coefficients simulated in $\Uc(0.005,0.025)$; to ensure that the resulting matrix is positive-definite, if the simulated $\Gamma$ satisfies $\lambda_{\min}(\Gamma)<0.01$, we apply $\Gamma = \Gamma + (\zeta+|\lambda_{\min}(\Gamma)|)I_p$, where $\zeta$ is the first value in $\{0.005,0.01,0.015,\ldots\}$ such that $\lambda_{\min}(\Gamma)>0.01$. The diagonal elements of $A$ (resp $B$) are drawn in \textcolor{black}{$\Uc(0.1,0.4)$} (resp. \textcolor{black}{$\Uc(0.5,0.8)$}) under the stationarity condition $\max_{1\leq k \leq p}(A^2_{kk}+B^2_{kk})<1$, following Proposition 2.7 of \cite{engle1995}. 
    \item[(ii)] \textbf{DGP 2}: $\forall t=1,\ldots,T, \; H_t = \Gamma + \overset{m^*}{\underset{j=1}{\sum}} \lambda_{j,t} \beta_j\beta^\top_j$,
    where $\lambda_{j,t}$ is the conditional variance of the $j$-th factor \textcolor{black}{denoted here by $r_{j,t}$}, $m^*$ is the true number of factors set as $m^*=2$ and $\Gamma$ is a positive-definite non-diagonal matrix. This is a factor GARCH-based process. The $m^*$ conditional variances $\lambda_{jt},j=1,2$ follow a GARCH(1,1) univariate processes, where the factors \textcolor{black}{$r_{j,t}$} are drawn in $\Nc_{\Rb}(0,\lambda_{jt})$, and the elements of $\Gamma$ are generated as in DGP 1. To generate the $m^*$ univariate GARCH(1,1) processes $\sigma^2_{j,t} = \varsigma_j + \kappa_j \textcolor{black}{r}^2_{j,t-1} + \tau_j \sigma^2_{j,t-1}$, we choose randomly the corresponding $3m^*$ parameters, where we simulate $\varsigma_j \sim \Uc(0.005,0.01)$, $\kappa_j \sim \Uc(0.05,0.15)$ and $\tau_j \sim \Uc(0.7,0.9)$, under the stationarity constraint $\kappa_j + \tau_j < 1$ for any $j$. Finally, the $p$-dimensional vectors $\beta_j$ are generated in \textcolor{black}{$\Uc(-1,1)$}. More details on Factor GARCH models can be found in Section 2.1 of \cite{bauwens2006}.
\end{itemize}
We set the sample size $T=2000$ and dimension \textcolor{black}{$p=20, 100, 500$} and we consider $100$ independent batches for $(y_t)$. Once a series is simulated, we estimate the model using the full sample and compare this estimator with the true variance-covariance to measure its accuracy.

\subsection{Measures for statistical accuracy}\label{sec:loss_accuracy}

\textcolor{black}{We measure the in-sample statistical accuracy of the variance-covariance models based on the following matrix distances proposed by \cite{laurent2012}:
\begin{equation}
\begin{split}
\text{D}_E(H_t,\widehat{H}_t) &= \text{vech}(H_t-\widehat{H}_t)^\top\text{vech}(H_t-\widehat{H}_t), \\
\text{D}_F(H_t,\widehat{H}_t) &= \text{trace}((H_t-\widehat{H}_t)^\top(H_t-\widehat{H}_t)),\\
\text{D}_S(H_t,\widehat{H}_t) &= \text{trace}(\widehat{H}^{-1}_tH_t)-\log(|\widehat{H}^{-1}_tH_t|)-p,\\
\text{D}_b(H_t,\widehat{H}_t) &= \frac{1}{b(b-1)}\text{trace}(H^b_t-\widehat{H}^b_t) - \frac{1}{b-1}\text{trace}(\widehat{H}^{b-1}_t(H_t-\widehat{H}_t)), \, b \geq 3,
\end{split}
\label{ranking_metrics}
\end{equation}
where $H_t$ is the true variance-covariance and $\widehat{H}_t$ is the estimated variance-covariance matrix at time $t$.
$\text{D}_E$ and $\text{D}_F$ are the Euclidean distance, the squared Frobenius norm, respectively, and are closely related. $\text{D}_E$ evaluates how close the individual covariance entries are, whereas $\text{D}_F$ is a matrix-based MSE distance. $\text{D}_S$ is the Stein loss: it is the scale-invariant Gaussian likelihood, which is asymmetric with respect to over and/or under predictions, where under predictions are significantly penalized. In the same spirit, the measure $\text{D}_b$ proposed by \cite{laurent2012} is also asymmetric with respect to over and/or under predictions, but significantly penalizes over predictions. The coefficient of asymmetry $b$ is set as $b=3$.}

\subsection{Competing models}\label{sec:competing_models}

Equipped with these DGPs, we generate the observations $y_t,t=1,\ldots,T$ and estimate the variance-covariance using the models:
\begin{itemize}
    \item[(i)] \textbf{DCC}: the scalar DCC of \cite{engle2002} with GARCH(1,1) dynamics for the univariate conditional marginals. The estimation is carried out by a standard two-step Gaussian QML, following, e.g., Section 3.2 of \cite{bauwens2006}, with full likelihood when $p\leq 100$ and composite likelihood when $p>100$. More details on the DCC and its estimation, in particular the composite likelihood method, are provided in Appendix~\ref{dcc_model}.
    \item[(ii)] \textbf{sBEKK}:  the scalar BEKK model with variance-targeting. The estimation is carried out by Gaussian QML. Full likelihood and composite likelihood are employed as in the estimation of the scalar DCC model. More details on the scalar BEKK are provided in Appendix~\ref{bekk_process}.
    \item[(iii)] \textbf{$\text{fGARCH}_m$}: the factor GARCH model of \cite{alexander2002} with $m$ factors. The variance-covariance matrix process $(H_t)$ is defined as $H_t = \Lambda F_t \Lambda^\top + \Sigma_{\varepsilon}, t =1,\ldots,T$, where $\Lambda \in \Rb^{p\times m}$ is the matrix of factor loadings, $\Sigma_{\varepsilon} \in \Rb^{p \times p}$ is diagonal, and $F_t$ is a diagonal matrix with diagonal elements corresponding to the GARCH(1,1) variances of the estimated factors $\widehat{f}_t$. The factor loadings and $\Sigma_{\varepsilon}$ are obtained by PCA. More precisely, the estimator of the factor loading matrix $\widehat{\Lambda}$ is equal to the eigenvectors of $\mathbf{Y}^\top\mathbf{Y}/T$, corresponding to its $m$ largest eigenvalues, with $\mathbf{Y}\in \Rb^{T \times p}$ the data matrix with $t$-th row $y^\top_t$. This allows to filter the factors $\widehat{\mathbf{F}} = \mathbf{Y}^\top\widehat{\Lambda}/p$, with $\widehat{\mathbf{F}} \in \Rb^{T \times m}$ the matrix of the estimated factors with $t$-th row $\widehat{f}^\top_t$. Equipped with these estimators, define $\widehat{\varepsilon}_t = y_t - \widehat{\Lambda}\widehat{f}_t$. This is a method employed in, e.g., \cite{stock2002a}. The estimator $\widehat{\Sigma}_{\varepsilon}$ of the variance-covariance of the idiosyncratic error variables is given by $\widehat{\Sigma}_{\varepsilon} = \text{diag}(\sum^T_{t=1}\widehat{\varepsilon}_t\widehat{\varepsilon}^\top_t/T)$.
    \item[(iv)] \textbf{$\text{fMSV}_m$}: the fMSV model with $m$ factors, where $(\Lambda,\Sigma_{\varepsilon})$ are estimated under \textcolor{black}{the condition for identification IC2 of \cite{bai2012}}. 
\end{itemize}
\textcolor{black}{We set $m=1,2,3, \textcolor{black}{4,5},$ for $\text{fGARCH}_m$, $\text{fMSV}_m$ to assess the sensitivity of the method with respect to the factor dimension. }

\subsection{Results}

\textcolor{black}{Tables \ref{simulations_results_DGP1}–\ref{simulations_results_DGP2} report the in-sample accuracy measures $\text{D}_E,\text{D}_F,\text{D}_S,\text{D}_3$, averaged over 100 independent simulation batches, for DGP 1 and DGP 2, respectively. These matrix distances emphasize different aspects of covariance forecast accuracy and therefore highlight distinct strengths and weaknesses of the competing models.
The distances $\text{D}_E$ and $\text{D}_F$ measure entry-wise covariance accuracy. Under DGP 1, the sBEKK model performs best according to these criteria, indicating an excellent fit of individual covariance entries. The proposed fMSV model becomes increasingly competitive as the dimension grows. However, the sBEKK model exhibits substantial distortions in the inverse covariance structure, as emphasized by the Stein loss $\text{D}_S$, for which the DCC model (resp. factor-based models) achieves the best performance when $p \leq 100$ (resp. $p=500$). In terms of the asymmetric $\text{D}_3$, both sBEKK and fMSV avoid systematic volatility overestimation, whereas the fGARCH specifications tend to inflate dominant eigenvalues.
Under DGP 2 (Table \ref{simulations_results_DGP2}), the sBEKK model becomes severely misspecified across all metrics. In contrast, factor-based models such as fMSV and fGARCH provide a substantially better fit with the true variance-covariance, particularly when $m \geq 2$. In terms of $\text{D}_S$, both fGARCH and fMSV models with $m \geq 2$ deliver the best performances. Moreover, the $m=2$ factor specifications yield the smallest values of $\text{D}_3$, indicating superior control of volatility overprediction. Interestingly, when $m=1$, both fGARCH and fMSV perform poorly across all criteria, underscoring the importance of selecting a sufficiently rich factor structure to capture the key features of the data.
Overall, the fMSV model exhibits the most balanced performance across the different accuracy measures and DGPs.}

\begin{table}[h]
\caption{\textcolor{black}{In-sample accuracy based on $\text{D}_E, \text{D}_F,\text{D}_S,\text{D}_3$ measures with $T=2000$, averaged over $100$ batches, under DGP 1.}}\label{simulations_results_DGP1}
\scalebox{0.85}{{\color{black}\begin{tabular}{c|cccc|cccc|cccc}\hline\hline
&  &  & & & & & & & & & & \\
& \multicolumn{4}{c|}{\textbf{$p=20$}} & \multicolumn{4}{c|}{\textbf{$p=100$}} & \multicolumn{4}{c}{\textbf{$p=500$}} \\
& $\text{D}_E$ & $\text{D}_F$ & 
$\text{D}_S$ & $\text{D}_3$ & 
$\text{D}_E$ & $\text{D}_F$ & 
$\text{D}_S$ & $\text{D}_3$ &
$\text{D}_E$ & $\text{D}_F$ & 
$\text{D}_S$ & $\text{D}_3$\\              
                     
\hline 
&  &  & & & & & & & & & &   \\
                      
\textbf{DCC} & 2.76  &2.78 &  \textbf{0.69} & 230.82 & 11.81 & 12.01 & \textbf{7.10} & 973.43 & 94.39   & 98.41 & 152.09   & 14010.31   \\

&  &  & & & & & & & & & &   \\

\textbf{sBEKK} &  \textbf{0.07}& \textbf{0.09}  &  2.93&   \textbf{0.30} & \textbf{0.69} &\textbf{0.92} &45.79  & \textbf{4.75} & \textbf{6.70}  & \textbf{10.69}   & 67019.10    & \textbf{70.50}  \\

&  &  & & & & & & & & & &   \\

$\textbf{fGARCH}_1$ & 2.64 &  3.00  & 2.27& 319.9  & 10.02& 12.13 & 14.46   & 1342.87 & 124.25   &  153.35   & 87.65   & 49339.29   \\

&  &  & & & & & & & & & &   \\

$\textbf{fGARCH}_2$ &  2.96 & 3.36 & 2.30  & 352.28 & 10.97& 13.48&  14.47  & 1434.83 & 160.42   &   199.03    &  87.34    & 60142.11   \\

&  &  & & & & & & & & & &   \\

$\textbf{fGARCH}_3$ &  3.02 & 3.46 &  2.44&356.64 & 11.03& 13.56 & 14.53  &  1436.71 & 166.27  & 207.7   &  \textbf{87.25}   & 60856.09   \\

&  &  & & & & & & & & & &  \\

$\textbf{fGARCH}_4$ &  3.02  &3.46 &2.60 & 356.63 &  11.11 &13.71 &  14.62 & 1441.78 & 168.68   & 211.45   & 87.27   & 61097.31   \\

&  &  & & & & & & & & & &  \\

$\textbf{fGARCH}_5$ & 3.02 & 3.45&2.77&356.63 & 11.13 & 13.73&  14.77&  1442.01 & 171.15  & 215.22 & 87.33 & 61444.25  \\

&  &  & & & & & & & & & &  \\

\textbf{$\text{fMSV}_1$} & 0.18 & 0.21 & 2.13 &  0.45 & 1.64&  2.12 & 14.48 &  9.91 & 33.83   & 46.05  &  87.94  & 2028.46  \\

&  &  & & & & & & & & & &  \\

\textbf{$\text{fMSV}_2$} &  0.18 & 0.21 &2.04   &  0.46 & 1.73 & 2.25  & 14.41& 11.98 & 28.14  & 40.36   & 87.77  &978.79   \\

&  &  & & & & & & & & & & \\

\textbf{$\text{fMSV}_3$} &  0.18 &0.21 & 1.96 & 0.46 & 1.91 & 2.54& 14.36 &15.86 & 27.36   & 40.48 &  87.65  & 2395.68  \\

&  &  & & & & & & & & & &   \\

\textbf{$\text{fMSV}_4$} &  0.18 & 0.21  & 1.92& 0.45 & 1.71&  2.21 & 14.36   & 9.98 & 34.03  & 50.59  & 87.64  &2547.49  \\

&  &  & & & & & & & & & &   \\

\textbf{$\text{fMSV}_5$} &   0.18 & 0.21  &  1.93 & 0.46 & 1.70 & 2.21& 14.32 &  9.81    & 19.09  &30.11   &87.58  &    580.86  \\

&  &  & & & & & & & & & & \\

\hline\hline
\end{tabular}}
}
\end{table}

\begin{table}[h]
\caption{\textcolor{black}{In-sample accuracy based on $\text{D}_E, \text{D}_F,\text{D}_S,\text{D}_3$ measures with $T=2000$, averaged over $100$ batches, under DGP 2.}}\label{simulations_results_DGP2}
\scalebox{0.8}{{\color{black}\begin{tabular}{c|cccc|cccc|cccc}\hline\hline
&  &  & & & & & & & & & & \\
& \multicolumn{4}{c|}{\textbf{$p=20$}} & \multicolumn{4}{c|}{\textbf{$p=100$}} & \multicolumn{4}{c}{\textbf{$p=500$}} \\
& $\text{D}_E$ & $\text{D}_F$ & 
$\text{D}_S$ & $\text{D}_3$ & 
$\text{D}_E$ & $\text{D}_F$ & 
$\text{D}_S$ & $\text{D}_3$ &
$\text{D}_E$ & $\text{D}_F$ & 
$\text{D}_S$ & $\text{D}_3$\\

\hline 
&  &  & & & & & & & & & &   \\
       
\textbf{DCC} &  1.89 &  2.56 &  \textbf{0.58} &  81.00 & 16.62&29.24&\textbf{5.29}&939.42& 227.91&397.9&124.22&58605.73  \\

&  &  & & & & & & & & & &   \\

\textbf{sBEKK} &  9.22  &  12.93 &  139.68 & 30.68 & 242.04&327.62&1296.04&2809.14 & 4169.92&5539.27&623749.21&138910.59  \\

&  &  & & & & & & & & & &   \\

$\textbf{fGARCH}_1$ &  1.54 & 2.96 &  12.92 & 11.62 & 44.47&88.43&87.44&1160.98 &  943.15&1884.89&463.67&30114.81  \\

&  &  & & & & & & & & & &   \\

$\textbf{fGARCH}_2$ &  0.86 & 1.55   &  1.38   &  22.78 & 13.16&25.82&6.76&1068.04 &\textbf{153.57}&\textbf{305.7}&34.17&\textbf{19875.41}   \\

&  &  & & & & & & & & & &   \\

$\textbf{fGARCH}_3$ & 1.2  &   1.92 &  1.45  & 55.93 & \textbf{14.80}&\textbf{27.66}&6.73&1259.91 & 180.08&334.16&\textbf{34.12}&33096.21  \\

&  &  & & & & & & & & & &  \\

$\textbf{fGARCH}_4$ & 1.21 & 1.93 & 1.62 &  55.95 & 14.89&27.77&6.72&1262.62 & 182.08&336.53&34.16&33352.24   \\

&  &  & & & & & & & & & &  \\

$\textbf{fGARCH}_5$ & 1.21& 1.93 & 1.86  & 55.95 & 14.89&27.78&6.76&1262.62 & 182.59&337.21&34.24&33390.95   \\

&  &  & & & & & & & & & &  \\

\textbf{$\text{fMSV}_1$} & 1.64 & 3.16     &12.59 &  5.79 & 51.21&101.75&85.52&921.33 & 1002.26&2002.70&462.23&40620.36   \\

&  &  & & & & & & & & & &  \\

\textbf{$\text{fMSV}_2$} &  0.78 & 1.43   & 1.21 &  5.27 & 17.39 & 34.12&6.70&\textbf{807.57} & 195.65&389.56&34.13&23826.70  \\

&  &  & & & & & & & & & & \\

\textbf{$\text{fMSV}_3$} &  \textbf{0.73}&  \textbf{1.33}  &  1.04 & \textbf{4.46} & 18.55 & 36.39 &    6.62     &  860.68 & 208.37&414.88&34.16&27022.14  \\

&  &  & & & & & & & & & &   \\

\textbf{$\text{fMSV}_4$} &  0.80& 1.46& 0.96 & 5.18 & 17.47  &34.27  &  6.52 & 809.12 & 228.18&453.94&34.22&27575.16   \\

&  &  & & & & & & & & & &   \\

\textbf{$\text{fMSV}_5$} &  0.80 & 1.46 &  0.93 & 5.08 & 37.76 &  73.89 &6.46  & 13831.90 & 251.83&501.39&34.40&33981.85  \\

&  &  & & & & & & & & & & \\

\hline\hline
\end{tabular}}
}
\end{table}

\section{Out-of-sample analysis with real data}\label{sec:real_data}

In this section, the relevance of our method is compared with existing competing models through out-of-sample forecasts based on real financial data. The performances will be assessed through measures of statistical accuracy and economic performances. The accuracy of the forecasts will be ranked by the model confidence set (MCS) procedure of \cite{hansen2011}, which provides a testing framework for the null hypothesis of forecast equivalence across subsets of models.

\subsection{Data}

We consider hereafter the stochastic process $(y_t)_{t\in\Zb}$ in $\Rb^p$ of the log-stock returns, where $y_{j,t} = 100\times\log(P_{j,t}/P_{j,t-1}), 1 \leq j \leq p$ with $P_{j,t}$ the stock price of the $j$-th index at time $t$. We study three financial portfolios: a low-dimensional portfolio of daily log-returns composed of the MSCI stock index based on the sample December 1998 -- March 2018, which yields a sample size $T = 5006$, and for 23 countries\footnote{Australia, Austria, Belgium, Canada, Denmark, Finland, France, Germany, Greece, Hong Kong, Ireland, Italy, Japan, Netherlands, New Zealand, Norway, Portugal, Singapore, Spain, Sweden, Switzerland, the United-Kingdom, the United-States}; a mid-dimensional portfolio of daily log-returns listed in the S\&P 100, where we selected $94$ firms\footnote{S\&P 100 indices excluding  AbbVie Inc., Dow Inc., General Motors, Kraft Heinz, Kinder Morgan and PayPal Holdings} that have been continuously included in the index over the period February 2010 -- January 2020, i.e., with a sample size $T=2500$; a high-dimensional portfolio of daily log-returns listed in the S\&P 500, where we selected $480$ firms\footnote{S\&P 500 indices excluding Etsy Inc., Solaredge Technologies Inc., PayPal, Hewlett Packard Enterprise, Under Armour (class C), Fortive, Lamb Weston, Ingersoll Rand Inc., Ceridian HCM, Linde PLC, Moderna, Fox Corporation (class A and B), Dow Inc., Corteva Inc., Amcor, Otis Worldwide, Carrier Global, Match Group, Viatris Inc.} that have been continuously included in the index over the period September 2014 -- January 2022, i.e., with a sample size $T=1901$. Hereafter, the portfolios are denoted by MSCI, S\&P 100 and S\&P 500, respectively\footnote{The MSCI, S\&P 100 and S\&P 500 data can be found on https://www.msci.com/, https://finance.yahoo.com and https://macrobond.com, respectively. MSCI and S\&P 100 data are publicly available in \url{https://github.com/Benjamin-Poignard/fMSV}; the S\&P 500 data requires a license and is thus not publicly available}. 

\subsection{Statistical accuracy and portfolio allocation}

We measure the statistical accuracy of the variance-covariance models based on the losses defined in (\ref{ranking_metrics}). Since $H_t$ is the underlying variance-covariance of $y_t$ and is unobserved, we replace it 
with a proxy for the realized covariance. This proxy is defined as $(1-a) y_ty^\top_t + a T^{*-1} \sum_{s=t-T^*+1}^t y_s y^\top_s$ ($t>T^*$), where $a=0.01$ and $T^*$ denotes the length of the in-sample period. This proxy represents a weighted average of $y_ty^\top_t$, which serves as a consistent (though noisy) estimator of $H_t$, and the rolling sample covariance matrix up to time $t$. The variance-covariance $\widehat{H}_t$ is estimated in-sample and the losses $\text{D}_E, \text{D}_F$, \textcolor{black}{$\text{D}_S$, $\text{D}_3$} are evaluated over the out-of-sample periods. The economic performances are assessed through the Global Minimum Variance Portfolio (GMVP) investment problem. The latter problem at time $t$, in the absence of short-sales constraints, is defined as
\begin{equation}\label{portprob}
\min_{w_t} \; w^{\top}_t \; \widehat{H}_t \; w_t, \;\;\text{s.t.} \;\;\iota^{\top} w_t = 1,
\end{equation}
where $\widehat{H}_t$ is the $p \times p$ one-step ahead forecast of the conditional variance-covariance matrix built at time $t-1$ and $\iota$ is a $p \times 1$ vector of $1$'s. The explicit solution is given by $\omega_t = \widehat{H}^{-1}_t \iota/\iota^{\top}\widehat{H}^{-1}_t\iota$: as a function depending only on $\widehat{H}_t$, the GMVP performance essentially depends on the precise measurement of the variance-covariance matrix. The following out-of-sample performance metrics (annualized) will be reported: \textbf{AVG} as the average of the out-of-sample portfolio returns, multiplied by $252$; \textbf{SD} as the standard deviation of the out-of-sample portfolio returns, multiplied by $\sqrt{252}$; \textbf{IR} as the information ratio computed as $\textbf{AVG}/\textbf{SD}$. The key performance measure is the out-of-sample \textbf{SD}. The GMVP problem essentially aims to minimize the variance rather than to maximize the expected return (that is high \textbf{IR}). 
As an alternative to the GMVP, we examine the performance of the risk parity portfolio (RPP), described in Appendix \ref{portfolio}. The RPP aims to equalize risk contributions across all asset classes in order to maximize diversification benefits. We evaluate its performance, expecting a high \textbf{IR}.

\mds

The out-of-sample analysis evaluates statistical accuracy using $\text{D}_E$, $\text{D}_F$, $\text{D}_S$, and $\text{D}_3$ as well as GMVP and RPP performances for the competing models described in Subsection \ref{sec:competing_models}.  
In addition, we consider \textcolor{black}{four} benchmark procedures: \textcolor{black}{the equally weighted portfolio $1/p$, the sample covariance matrix (SCov)}, the geometric-inverse shrinkage (GIS) estimator of \cite{ledoit2022}, a nonlinear shrinkage estimator based on the symmetrized Kullback-Leibler loss, and the linear shrinkage towards one-parameter covariance estimator (Cov1Para) of \cite{ledoit2004}. \textcolor{black}{The last three models are estimated using the in-sample period.}
The MCS test for ranking the statistical accuracy takes $\text{D}_E,\text{D}_F$, $\text{D}_S$, and $\text{D}_3$ as loss functions. The \textcolor{black}{GMVP} performances are ranked out-of-sample by the MCS test which takes the distance based on the difference of the \textcolor{blue}{squared} returns of portfolios $i$ and $j$, defined as
$u_{ij,t} = \big(\mu_{i,t}-\overline{\mu}_i\big)^2 - \big(\mu_{j,t}-\overline{\mu}_j\big)^2$, with $\mu_{k,t}=w^{\top}_{k,t} y_t$ the portfolio return at time $t$, where $w_{k,t}$ represents the GMVP weight deduced from variance-covariance model $k$, and $\overline{\mu}_k$ is the average portfolio return over the period. 
For the RPP performances, we set $u_{ij,t} = - \mu_{i,t}/s_i + \mu_{j,t}/s_j$ using the RPP weights with $s_k$ the standard deviation of the RPP return over the period.
The MCS test is evaluated at the $10\%$ level based on the range statistic and with block bootstrap with $10,000$ replications: see \cite{hansen2003} for further technical procedures. We report below
the out-of-sample periods used for the test accuracy, together with the number of factors $\widehat{m}$ selected in-sample by the variance-covariance eigenvalue-based procedure of \cite{onatski2010} on an indicative basis:
\begin{itemize}
    \item[(i)] MSCI: in sample 1999/01/01--2013/12/12, $\widehat{m}=1$; out-of-sample 2013/12/13--2018/03/12.
    \item[(ii)] S\&P 100: in sample 2010/02/19--2014/07/02, $\widehat{m}=3$; out-of-sample 2014/07/03--2020/01/23.
    \item[(iii)] S\&P 500: in sample 2014/09/25--2018/05/14, $\widehat{m}=1$; out-of-sample 2018/05/15--2022/01/27.
\end{itemize}
Table \ref{Table_metrics_stat_acc_performance} reports the average annualized values of $\text{D}_E$, $\text{D}_F$, \textcolor{black}{$\text{D}_S$, and $\text{D}_3$}, along with the MCS results. 
\textcolor{black}{The $\text{fMSV}_1$ model achieves the lowest distances for $\text{D}_E$, $\text{D}_F$, and $\text{D}_3$, whereas the results for $\text{D}_S$ vary, favoring the DCC for the low/mid dimensional portfolio and the $\text{fMSV}_5$ model for the high-dimensional portfolio. As detailed in Subsection \ref{sec:loss_accuracy}, $\text{D}_S$ and $\text{D}_3$ are asymmetric with respect to over- and under-predictions. $\text{D}_3$ yields rankings similar to those obtained from $\text{D}_E,\text{D}_F$, whereas the Stein loss $\text{D}_S$ leads to different conclusions. This indicates that relative and inverse-covariance errors drive the differences across models. 
For the MSCI and S\&P 500 portfolios, the selection of fMSV$_1$ is in line with the number of factors determined by the method of \cite{onatski2010}. Regarding the S\&P 100 portfolio, while the approach of \cite{onatski2010} chose $m=3$, the distance measures tend to favor simpler models for prediction. 
This discrepancy implies that more parsimonious models are favored for forecasting as they exclude overfitting noise in the estimation period.
In several cases, fMSV models with $m \geq 2$ are included in the MCS.} 

\mds

Table \ref{Table_metrics_econ_performance} displays the GMVP performance results, which can be summarized as follows (note that, unless stated otherwise, the results refer to \textbf{SD}). Overall, the fMSV models provide the best economic performance. Specifically, the $\text{fMSV}_5$ model achieves the lowest standard deviations for the MSCI and S\&P 100 portfolios, whereas the S\&P 500 portfolio favors the $\text{fMSV}_3$ model. \textcolor{black}{In the latter portfolio, among the benchmark procedures, the GIS is competitive with GARCH-class models but exhibits a higher SD than the fMSV models.} Furthermore, $\text{fMSV}_m$ $(m \geq 2)$ models are included in the MCS for the S\&P 500 portfolio. Unlike the statistical accuracy results in Table \ref{Table_metrics_stat_acc_performance}, these findings indicate that factor models with more factors yield better performances. In portfolio construction, the flexibility inherent in factor models with several factors may have contributed to optimizing asset allocation, particularly during periods of high market risk.
Focusing on the $\text{D}_S$ measure and GMVP performance for the fMSV models, $\text{D}_S$ and \textbf{SD} select the same model ($\text{fMSV}_5$) for both MSCI and S\&P100 in Tables \ref{Table_metrics_stat_acc_performance} and \ref{Table_metrics_econ_performance}. Regarding S\&P500, while Table \ref{Table_metrics_stat_acc_performance} shows that $\text{fMSV}_3$ and $\text{fMSV}_5$ are included in the MCS under $\text{D}_S$, Table \ref{Table_metrics_econ_performance} indicates that fMSV$_3$ achieves the smallest \textbf{SD}. Overall, these results suggest that $\text{D}_S$ and \textbf{SD} tend to select similar models.

\mds

The RPP results are displayed in Table \ref{Table_metrics_econ_performance2}, and indicate that most of \textbf{IR}s are lower than those of the GMVP in Table \ref{Table_metrics_econ_performance}. 
In other words, the GMVP outperforms the RPP for the datasets.
Reflecting the result, there are no major differences among the models. 

\mds

Tables \ref{Table_metrics_stat_acc_performance} and \ref{Table_metrics_econ_performance} illustrate that fMSV models generally outperform competing models, suggesting several key insights: (i) the stochastic nature of fMSV models captures market dynamics more effectively than the deterministic structure of GARCH class models; (ii) fMSV models with a larger number of factors provide the necessary flexibility to capture dynamic inter-asset relationships; and (iii) the strong performance of DCC models in certain cases puts a stress on the importance of incorporating dynamic correlations in covariance estimation.

\begin{table}[h]
\caption{\textcolor{black}{Annualized performance metrics $\text{D}_E$, $\text{D}_F$, $\text{D}_S$, and $\text{D}_3$ for various estimators.}}\label{Table_metrics_stat_acc_performance}
\scalebox{0.8}{\color{black}\begin{tabular}{c|cccc||cccc||cccc}\hline\hline
& \multicolumn{4}{c||}{MSCI} & \multicolumn{4}{c||}{S\&P 100} &  \multicolumn{4}{c}{S\&P 500} \\
&    \multicolumn{4}{c||}{\scalebox{0.75}{2013/12/13--2018/03/12}} &  \multicolumn{4}{c||}{\scalebox{0.75}{2014/07/03--2020/01/23}} &  \multicolumn{4}{c}{\scalebox{0.75}{2018/05/15--2022/01/27}} \\
  &  $\text{D}_E$ & $\text{D}_F$ & $\text{D}_S$ & $\text{D}_3$
  & $\text{D}_E$ & $\text{D}_F$ & $\text{D}_S$ & $\text{D}_3$ & $\text{D}_E$ & $\text{D}_F$ & $\text{D}_S$ & $\text{D}_3$\\ 
\hline

&  &  & & & & & & & & & &   \\ 

\textbf{DCC} &  1.06$^\star$ & 1.76$^\star$ & \textbf{2.18}$^\star$ & 0.35& 9.70& 17.76& \textbf{11.48}$^\star$ & 2.57& 5874.13& 11644.53& 89.20& 141862.05\\

&  &  & & & & & & & & & &   \\ 

\textbf{sBEKK} & 1.11& 1.84& 2.26& 0.35&12.51& 22.91& 18.33& 2.89& 5809.25& 11513.12&730.80& 141432.03\\
   
&  &  & & & & & & & & & &   \\

$\textbf{fGARCH}_1$ & 1.08& 1.78& 2.43& 0.35& 9.62& 17.61& 11.95& 2.56& 5575.14& 11048.43& 75.55& 139615.16\\
   
&  &  & & & & & & & & & &   \\
  
$\textbf{fGARCH}_2$ & 1.08& 1.78& 2.39& 0.35& 9.62& 17.61& 11.94& 2.56& 5573.03& 11044.41& 74.93& 139614.91\\
   
&  &  & & & & & & & & & &   \\
  
$\textbf{fGARCH}_3$ & 1.08& 1.78& 2.37& 0.35& 9.61& 17.60& 11.83& 2.56& 5572.71& 11043.79& 75.09& 139614.93\\
   
&  &  & & & & & & & & & &   \\

$\textbf{fGARCH}_4$ & 1.07& 1.78& 2.35& 0.35& 9.61& 17.59& 11.79& 2.56& 5571.90& 11042.08& 74.20$^\star$ & 139614.88\\
   
&  &  & & & & & & & & & &   \\

$\textbf{fGARCH}_5$ & 1.07& 1.78& 2.33& 0.35& 9.61& 17.59& 11.77& 2.56& 5571.74& 11041.64& 74.31$^\star$ & 139614.85\\
   
&  &  & & & & & & & & & &   \\

\textbf{$\text{fMSV}_1$} & \textbf{0.94}$^\star$ & \textbf{1.51}$^\star$ & 2.40& \textbf{0.33}$^\star$ & \textbf{8.74}$^\star$ & \textbf{15.88}$^\star$ & 11.97& \textbf{2.37}$^\star$ & \textbf{4502.96}$^\star$ &  \textbf{8910.01}$^\star$ & 75.65& \textbf{119546.97}$^\star$\\

&  &  & & & & & & & & & &   \\

\textbf{$\text{fMSV}_2$} & 1.01$^\star$ & 1.65$^\star$ & 2.36& 0.34$^\star$ & 9.10& 16.59& 11.83& 2.49& 5663.49 & 11230.80& 75.39& 189584.25\\
   
&  &  & & & & & & & & & &   \\

\textbf{$\text{fMSV}_3$} & 1.02$^\star$ & 1.66$^\star$ & 2.35& 0.34$^\star$ & 9.56& 17.50& 11.78& 2.70& 4922.50$^\star$ &  9747.55$^\star$ & 74.36$^\star$ & 128833.93$^\star$\\
   
&  &  & & & & & & & & & &   \\

\textbf{$\text{fMSV}_4$} & 1.00$^\star$ & 1.65$^\star$ & 2.33& 0.34$^\star$ & 9.47& 17.34& 11.78& 2.70& 5956.95& 11808.20& 74.40& 142592.08\\
   
&  &  & & & & & & & & & &   \\

\textbf{$\text{fMSV}_5$} & 1.09& 1.85& 2.31& 0.36&10.05& 18.46& 11.68& 3.12& 5999.93& 11893.18& \textbf{74.17}$^\star$ & 142861.62 \\
   
&  &  & & & & & & & & & &   \\

\hline\hline
\end{tabular}}
\begin{minipage}{16.4cm}
\footnotesize \textcolor{black}{Note: $\text{D}_E$, $\text{D}_F$, $\text{D}_S$, and $\text{D}_3$ represent the average (annualized) values of the distances defined in equation (\ref{ranking_metrics}).
For readability, $\text{D}_E$ and $\text{D}_F$ are scaled by $10^2$, $\text{D}_S$ by $10^{-4}$, and $\text{D}_3$ by $10^3$. The minimum values for each metric are indicated in bold.
`$\star$' denotes that the models are included in the MCS for a $90\%$-level ($p$-values $\geq 0.10$).
The out-of-sample periods are indicated above $\text{D}_E$, $\text{D}_F$, $\text{D}_S$, and $\text{D}_3$.}
\end{minipage}
\end{table}

\begin{table}[h!]
\caption{Annualized GMVP performance metrics for various estimators.}\label{Table_metrics_econ_performance}
\scalebox{0.8}{\color{black}\begin{tabular}{c|cccc||cccc||cccc}\hline\hline
& \multicolumn{4}{c||}{MSCI} & \multicolumn{4}{c||}{S\&P 100} &  \multicolumn{4}{c}{S\&P 500} \\
&    \multicolumn{4}{c||}{\scalebox{0.75}{2013/12/13--2018/03/12}} &  \multicolumn{4}{c||}{\scalebox{0.75}{2014/07/03--2020/01/23}} &  \multicolumn{4}{c}{\scalebox{0.75}{2018/05/15--2022/01/27}} \\
  &  \textbf{AVG} & \textbf{SD} & \textbf{IR} & \textbf{MCS}
  & \textbf{AVG} & \textbf{SD} & \textbf{IR} & \textbf{MCS} & \textbf{AVG} & \textbf{SD} & \textbf{IR} & \textbf{MCS}\\ 
\hline        

&  &  & & & & & & & & & &   \\
   
\textbf{DCC} & 6.196 & 8.921 & 0.695 &0.087& 17.293 & 11.941 & 1.448 &0    & 11.532 & 19.805 & 0.582 &\textbf{0.476} \\
  
&  &  & & & & & & & & & &   \\
   
\textbf{sBEKK}  &10.095 & 9.367 & 1.078 &0.002& 23.489 & 18.815 & 1.248 &0    &  1.008 & 45.652 & 0.022 &0 \\
  
&  &  & & & & & & & & & &   \\
   
$\textbf{fGARCH}_1$   & 6.031 & 9.875 & 0.611 &0    &  8.807 & 12.649 & 0.696 &0    &  9.994 & 20.948 & 0.477 &0.007 \\
  
&  &  & & & & & & & & & &   \\
   
$\textbf{fGARCH}_2$   & 7.800 & 9.403 & 0.830 &0.001&  9.143 & 12.691 & 0.720 &0    &  2.074 & 24.727 & 0.084 &0.002 \\
  
&  &  & & & & & & & & & &   \\
   
$\textbf{fGARCH}_3$   & 5.042 & 9.878 & 0.511 &0    &  7.835 & 12.634 & 0.620 &0    &  5.218 & 24.167 & 0.216 &0.002 \\
  
&  &  & & & & & & & & & &   \\
   
$\textbf{fGARCH}_4$   & 3.474 &10.315 & 0.337 &0    & 13.190 & 12.273 & 1.075 &0    & 12.871 & 18.083 & 0.712 &\textbf{0.219} \\
  
&  &  & & & & & & & & & &   \\
   
$\textbf{fGARCH}_5$   & 5.543 & 9.747 & 0.569 &0    & 14.176 & 12.453 & 1.138 &0    & 13.570 & 18.938 & 0.717 &0.024 \\
  
&  &  & & & & & & & & & &   \\
   
$\textbf{fMSV}_1$   & 6.305 & 9.646 & 0.654 &0.002& 10.923 & 11.987 & 0.911 &0    &  7.942 & 20.874 & 0.381 &0.007 \\
  
&  &  & & & & & & & & & &   \\
   
$\textbf{fMSV}_2$   & 7.340 & 8.615 & 0.852 &\textbf{0.183}& 12.553 & 11.708 & 1.072 &0    & 10.811 & 16.241 & 0.666 &\textbf{0.476} \\
  
&  &  & & & & & & & & & &   \\
   
$\textbf{fMSV}_3$   & 5.745 & 8.809 & 0.652 &0.087& 15.260 & 11.853 & 1.288 &0    & 11.987 & \textbf{15.602} & 0.768 &\textbf{1.000} \\
  
&  &  & & & & & & & & & &   \\
   
$\textbf{fMSV}_4$   & 4.641 & 8.996 & 0.516 &0.002& 14.993 & 11.590 & 1.294 &0    & 10.736 & 15.848 & 0.677 &\textbf{0.549} \\
  
&  &  & & & & & & & & & &   \\
   
$\textbf{fMSV}_5$   & 6.514 & \textbf{8.356} & 0.780 &\textbf{1.000}& 12.206 &  \textbf{9.593} & 1.273 &\textbf{1.000}& 10.668 & 15.894 & 0.671 &\textbf{0.549} \\
  
&  &  & & & & & & & & & &   \\
   
\textbf{$1/p$}   & 3.859 & 13.593 & 0.284 &0    & 8.676 & 13.028 & 0.666 &0    &  9.708 & 23.163 & 0.419 &0.024 \\

&  &  & & & & & & & & & &   \\

\textbf{SCov}   & 6.723 & 9.469 & 0.710 &0    & 13.187 & 11.412 & 1.156 &0    & $-$0.600 & 22.686 & $-$0.026 &0.002 \\

&  &  & & & & & & & & & &   \\

\textbf{GIS}   & 6.755 & 9.454 & 0.714 &0    & 13.073 & 11.285 & 1.158 &0    &  6.323 & 18.809 & 0.336 &\textbf{0.219} \\
  
&  &  & & & & & & & & & &   \\
   
\textbf{Cov1Para}   & 6.735 & 9.454 & 0.712 &0    & 13.052 & 11.326 & 1.152 &0    &  2.539 & 20.940 & 0.121 &0.007 \\

&  &  & & & & & & & & & &   \\

\hline\hline
\end{tabular}}
\begin{minipage}{15.3cm}
\footnotesize Note: The lowest \textbf{SD} figure is in bold face. \textbf{MCS} column contains the p-values of the MCS tests, where the bold figures are the models included in the MCS for a $90\%$-level (p-values $\geq 0.10$). The out-of-sample periods are indicated above \textbf{AVG}, \textbf{SD}, \textbf{IR} and \textbf{MCS}.
\end{minipage}
\end{table}

\begin{table}[h!]
\caption{Annualized RPP performance metrics for various estimators.}\label{Table_metrics_econ_performance2}
\scalebox{0.8}{\color{black}\begin{tabular}{c|cccc||cccc||cccc}\hline\hline
& \multicolumn{4}{c||}{MSCI} & \multicolumn{4}{c||}{S\&P 100} &  \multicolumn{4}{c}{S\&P 500} \\
&    \multicolumn{4}{c||}{\scalebox{0.75}{2013/12/13--2018/03/12}} &  \multicolumn{4}{c||}{\scalebox{0.75}{2014/07/03--2020/01/23}} &  \multicolumn{4}{c}{\scalebox{0.75}{2018/05/15--2022/01/27}} \\
  &  \textbf{AVG} & \textbf{SD} & \textbf{IR} & \textbf{MCS}
  & \textbf{AVG} & \textbf{SD} & \textbf{IR} & \textbf{MCS} & \textbf{AVG} & \textbf{SD} & \textbf{IR} & \textbf{MCS}\\ 
\hline

&  &  & & & & & & & & & &   \\
   
\textbf{DCC}   & 4.424 & 11.919 & 0.371 &\textbf{0.722}& 8.881 & 12.230 & \textbf{0.726} &\textbf{1.000}& 9.390 & 20.811 & 0.451 &\textbf{0.934}\\
   
&  &  & & & & & & & & & &   \\
   
\textbf{sBEKK}    & 4.693 & 12.063 & 0.389 &\textbf{0.722}& 8.383 & 12.073 & 0.694 &\textbf{0.705}& 9.675 & 20.553 & \textbf{0.471} &\textbf{1.000}\\
   
&  &  & & & & & & & & & &   \\

$\textbf{fGARCH}_1$     & 4.425 & 12.398 & 0.357 &\textbf{0.253}& 8.565 & 12.049 & 0.711 &\textbf{0.908}& 9.517 & 21.493 & 0.443 &\textbf{0.934}\\
   
&  &  & & & & & & & & & &   \\
   
$\textbf{fGARCH}_2$     & 4.492 & 12.450 & 0.361 &\textbf{0.253}& 8.617 & 12.061 & 0.715 &\textbf{0.996}& 9.439 & 21.747 & 0.434 &\textbf{0.649}\\
   
&  &  & & & & & & & & & &   \\
   
$\textbf{fGARCH}_3$     & 4.172 & 12.707 & 0.328 &\textbf{0.218}& 8.652 & 12.102 & 0.715 &\textbf{0.996}& 9.338 & 21.851 & 0.427 &0.009\\
   
&  &  & & & & & & & & & &   \\
   
$\textbf{fGARCH}_4$     & 4.123 & 12.726 & 0.324 &\textbf{0.218}& 8.726 & 12.172 & 0.717 &\textbf{0.996}& 9.358 & 21.853 & 0.428 &0.009\\
   
&  &  & & & & & & & & & &   \\
   
$\textbf{fGARCH}_5$     & 4.164 & 12.707 & 0.328 &\textbf{0.218}& 8.731 & 12.176 & 0.717 &\textbf{0.996}& 9.351 & 21.855 & 0.428 &0.009\\
   
&  &  & & & & & & & & & &   \\
   
$\textbf{fMSV}_1$     & 4.321 & 12.236 & 0.353 &\textbf{0.253}& 8.597 & 12.120 & 0.709 &\textbf{0.908}& 9.456 & 21.636 & 0.437 &\textbf{0.710}\\
   
&  &  & & & & & & & & & &   \\
   
$\textbf{fMSV}_2$     & 4.388 & 12.406 & 0.354 &\textbf{0.253}& 8.704 & 12.205 & 0.713 &\textbf{0.908}& 9.554 & 21.836 & 0.438 &\textbf{0.649}\\
   
&  &  & & & & & & & & & &   \\
   
$\textbf{fMSV}_3$     & 4.322 & 12.415 & 0.348 &\textbf{0.218}& 8.758 & 12.216 & 0.717 &\textbf{0.996}& 9.252 & 21.792 & 0.425 &0.009\\
   
&  &  & & & & & & & & & &   \\
   
$\textbf{fMSV}_4$     & 4.319 & 12.293 & 0.351 &\textbf{0.722}& 8.768 & 12.260 & 0.715 &\textbf{0.993}& 9.389 & 21.760 & 0.432 &0.009\\
   
&  &  & & & & & & & & & &   \\
   
$\textbf{fMSV}_5$     & 4.805 & 12.003 & \textbf{0.400} &\textbf{1.000}& 8.614 & 12.110 & 0.711 &\textbf{0.996}& 9.962 & 21.784 & 0.457 &\textbf{0.934}\\
   
&  &  & & & & & & & & & &   \\
   
\textbf{$1/p$}   & 3.859 & 13.593 & 0.284 &\textbf{0.218}& 8.676 & 13.028 & 0.666 &\textbf{0.878}&  9.708 & 23.163 & 0.419 &\textbf{0.649} \\

&  &  & & & & & & & && &   \\

\textbf{SCov}   & 4.471 & 12.435 & 0.360 &\textbf{0.253}& 8.671 & 12.167 & 0.713 &\textbf{0.908}&  9.655 & 21.962 & 0.440 &\textbf{0.710} \\

&  &  & & & & & & & && &   \\
\textbf{GIS}     & 4.471 & 12.434 & 0.360 &\textbf{0.253}& 8.669 & 12.164 & 0.713 &\textbf{0.908}& 9.654 & 21.952 & 0.440 &\textbf{0.710}\\
   
&  &  & & & & & & & & & &   \\

\textbf{Cov1Para}     & 4.470 & 12.436 & 0.359 &\textbf{0.253}& 8.671 & 12.168 & 0.713 &\textbf{0.908}& 9.655 & 21.963 & 0.440 &\textbf{0.710}\\
   
&  &  & & & & & & & && &   \\

\hline\hline
\end{tabular}}
\begin{minipage}{15.3cm}
\footnotesize Note: The highest \textbf{IR} figure is in bold face. \textbf{MCS} column contains the p-values of the MCS tests, where the bold figures are the models included in the MCS for a $90\%$-level (p-values $\geq 0.10$). The out-of-sample periods are indicated above \textbf{AVG}, \textbf{SD}, \textbf{IR} and \textbf{MCS}.
\end{minipage}
\end{table}

\section{Conclusion}

The paper considers a two-stage estimation where the factor model is estimated in the first stage, while the second stage estimates the MSV processes of the estimated factors. We provide some asymptotic results for the second stage estimators.
Monte Carlo results using the DGPs based on MGARCH models indicate that the forecasts of the fMSV model perform well compared with non-factor models.
The empirical results based on real data show that the performances of the forecasts of the fMSV are better than those of the competing MGARCH models. 

Several directions relating to the theoretical and the empirical analysis on fMSV models can be considered.
The first topic concerns the variance-covariance matrix of the idiosyncratic errors. We may extend it to allow stochastic volatility, as in \cite{chib2006}.
The second one is to accommodate realized covariance and asymmetric effects, as in \cite{asai2015}.
\textcolor{black}{The third direction is to incorporate dynamic factors, as explored by \cite{barigozzi2017} and \cite{lam2012}. Fourth, while our current framework is based on an additive factor structure, we could consider multiplicative volatility factors by extending the frameworks of \cite{ding2025} and \cite{ray2000}.} 
We shall leave these issues for the future research.

\bibliography{biblio} 

@article{xbai2016,
author = {X. Bai and K. Scheinberg and R. Tutuncu},
title = {Least-Squares Approach to Risk Parity in Portfolio Selection},
journal = {Quantitative Finance},
volume = {16},
number = {3},
pages = {357--376},
year = {2016},
doi = {10.1080/14697688.2015.1031815},
}

@article{kastner2017,
author = {G. Kastner and S. Fr{\"u}hwirth-Schnatter and H.F. Lopes},
title = {Efficient Bayesian Inference for Multivariate Factor Stochastic Volatility Models},
journal = {Journal of Computational and Graphical Statistics},
volume = {26},
number = {4},
pages = {905--917},
year = {2017},
publisher = {Taylor \& Francis},
doi = {10.1080/10618600.2017.1322091},
}

@article{aguilar2000,
 author = {O. Aguilar and M. West},
 journal = {Journal of Business \& Economic Statistics},
 volume = {18},
 number = {3},
 pages = {338--357},
 publisher = {[American Statistical Association, Taylor & Francis, Ltd.]},
 title = {Bayesian Dynamic Factor Models and Portfolio Allocation},
 volume = {18},
 year = {2000},
 doi ={10.2307/1392266}
}

@article{lopes2007,
title = {Factor stochastic volatility with time varying loadings and Markov switching regimes},
journal = {Journal of Statistical Planning and Inference},
volume = {137},
number = {10},
pages = {3082-3091},
year = {2007},
doi = {10.1016/j.jspi.2006.06.047},
author = {H.F. Lopes and C.M. Carvalho},
}

@article{engle1990,
title = {Asset pricing with a factor-arch covariance structure: Empirical estimates for treasury bills},
journal = {Journal of Econometrics},
volume = {45},
number = {1},
pages = {213-237},
year = {1990},
doi = {10.1016/0304-4076(90)90099-F},
author = {R.F. Engle and V.K. Ng and M. Rothschild},
}

@article{asai2009,
title = {The structure of dynamic correlations in multivariate stochastic volatility models},
journal = {Journal of Econometrics},
volume = {150},
number = {2},
pages = {182-192},
year = {2009},
doi = {10.1016/j.jeconom.2008.12.012},
author = {M. Asai and M. McAleer},
}

@article{asai2015,
title = {Forecasting co-volatilities via factor models with asymmetry and long memory in realized covariance},
journal = {Journal of Econometrics},
volume = {189},
number = {2},
pages = {251-262},
year = {2015},
doi = {10.1016/j.jeconom.2015.03.020},
author = {M. Asai and M. McAleer},
}

@article{engle2002,
  title={Dynamic Conditional Correlation: A Simple Class of Multivariate Generalized Autoregressive Conditional Heteroskedasticity Models},
  author={Engle, R.F.},
  journal={Journal of Business \& Economic Statistics},
  volume={20},
  number={},
  pages={339-350},
  year={2002},
  publisher={Taylor \& Francis},
  doi={10.1198/073500102288618487},
}

@article{tse2002,
  title={A multivariate generalized autoregressive conditional heteroscedasticity model with time-varying correlations},
  author={Tse, Y.K. and A.K. Tsui},
  journal={Journal of Business \& Economic Statistics},
  volume={20},
  number={3},
  pages={351-362},
  year={2002},
  publisher={Taylor \& Francis},
  doi={10.1198/073500102288618496},
}

@article{lam2009,
  title={Sparsistency and rates of convergence in large covariance matrix estimation},
  author={Lam, C. and J. Fan},
  journal={The Annals of Statistics},
  volume={37},
  number={6B},
  pages={4254--4278},
  year={2009},
  publisher={Institute of Mathematical Statistics},
doi = {10.1214/09-AOS720},
}

@article{lam2012,
  title={Factor modeling for high-dimensional time series: inference
for the number of factors},
  author={Lam, C. and Q. Yao},
  journal={The Annals of Statistics},
  volume={40},
  number={2},
  pages={694--726},
  year={2012},
  publisher={Institute of Mathematical Statistics},
doi = {10.1214/12-AOS970},
}

@article{baba1985,
  title={Multivariate Simultaneous Generalized {ARCH}},
  author={Baba, Y. and R. Engle and D. Kraft and K. Kroner},
  journal={Unpublished Paper},
  volume={},
  number={},
  pages={},
  year={1985},
  publisher={University of California, San Diego},
  doi={},
}

@incollection{chib2009,
author="Chib, S.
and Omori, Y.
and Asai, M.",
editor="Mikosch, T.
and Krei{\ss}, J.-P.
and Davis, R.A.
and Andersen, T.G.",
title="Multivariate Stochastic Volatility",
bookTitle="Handbook of Financial Time Series",
year="2009",
publisher="Springer Berlin Heidelberg",
address="Berlin, Heidelberg",
pages="365-400",
doi="10.1007/978-3-540-71297-8_16",
}

@incollection{boudt2019,
  author      = "Boudt, K. and Galanos, A. and Payseur, S. and Zivot, E.",
  title       = "Multivariate {GARCH} models for large-scale applications: A survey",
  editor      = "H.D. Vinod and C.R. Rao",
  booktitle   = "Handbook of Statistics",
  volume = "41",
  publisher   = "Elsevier",
  address     = "North-Holland, Amsterdam",
  year        = 2019,
  pages       = "193-242",
  chapter     = 7,
}

@article{ding2025,
title = {Multiplicative factor model for volatility},
journal = {Journal of Econometrics},
volume = {forthcoming},
year = {2025},
doi = {10.1016/j.jeconom.2025.105959},
author = {Y. Ding and R.F. Engle and Y. Li and X. Zheng},
}

@book{fuller1996,
  title = {Introduction to statistical time series, 2nd ed.},
  author = {Fuller, W.A.},
  year = {1996},
  publisher = {John Wiley \& Sons},
  address = {Hoboken}
}

@book{anderson2003introduction,
  title={An Introduction to Multivariate Statistical Analysis},
  author={Anderson, T.W.},
  series={Wiley Series in Probability and Statistics},
  year={2003},
  publisher={Wiley}
}

@book{abadir2005,
  title="Matrix Algebra",
  author="Abadir, K.M. and J.R. Magnus",
  volume="",
  year="2005",
  series="Econometric Exercises 1",
  publisher="Cambridge university press",
  address = "New York",
}

@article{bai2012,
  title={Statistical analysis of factor models of high dimension},
  author={Bai, J. and K. Li},
  journal={The Annals of Statistics},
  volume={40},
  number={1},
  pages={436--465},
  year={2012},
  publisher={Institute of Mathematical Statistics},
doi={10.1214/11-AOS966},
}

@article{chang2006,
  title={Bootstrapping cointegrating regressions},
  author={Chang, Y. and Park, J.Y. and K. Song},
  journal={Journal of Econometrics},
  volume={133},
  number={2},
  pages={703--739},
  year={2006},
  publisher={Elsevier},
doi={10.1016/j.jeconom.2005.06.011},
}

@article{bailiao2016,
  title={Efficient estimation of approximate factor models via penalized maximum likelihood},
  author={Bai, J. and K. Liao},
  journal={Journal of Econometrics},
  volume={191},
  number={1},
  pages={1--18},
  year={2016},
  publisher={Elsevier},
doi={10.1016/j.jeconom.2015.10.003},
}

@article{barigozzi2017,
title = {Generalized dynamic factor models and volatilities: estimation and forecasting},
journal = {Journal of Econometrics},
volume = {201},
number = {2},
pages = {307-321},
year = {2017},
doi = {10.1016/j.jeconom.2017.08.010},
author = {M. Barigozzi and M. Hallin},
}

@article{chib2006,
  title={Analysis of high dimensional multivariate stochastic
   volatility models},
  author={Chib, S. and Nardari, F. and N. Shephard},
  journal={Journal of Econometrics},
  volume={134},
  number={2},
  pages={341--371},
  year={2006},
  publisher={Elsevier},
doi={10.1016/j.jeconom.2005.06.026},
}

@article{bauwens2006,
  title={Multivariate {GARCH} models: A survey},
  author={Bauwens, L. and Laurent, S. and J.V.K. Rombouts},
  journal={Journal of Applied Econometrics},
  volume={21},
  number={1},
  pages={79--1009},
  year={2006},
  publisher={Wiley},
doi={10.1002/jae.842},
}

@article{granger1976,
  title={Time Series Modeling and Interpretation},
  author={Granger, C.W.J. and Morris, M.},
  journal={Journal of the Royal Statistical Society, Series A},
  volume={139},
  number={2},
  pages={246--257},
  year={1976},
  publisher={Wiley},
  doi={10.2307/2345178},
}

@article{harvey1994,
  title={Multivariate stochastic variance models},
  author={Harvey, A. C. and Ruiz, E. and N. Shephard},
  journal={The Review of Economic Studies},
  volume={61},
  number={2},
  pages={247--264},
  year={1994},
  publisher={Oxford University Press},
doi={10.2307/2297980},
}

@book{hamilton1994,
  author = {Hamilton, J.D.},
  year = {1994},
  title = {Time Series Analysis},
  publisher = {Princeton University Press, Princeton},
  doi = {10.1515/9780691218632},
}

@article{onatski2010,
  title={Determining the number of factors from empirical distribution of eigenvalues},
  author={Onatski, A.},
  journal={The Review of Economics and Statistics},
  volume={92},
  number={4},
  pages={1004--1016},
  year={2010},
  publisher={MIT Press},
doi={10.1162/REST{_}a{_}00043},
}

@article{bai2002,
author = {Bai, J. and Ng, S.},
title = {Determining the Number of Factors in Approximate Factor Models},
journal = {Econometrica},
volume = {70},
number = {1},
pages = {191-221},
keywords = {factor analysis, asset pricing, principal components, model selection},
doi = {10.1111/1468-0262.00273},
year = {2002}
}

@article{bai2016,
  title={Maximum likelihood estimation and inference for approximate factor models of high dimension},
  author={Bai, J. and K. Li},
  journal={The Review of Economics and Statistics},
  volume={98},
  number={2},
  pages={298--309},
  year={2016},
  publisher={MIT Press},
doi={10.1162/REST_a_00519},
}

@article{jacquier1999,
  title={Stochastic volatility: Univariate and multivariate
extensions},
  author={Jacquier, E. and Polson, N. G. and P. E. Rossi},
  journal={CIRANO Working paper 99s-26, Montreal},
  volume={
},
  number={},
  pages={},
  year={1999},
  publisher={},
}

@article{stock2002a,
  title={Forecasting using principal components from a large number of predictors},
  author={Stock, J. and M. Watson},
  journal={Journal of the American Statistical Association},
  volume={97},
  number={460},
  pages={1167--1179},
  year={2002},
  publisher={Taylor \& Francis},
doi={10.1198/016214502388618960},
}

@article{pakel2021,
  title={Fitting vast dimensional time-varying covariance models},
  author={Pakel, C. and Shephard, N. and Sheppard, K. and R.F. Engle},
  journal={Journal of Business \& Economic Statistics},
  volume={39},
  number={3},
  pages={652--668},
  year={2021},
  publisher={Taylor \& Francis},
doi={10.1080/07350015.2020.1713795},
}

@article{ledoit2022,
  title={Quadratic shrinkage for large covariance matrices},
  author={Ledoit, O. and M. Wolf},
  journal={Bernoulli},
  volume={28},
  number={3},
  pages={1519--1547},
  year={2022},
  publisher={Bernoulli Society for Mathematical Statistics and Probability},
doi={10.3150/20-BEJ1315},
}

@article{fan2001,
  title={Variable selection via nonconcave penalized likelihood and its oracle properties},
  author={Fan, J. and Li, R.},
  journal={Journal of the American Statistical Association},
  volume={96},
  number={456},
  pages={1348--1360},
  year={2001},
  publisher={Taylor \& Francis},
doi={10.1198/016214501753382273},
}

@article{fan2009,
  title={Network exploration via the adaptive LASSO and SCAD penalties},
  author={Fan, J. and Feng, Y. and Wu, Y.},
  journal={The Annals of Applied Statistics},
  volume={3},
  number={2},
  pages={521--541},
  year={2009},
  publisher={Institute of Mathematical Statistics},
doi={10.1214/08-AOAS215},
}

@article{laurent2012,
  title={On the forecasting accuracy of multivariate {GARCH} models},
  author={Laurent, S. and Rombouts, J.V.K. and Violante, F.},
  journal={Journal of Applied Econometrics},
  volume={27},
  number={6},
  pages= {934--955},
  year={2012},
  publisher={Wiley},
doi={10.1002/jae.1248},
}

@article{liesenfeld2003,
  title={Univariate and multivariate stochastic volatility models: Estimation and diagnostics},
  author={Liesenfeld, R. and J.-F Richard},
  journal={Journal of Empirical Finance},
  volume={10},
  number={4},
  pages={505--531},
  year={2003},
  publisher={Elsevier},
doi={10.1016/S0927-5398(02)00072-5},
}

@article{fan2011,
  title={High-dimensional covariance matrix estimation in approximate factor models},
  author={Fan, J. and Liao, Y. and Mincheva, M.},
  journal={The Annals of Statistics},
  volume={39},
  number={6},
  pages={3320--3356},
  year={2011},
  publisher={Institute of Mathematical Statistics},
doi={10.1214/11-AOS944},
}

@incollection{alexander2001,
  title ={Orthogonal {GARCH}},
  author={Alexander, C.},
  booktitle={Mastering Risk},
  volume={2},
  pages={21--38},
  year={2001},
  publisher={Financial Times–Prentice Hall},
  address={London},
}

@article{alexander2002,
  title={Principal component models for generating large {GARCH} covariance matrices},
  author={Alexander, C.},
  journal={Economic Notes},
  volume={31},
  number={2},
  pages={337--359},
  year={2002},
  publisher={Wiley},
doi={10.1111/1468-0300.00089},
}

@article{hansen2003,
  title={Choosing the best volatility models: The model confidence set approach},
  author={Hansen, P.R. and Lunde, A. and Nason, M.},
  journal={Oxford Bulletin of Economics and Statistics},
  volume={65},
  number={s1},
  pages={839--861},
  year={2003},
  publisher={Wiley},
  doi={10.1046/j.0305-9049.2003.00086.x},
}

@article{hansen2011,
  title={The model confidence set},
  author={Hansen, P.R. and Lunde, A. and Nason, M.},
  journal={Econometrica},
  volume={79},
  number={2},
  pages={453--497},
  year={2011},
  publisher={Wiley},
  doi={10.3982/ECTA5771},
}

@article{pitt1999,
  title="Time varying covariances: a factor stochastic volatility approach (with discussion)",
  author="Pitt, M. and N. Shephard",
  journal="Bayesian Statistics",
  volume="6",
  number="",
  pages="547--570, eds. J. O. Berger, J. M. Bernardo, A. P. Dawid, and A. F. M. Smith",
  year="1999",
  series="",
  publisher="Oxford University Press",
doi = {10.1093/oso/9780198504856.001.0001},
}

@misc{cvx,
  author       = {CVX Research, Inc.},
  title        = {{CVX}: Matlab Software for Disciplined Convex Programming, version 2.0},
  howpublished = {\url{https://cvxr.com/cvx}},
  month        = aug,
  year         = 2012
}

@incollection{gb08,
  author    = {M. Grant and S. Boyd},
  title     = {Graph implementations for nonsmooth convex programs},
  booktitle = {Recent Advances in Learning and Control},
  series    = {Lecture Notes in Control and Information Sciences},
  editor    = {V. Blondel and S. Boyd and H. Kimura},
  publisher = {Springer-Verlag Limited},
  pages     = {95--110},
  year      = 2008,
  note      = {\url{http://stanford.edu/~boyd/graph_dcp.html}}
}

@article{poignard_terada_2025,
  title={Sparse factor models of high dimension
},
  author={Poignard, B. and Y. Terada},
  journal={Arxiv},
  volume={https://arxiv.org/abs/2307.05952},
  number={},
  pages={},
  year={2025},
  publisher={},
}

@article{poignard2023_jtsa,
  title={High-dimensional sparse multivariate stochastic volatility models},
  author={Poignard, B. and M. Asai},
  journal={Journal of Time Series Analysis},
  volume={44},
  number={1},
  pages={4--22},
  year={2023},
  publisher={Wiley},
doi = {10.1111/jtsa.12647},
}

@article{chang2002,
  title={On the asymptotics of the adf tests for unit roots},
  author={Chang, Y. and J.Y. Park},
  journal={Econometric Reviews},
  volume={21},
  number={4},
  pages={431--447},
  year={2002},
  publisher={Taylor \& Francis},
doi = {10.1081/ETC-120015385},
}

@article{poignard2020,
  title={Statistical analysis of sparse approximate factor models},
  author={Poignard, B. and Y. Terada},
  journal={Electronic Journal of Statistics},
  volume={14},
  number={2},
  pages={3315--3365},
  year={2020},
  publisher={Institute of Mathematical Statistics},
doi={10.1214/20-EJS1745},
}

@article{poignard2020aism,
  title={Asymptotic theory of the adaptive Sparse Group Lasso},
  author={Poignard, B.},
  journal={Annals of the Institute of Statistical Mathematics},
  volume={72},
  number={},
  pages={297--328},
  year={2020},
  publisher={Springer}
}

@article{engle1995,
  title={Multivariate simultaneous generalized arch},
  author={Engle, R.F. and K.F. Kroner},
  journal={Econometric Theory},
  volume={11},
  number={1},
  pages={122--150},
  year={1995},
  publisher={Cambridge University Press},
doi={10.1017/S0266466600009063},
}

@article{deNard2021,
  title={Factor models for portfolio selection in large dimensions: The good, the better and the ugly},
  author={De Nard, G. and Ledoit, O. and M. Wolf},
  journal={Journal of Financial Econometrics},
  volume={19},
  number={2},
  pages={236–257},
  year={2021},
  publisher={Oxford University Press},
doi={10.1093/jjfinec/nby033},
}

@article{ledoit2004,
title = {A well-conditioned estimator for large-dimensional covariance matrices},
journal = {Journal of Multivariate Analysis},
volume = {88},
number = {2},
pages = {365-411},
year = {2004},
doi = {10.1016/S0047-259X(03)00096-4},
author = {O. Ledoit and M. Wolf},
}

@article{mucher2025,
title={Sequential estimation of multivariate factor stochastic volatility models},
journal={AStA Advances in Statistical Analysis},
year={2025},
volume={forthcoming},
doi = {10.1007/s10182-025-00536-3},
author = {M{\"u}cher, C. and Calzolari, G. and Halbleib, R.},
}

@article{ray2000,
author = {B.K. Ray and R.S. Tsay},
title = {Long-range Dependence in Daily Stock Volatilities},
journal = {Journal of Business \& Economic Statistics},
volume = {18},
number = {2},
pages = {254--262},
year = {2000},
publisher = {Taylor \& Francis},
doi = {10.1080/07350015.2000.10524867},
}

@article{cerqueira2020,
  title={Evaluating time series forecasting models: An empirical study on performance estimation methods},
  author={Cerqueira, V. and Torgo, L. and I. Mozetic},
  journal={Machine Learning},
  volume={109},
  number={},
  pages={1997--2028},
  year={2020},
  publisher={Springer},
doi={10.1007/s10994-020-05910-7},
}

@article{weide2002,
author = {van der Weide, R.},
title = {{GO}-{GARCH}: a multivariate generalized orthogonal {GARCH} model},
journal = {Journal of Applied Econometrics},
volume = {17},
number = {5},
pages = {549-564},
doi = {https://doi.org/10.1002/jae.688},
year = {2002}
}

@article{fan2013,
  title={Large covariance estimation by thresholding Principal Orthogonal Complements},
  author={Fan, J. and Liao, Y. and Mincheva, M.},
  journal={Journal of the Royal Statistical Society: Series B},
  volume={75},
  number={4},
  pages={603--680},
  year={2013},
  publisher={Wiley},
doi={10.1111/rssb.12016},
}

@article{zou2006adaptive,
  title={The adaptive lasso and its oracle properties},
  author={Zou, H.},
  journal={Journal of the American Statistical Association},
  volume={101},
  number={476},
  pages={1418--1429},
  year={2006},
  publisher={Taylor \& Francis},
  doi={10.1198/016214506000000735},
}

@article{andersen2001,
  title={The distribution of realized stock return volatility},
  author={Andersen, T.G. and Bollerslev, T. and Diebold, F.X. and H. Ebens},
  journal={Journal of Financial Economics},
  volume={61},
  number = {1},
  pages = {43--76},
  year = {2001},
  publisher={Elsevier},
  doi={10.1016/S0304-405X(01)00055-1}
}

\appendix

\section{Asymptotic properties}\label{appendix_asymptotic_prop}

In this section, we provide the large sample properties of $\widetilde{\theta}_1, \widehat{\theta}_1$ and $\widehat{\theta}_2$, which are defined in (\ref{obj_crit_first_step}), (\ref{obj_crit_first_estimator}) and (\ref{obj_crit_second}) in Subsection \ref{subsec:msv_estimation}, respectively. Hereafter, we denote by $\theta_{01}=\text{vec}(\underline{\Psi}_0)\in \Rb^{d_1}$ and $\theta_{02}=(c^{*\top}_0,\theta^\top_{0,\Phi},\theta^\top_{0,\Xi})^\top\in \Rb^{d_2}$ the ``true values'' of the parameters $\theta_1=\text{vec}(\underline{\Psi}) \in \Rb^{d_1}$ and $\theta_2 = (c^{*\top},\theta^\top_{\Phi},\theta^\top_{\Xi})^\top\in\Rb^{d_2}$, $d_1=m+qm^2$, and $d_2=m(1+2m)$. \textcolor{black}{We assume the following uniform consistency on the estimated factors:
\begin{assumption}\label{assumption_unif_consistency_factors}
Assume $T=o(p^2)$. The GLS estimator $\widehat{f}_t = (\widehat{\Lambda}^\top \widehat{\Sigma}_\varepsilon^{-1} \widehat{\Lambda})^{-1} 
\widehat{\Lambda}^\top \widehat{\Sigma}_\varepsilon^{-1} y_t$ of $f_t$ satisfies the uniform consistency $\underset{t\leq T}{\max}\|\widehat{f}_t-f_t\|_2 = O_p(\frac{1}{T^{1/4}}+\frac{T^{1/4}}{\sqrt{p}})$.
\end{assumption}}
\textcolor{black}{From Theorem 6.1 of \cite{bai2012}, $\|\widehat{f}_t-f_t\|_2=O_p(\frac{1}{\sqrt{p}})$ for any $t$. Following the proofs of Proposition 6.1 and Lemma D.1 of \cite{bai2012}, under suitable moment conditions on $f_t$ and $\varepsilon_t$, it can actually be shown that $\underset{t\leq T}{\max}\|\widehat{f}_t-f_t\|_2 = O_p(\frac{1}{T^{1/4}}+\frac{T^{1/4}}{\sqrt{p}})$  is satisfied. In Appendix \ref{appendix_factor_estim}, we specify these moment conditions and derive the uniform consistency of the estimated factor variables. Hereafter, we will denote $b_{T,p}=\frac{1}{T^{1/4}}+\frac{T^{1/4}}{\sqrt{p}}$. In our analysis, $q$ can potentially grow with $T$, whereas the number of factors $m$ is fixed, so that $d_1$ will grow with $q$.} We first consider the existence of a consistent first step estimator $\widetilde{\theta}_1$. We rely on the following conditions.

\begin{assumption}\label{assumption_regularity_matrix}
\textcolor{black}{The set $\Theta_{1}$ is a borelian subset of $\Rb^{d_1}$, $d_1 = m+qm^2$.} For any $F_t$, the map $\ell(F_t;\theta_{01})$ is twice differentiable on $\Theta_{1}$. Let $\nabla_{\theta_1}\ell(F_t;\theta_{01}) = -\text{vec}\big(\big(x_t-\underline{\Psi}_0 Z_{q,t-1}\big)Z^\top_{q,t-1}\big)$ and $\nabla^2_{\theta_{1}\theta^\top_{1}}\ell(F_t;\theta_{01}) = \big(Z_{q,t-1}Z^\top_{q,t-1} \otimes I_m\big)$. Moreover, the maps $F_t\mapsto \nabla_{\theta_1} \ell(F_t;\theta_{01})$ and $F_t\mapsto \nabla^2_{\theta_1\theta^\top_1} \ell(F_t;\theta_{01})$ are differentiable on $\Rb^{m(q+1)}$. $\Hb(\theta_{01}) := \Eb[\nabla^2_{\theta_1 \theta^\top_1}\ell(F_t;\theta_{01})] \in \Rb^{d_1 \times d_1}$ exists, $\exists \alpha_1,\alpha_2$ with $0 < \alpha_1 < \alpha_2 < \infty$ such that $\alpha_1 < \lambda_{\min}(\Hb(\theta_{01})) < \lambda_{\max}(\Hb(\theta_{01})) < \alpha_2$.
\end{assumption}

\begin{assumption}\label{assumption_gradient_second_dev}
\begin{itemize}
\item[(i)] \textcolor{black}{$\sup_q\max_{1 \leq j \leq m,1\leq l \leq qm}\Eb[|x_{j,t-k}Z_{l,q,t-1}|]<\infty$.
    \item[(ii)] There are some functions $\Psi_1(\cdot),\Psi_2(\cdot)$ such that for any $T$: 
{\small{\begin{align*}
&\sup_q\Eb[\underset{1 \leq l \leq qm}{\underset{1 \leq j \leq m}{\max}}|u_{j,t}Z_{l,q,t-1}|\underset{1 \leq l \leq qm}{\underset{1 \leq j \leq m}{\max}}|u_{j,t'}Z_{l,q,t'-1}|] \leq \Psi_1(|t-t'|), \; \text{and} \; \underset{T>0}{\sup}\; \frac{1}{T}\overset{T}{\underset{t,t'=1}{\sum}} \Psi_1(|t-t'|)<\infty,\\
&\sup_q\underset{k,k'>q}{\sup}\Eb[\underset{1 \leq l \leq qm}{\underset{1 \leq j \leq m}{\max}}|x_{j,t-k}Z_{l,q,t-1}-\Eb[x_{j,t-k}Z_{l,q,t-1}]|\underset{1 \leq l \leq qm}{\underset{1 \leq j \leq m}{\max}}|x_{j,t'-k'}Z_{l,q,t'-1}-\Eb[x_{j,t'-k'}Z_{l,q,t'-1}]|]\leq \Psi_2(|t-t'|), \\
& \text{and} \;\; \underset{T>0}{\sup}\; \frac{1}{T}\overset{T}{\underset{t,t'=1}{\sum}} \Psi_2(|t-t'|)<\infty.
\end{align*}}}}
\item[(iii)] \textcolor{black}{Let $\zeta_{kl,t} = \big(Z_{q,t-1}Z^\top_{q,t-1} \otimes I_m\big)_{kl} - \Eb[\big(Z_{q,t-1}Z^\top_{q,t-1} \otimes I_m\big)_{kl}]$. Then there exists some function $\chi(\cdot)$ such that for any $T$: $|\Eb[\zeta_{kl,t} \zeta_{kl,t'}]| \leq \chi(|t-t'|), \;\; \text{and} \;\; \underset{T>0}{\sup} \; \frac{1}{T}\overset{T}{\underset{t,t'=1}{\sum}} \chi(|t-t'|) <\infty$.}
\end{itemize}
\end{assumption}
\begin{assumption}\label{assumption_ell_F_control_gradient_hessian}
\begin{itemize} \textcolor{black}{Let $b_{T,p}=\frac{1}{T^{1/4}}+\frac{T^{1/4}}{\sqrt{p}}$. Then:}
    \item[(i)] \textcolor{black}{There exists some measurable function $\rho(\cdot)$ such that for some $\eps>0$:
    {\small{\begin{equation*}
    \underset{\underset{1\leq l\leq m(q+1)}{1 \leq k\leq d_1}}{\sup}\underset{U_t: \|U_t-F_t\|_2\leq L\sqrt{q}b_{T,p}}{\sup} |\partial^2_{\theta_{1,k}F_{l}}\ell(U_t;\theta_{01})| \leq \rho(F_t), \; \text{and} \;\underset{T>0}{\sup} \; \frac{1}{T}\overset{T}{\underset{t,t'=1}{\sum}} \Eb[\rho(F_t)\rho(F_{t'})] <\infty, \; \text{with}\;L>0.
    \end{equation*}}}}
    \item[(ii)]  \textcolor{black}{There exists some measurable function $\zeta(\cdot)$ such that for some $\eps>0$:
    {\small{\begin{equation*}
    \underset{\underset{1\leq l\leq m(q+1)}{1 \leq k,k'\leq d_1}}{\sup}  \underset{U_t: \|U_t-F_t\|_2\leq L\sqrt{q}b_{T,p}}{\sup}|\partial^3_{\theta_{1,k}\theta_{1,k'}F_{l}}\ell(U_t;\theta_{01})| \leq \zeta(F_t), \; \text{and}
\;  \underset{T>0}{\sup} \; \frac{1}{T}\overset{T}{\underset{t,t'=1}{\sum}} \Eb[\zeta(F_t)\zeta(F_{t'})] <\infty,  \; \text{with}\;L>0.
    \end{equation*}}}}
\end{itemize}
\end{assumption}

\textcolor{black}{Assumption \ref{assumption_regularity_matrix} is} standard for M-estimation. The conditions stated in Assumption \ref{assumption_gradient_second_dev} are moment conditions in the same spirit as \cite{poignard2020aism}. Assumption \ref{assumption_ell_F_control_gradient_hessian} will allow us to control the estimation error originating from the estimation of the factor variable.

\begin{theorem}\label{bound_proba_first_step_estimator}
Under Assumptions \ref{assumption_unif_consistency_factors}-\ref{assumption_ell_F_control_gradient_hessian}, \textcolor{black}{$\sqrt{q}b_{T,p} \rightarrow 0$, $q^4b^2_{T,p}=o(T)$}, there exists a sequence $(\widetilde{\theta}_1)$ of solutions of (\ref{obj_crit_first_estimator}) that satisfies $\|\widetilde{\theta}_1-\theta_{01}\|_2 = O_p\big(\textcolor{black}{\sqrt{q/T} + \sqrt{q/T}\big(\|\mathbf{\Psi}_ {0q}\|_{s,1}+q\,b_{T,p}\big)+\sqrt{q}\sum_{k>q}\|\Psi_ {0k}\|_{s}e_{k,\max}}\big)$, \textcolor{black}{where $\|\mathbf{\Psi}_
{0q}\|_{s,1}:=\sum_{k>q}\|\Psi_{0k}\|_s$ and $e_{k,\max}=\underset{q}{\sup}
\underset{1\leq j \leq m}{\underset{1 \leq l \leq 1+qm}{\max}}\Eb[|x_{j,t-k}Z_{l,q,t-1}|]$, with $Z_{q,t-1}=(1,x^\top_{t-1},\ldots,x^\top_{t-q})^\top$}.
\end{theorem}
\textcolor{black}{\begin{remark}
The asymptotic bound of $\widetilde{\theta}_1$ depends on several terms: $\sqrt{q/T}$ is the standard rate we would have obtained in the context of linear regression with a diverging number of parameters; $\sqrt{q/T}\|\mathbf{\Psi}_ {0q}\|_{s,1}$ represents the truncation bias in the approximation by VAR($q$); $\sqrt{q}\sum_{k>q}\|\Psi_ {0k}\|_{s}e_{k,\max}$ also relates to the truncation bias, \textcolor{black}{and is connected to the population level score function of the non-penalized loss at the true parameter}; $\sqrt{q/T}qb_{T,p}$ relates to the estimation error originating from the factor variables. For $q$ sufficiently large and under exponential parameter decay, $\|\mathbf{\Psi}_ {0q}\|_{s,1}$ will become negligible. Indeed, if one assumes that $\exists \kappa>0$ and $0 \leq \beta <1$ such that $\|\Psi_{0k}\|_s \leq \kappa \beta^k$ for any $k \geq 1$, then $\sum_{k>q}\|\Psi_{0k}\|_s \leq \kappa \frac{\beta^q}{1-\beta}$. Take $q = L_1\log(T)$ with $L_1>0$ constant, then $\sum_{k>q}\|\Psi_{0k}\|_s \leq \frac{\kappa}{1-\beta}\exp(-L_1\log(T)\log(1/\beta)) \rightarrow 0$ as $T\rightarrow \infty$. The exponential decay follows from the stability and invertibility of the VARMA(1,1).
Moreover, under $\log(T)^2T=o(p^2)$, then $\sqrt{q} b_{T,p} \rightarrow 0$ and $q^4b^2_{T,p}=o(T)$. 
An alternative choice for $q$ is $q=L_1 T^{c}$, with $c>0$ a suitable constant so that $\sqrt{q} b_{T,p} \rightarrow 0$ and $q^4b^2_{T,p}=o(T)$.
\end{remark}}

\begin{proof}[Proof of Theorem \ref{bound_proba_first_step_estimator}]
Let \textcolor{black}{$\nu_T = \textcolor{black}{\sqrt{q/T} + \sqrt{q/T}\|\mathbf{\Psi}_ {0q}\|_{s,1} + \sqrt{q/T}\,q\,b_{T,p} + \sqrt{q}\sum_{k>q}\|\Psi_{0k}\|_se_{k,\max}}$, with $e_{k,\max}=\sup_q\max_{1\leq j \leq m,1 \leq l \leq qm}\Eb[|x_{j,t-k}Z_{l,q,t-1}|]$}. We work with $m$ fixed. Now, we would like to prove that for any $\eps > 0$, there exists $C_{\eps} > 0$ such that $\Pb(\cfrac{1}{\nu_T}\|\widetilde{\theta}_1 - \theta_{01}\|_2 > C_{\eps}) < \eps$.
We have
\begin{equation*}
\Pb(\cfrac{1}{\nu_T} \|\widetilde{\theta}_1 - \theta_{01}\|_2 > C_{\eps}) \leq \Pb(\exists \uu, \|\uu\|_2 = C_{\eps}: \Lb_T(\widehat{\mathbf{F}};\theta_{01}+\nu_T\uu) \leq \Lb_T(\widehat{\mathbf{F}};\theta_{01})),
\end{equation*}
which implies that there is a minimum in the ball $\{\theta_{01}+\nu_T\uu, \|\uu\|_2 \leq C_{\eps}\}$ so that the minimum $\widetilde{\theta}_1$ satisfies $\|\widetilde{\theta}_1 - \theta_{01}\|_2 = O_p(\nu_T)$. Now by a Taylor expansion of the loss, we obtain
\begin{eqnarray*}
\frac{1}{T}\Lb_T(\widehat{\mathbf{F}};\theta_{01}+\nu_T\uu)-\frac{1}{T}\Lb_T(\widehat{\mathbf{F}};\theta_{01})= \nu_T \uu^\top \nabla_{\theta_1} \frac{1}{T}\Lb_T(\widehat{\mathbf{F}};\theta_{01}) + \frac{\nu^2_T}{2} \uu^\top \nabla^2_{\theta_1 \theta^\top_1} \frac{1}{T}\Lb_T(\widehat{\mathbf{F}};\theta_{01})\uu=:T_1+T_2.
\end{eqnarray*}
We want to prove
\begin{equation} \label{bound_obj}
\Pb(\exists \uu, \|\uu\|_2 = C_{\eps} : T_1+T_2 \leq 0) < \eps.
\end{equation}
Let us consider \textcolor{black}{$T_1$}. By Cauchy-Schwarz inequality, we get
\begin{eqnarray*}
\underset{\uu:\|\uu\|_2=C_{\eps}}{\sup}|\uu^{\top}\nabla_{\theta} \frac{1}{T}\Lb_T(\widehat{\mathbf{F}};\theta_{01})| \leq \underset{\uu:\|\uu\|_2=C_{\eps}}{\sup}\|\uu\|_2\|\nabla_{\theta} \frac{1}{T}\Lb_T(\widehat{\mathbf{F}};\theta_{01})\|_2 .
\end{eqnarray*}
Recalling that $F_t = (f^\top_t,\ldots,f^\top_{t-q})^\top$, by a Taylor expansion, for some random vector $f^*_t$ such that $\|f^*_t-f_t\|_2 \leq \|\widehat{f}_t-f_t\|_2, t = 1,\ldots,T$, then we have element-by-element $k=1,\ldots,d_1$:
\begin{eqnarray*}
\partial_{\theta_{1,k}} \frac{1}{T}\Lb_T(\widehat{\mathbf{F}};\theta_{01})=\partial_{\theta_{1,k}} \frac{1}{T}\Lb_T(\mathbf{F};\theta_{01})+ \frac{1}{T}\overset{T}{\underset{t=1}{\sum}} \overset{m(q+1)}{\underset{l=1}{\sum}}\partial^2_{\theta_{1,k}F_{l}}\ell(F^*_t;\theta_{01})(\widehat{F}_{t,l}-F_{t,l}),
\end{eqnarray*}
where $F_{t,l} \in \Rb$ denotes the $l$-th element of $F_t \in \Rb^{m(q+1)}$, $F^*_t=(f^{*\top}_t,\ldots,f^{*\top}_{t-q})^\top$. Therefore
\begin{eqnarray*}
\textcolor{black}{\|\nabla_{\theta_{1}} \frac{1}{T}\Lb_T(\widehat{\mathbf{F}};\theta_{01})\|_2 \leq\|\nabla_{\theta_{1}} \frac{1}{T}\Lb_T(\mathbf{F};\theta_{01})\|_2+ \|\frac{1}{T}\overset{T}{\underset{t=1}{\sum}} \nabla^2_{\theta_{1}F^\top}\ell(F^*_t;\theta_{01})\|_F\sqrt{q+1}\underset{t\leq T}{\max}\|\widehat{f}_{t}-f_{t}\|_{\textcolor{black}{2}}}.
\end{eqnarray*}
For $a >0$, we have
\begin{eqnarray*}
\lefteqn{\Pb(\underset{\uu:\|\uu\|_2=C_{\eps}}{\sup}|\nu_T\uu^{\top}\nabla_{\theta_1} \frac{1}{T}\Lb_T(\widehat{\mathbf{F}};\theta_{01})| > a) }\\
&\leq &\Pb(\nu_T\|\nabla_{\theta_1} \frac{1}{T}\Lb_T(\widehat{\mathbf{F}};\theta_{01})\|_2 > \frac{a}{C_{\eps}}) \\
& \leq & \textcolor{black}{\Pb(\nu_T\|\nabla_{\theta_{1}} \frac{1}{T}\Lb_T(\mathbf{F};\theta_{01})\|_2> \frac{a}{2C_{\eps}})+ \Pb(\|\frac{1}{T}\overset{T}{\underset{t=1}{\sum}} \nabla^2_{\theta_{1}F^\top}\ell(F^*_t;\theta_{01})\|_F\sqrt{q+1}\underset{t\leq T}{\max}\|\widehat{f}_{t}-f_{t}\|_{\textcolor{black}{2}}>\frac{a}{2C_{\eps}})}\\
&=:& \textcolor{black}{L_1+L_2.}
\end{eqnarray*}
\textcolor{black}{For any given $q$, recall that $u^{(q)}_t=u_t+\sum_{k>q}\Psi_{0k} x_{t-k}$. We have:
\begin{equation*}
\nabla_{\theta_1}\frac{1}{T}\Lb_T(\mathbf{F};\theta_0)=-\frac{1}{T}\text{vec}\big(\sum^T_{t=1}(x_t-\underline{\Psi}_0Z_{q,t-1})Z^\top_{q,t-1}\big) = -\frac{1}{T}\text{vec}\big(\sum^T_{t=1}(u_t+\sum_{k>q}\Psi_{0k}x_{t-k})Z^\top_{q,t-1}\big).
\end{equation*}
We deduce
\begin{eqnarray*}
\lefteqn{\|\nabla_{\theta_1}\frac{1}{T}\Lb_T(\mathbf{F};\theta_0)\|_2 \leq \|\frac{1}{T}\sum^T_{t=1}u_tZ^\top_{q,t-1}\|_F}\\
&&+ \|\frac{1}{T}\sum^T_{t=1}(\sum_{k>q}\Psi_{0k}x_{t-k})Z^\top_{q,t-1}-\Eb[(\sum_{k>q}\Psi_{0k}x_{t-k})Z^\top_{q,t-1}]\|_F +\|\Eb[(\sum_{k>q}\Psi_{0k}x_{t-k})Z^\top_{q,t-1}]\|_F,
\end{eqnarray*}
and 
\begin{eqnarray*}
\lefteqn{L_1 \leq \Pb(\nu_T\|\frac{1}{T}\sum^T_{t=1}u_tZ^\top_{q,t-1}\|_F> \frac{a}{4C_{\eps}}) }\\
& & +  \Pb(\nu_T\|\frac{1}{T}\sum^T_{t=1}(\sum_{k>q}\Psi_{0k}x_{t-k})Z^\top_{q,t-1}-\Eb[(\sum_{k>q}\Psi_{0k}x_{t-k})Z^\top_{q,t-1}]\|_F> \frac{a}{4C_{\eps}})=:L_{11}+L_{12}.
\end{eqnarray*}
Since $\Eb[u_{j,t}Z_{l,q,t-1}]=0$ for any $j=1,\ldots,m$ and $l=1,\ldots,qm$, we have
\begin{equation*}
L_{11}\leq d_1 \frac{\nu^2_T16C^2_{\eps}}{a^2}\frac{1}{T^2}\sum^{T}_{t,t'=1} \Eb[\max_{1 \leq j \leq m,1 \leq l \leq qm}|u_{j,t}Z_{l,q,t-1}|\max_{1 \leq j \leq m,1 \leq l \leq qm}|u_{j,t'}Z_{l,q,t'-1}|].
\end{equation*}
Furthermore, using $\|AB\|_F \leq \|A\|_s\|B\|_F$, we have
\begin{equation*}
\|\frac{1}{T}\sum^T_{t=1}(\sum_{k>q}\Psi_{0k}x_{t-k})Z^\top_{q,t-1}-\Eb[(\sum_{k>q}\Psi_{0k}x_{t-k})Z^\top_{q,t-1}]\|_F \leq \sum_{k>q}\|\Psi_{0k}\|_s \|\frac{1}{T}\sum^T_{t=1}x_{t-k}Z^\top_{q,t-1}-\Eb[x_{t-k}Z^\top_{q,t-1}]\|_F.
\end{equation*}
We deduce
\begin{eqnarray*}
L_{12} \leq \frac{\nu^2_T16C^2_{\eps}d_1}{a^2}\frac{1}{T^2}\Eb[\Big(\sum_{k>q}\|\Psi_{0k}\|_s\sum^T_{t=1}\underset{1 \leq j \leq m,1 \leq l \leq qm}{\max}|x_{j,t-k}Z_{l,q,t-1}-\Eb[x_{j,t-k}Z_{l,q,t-1}]|\Big)^2].
\end{eqnarray*}
Therefore, under Assumption \ref{assumption_gradient_second_dev}-(ii), we conclude
{\small{\begin{eqnarray*}
\lefteqn{L_{1} \leq d_1 \frac{\nu^2_T16C^2_{\eps}}{a^2}\frac{1}{T^2}\sum^{T}_{t,t'=1} \Eb[\max_{1 \leq j \leq m,1 \leq l \leq qm}|u_{j,t}Z_{l,q,t-1}|\max_{1 \leq j \leq m,1 \leq l \leq qm}|u_{j,t'}Z_{l,q,t'-1}|]}\\
&  & + d_1\frac{\nu^2_T16C^2_{\eps}}{a^2}\frac{1}{T^2} \sum_{k>q}\sum_{k'>q}\|\Psi_{0k}\|_s\|\Psi_{0k'}\|_s \\
& & \times \sum^T_{t,t'=1}\Eb[\underset{1 \leq j \leq m,1 \leq l \leq qm}{\max}|x_{t-k,j}Z_{q,t-1,l}-\Eb[x_{j,t-k}Z_{l,q,t-1}]|\underset{1 \leq j \leq m,1 \leq l \leq qm}{\max}|x_{j,t'-k'}Z_{l,q,t'-1}-\Eb[x_{j,t'-k'}Z_{l,q,t'-1}]|] \\
& \leq & K_0q\frac{\nu^2_T C^2_{\eps}}{a^2}\frac{1}{T} \Big(K_1+K_2\|\mathbf{\Psi}_{0q}\|^2_{s,1}\Big),
\end{eqnarray*}}}
with $K_0,K_1,K_2>0$ finite, $\|\mathbf{\Psi}_ {0q}\|_{s,1}:=\sum_{k>q}\|\Psi_{0k}\|_s$.
 Moreover, under Assumption \ref{assumption_ell_F_control_gradient_hessian}-(i), we have} 
\begin{eqnarray*}
L_2 \leq \textcolor{black}{d_1m(q+1) \frac{\nu^2_T4 C^2_{\eps}}{a^2}  (q+1)C^2_{\kappa}b^2_{T,p}\frac{1}{T^2}\overset{T}{\underset{t,t'=1}{\sum}}\Eb[ \rho(F_t)\rho(F_{t'})]+\Pb(\underset{t\leq T}{\max}\|\widehat{f}_{t}-f_{t}\|_{\textcolor{black}{2}}>C_{\kappa}b_{T,p})},
\end{eqnarray*}
with $C_\kappa >0$ finite. 
Therefore, putting the pieces together, we obtain
\begin{eqnarray*}
\lefteqn{\Pb(\underset{\uu:\|\uu\|_2=C_{\eps}}{\sup}\nu_T|\uu^{\top}\nabla_{\theta_1} \frac{1}{T}\Lb_T(\widehat{\mathbf{F}};\theta_{0,T})| > a) }\\
&\leq & K_0q\frac{\nu^2_T C^2_{\eps}}{a^2}\frac{1}{T}\Big(K_1+K_2\|\mathbf{\Psi}_{0q}\|^2_{s,1}\Big)+ K_3\frac{\nu^2_TC^2_{\eps}}{a^2} q m\textcolor{black}{\frac{1}{T}(q+1)^2C^2_{\kappa}b^2_{T,p}\frac{1}{T}\overset{T}{\underset{t,t'=1}{\sum}}\Eb[ \rho(F_t)\rho(F_{t'})]} + \kappa,
\end{eqnarray*}
where $\kappa \rightarrow 0$, \textcolor{black}{$K_3>0$ finite}. \textcolor{black}{Moreover, since $\|\Eb[(\sum_{k>q}\Psi_{0k}x_{t-k})Z^\top_{q,t-1}]\|_F \leq \sqrt{m(1+qm)}\sum_{k>q}\|\Psi_{0k}\|_se_{k,\max}$ with $e_{k,\max}=\sup_q\max_{1\leq j \leq m,1 \leq l \leq 1+qm}\Eb[|x_{j,t-k}Z_{l,q,t-1}|]$}, then, with $m$ fixed, we have $$T_1 = O_p(\big(\textcolor{black}{\sqrt{q/T}+\sqrt{q/T}\|\mathbf{\Psi}_ {0q}\|_{s,1}+\sqrt{q/T} \,q\,b_{T,p}+\sqrt{q}\sum_{k>q}\|\Psi_{0k}\|_se_{k,\max}}\big)\nu_T) \|\uu\|_2.$$
The \textcolor{black}{term $T_2$} that can be rewritten as $\uu^\top \nabla^2_{\theta_1\theta^\top_1} \frac{1}{T}\Lb_T(\widehat{\mathbf{F}};\theta_{01})\uu = \uu^\top \Eb[\nabla^2_{\theta_1 \theta^\top_1} \ell(F_t;\theta_{01})]\uu + \Rc_T(\theta_{01})$,
where $\Rc_T(\theta_{0,T})= \overset{d_1}{\underset{k,k'=1}{\sum}}\uu_k\uu_{k'}\big\{\partial^2_{\theta_{1,k}\theta_{1,k'}}\frac{1}{T}\Lb_T(\widehat{\mathbf{F}};\theta_{01})-\Eb[\partial^2_{\theta_{1,k}\theta_{1,k'}}\ell(F_t;\theta_{01})]\big\}$. By a Taylor expansion, for any $k,k'=1,\ldots,\textcolor{black}{d_1}$:
\textcolor{black}{\begin{eqnarray*}
\lefteqn{\overset{d_1}{\underset{k,k'=1}{\sum}}\uu_k\uu_{k'}\partial^2_{\theta_{1,k}\theta_{1,k'}}\frac{1}{T}\Lb_T(\widehat{\mathbf{F}};\theta_{01}) }\\
& = & \overset{d_1}{\underset{k,k'=1}{\sum}}\uu_k\uu_{k'}\Big(\partial^2_{\theta_{1,k}\theta_{1,k'}}\frac{1}{T}\Lb_T(\mathbf{F};\theta_{01}) + \frac{1}{T}\overset{T}{\underset{t=1}{\sum}} \overset{m(q+1)}{\underset{l=1}{\sum}}\partial^3_{\theta_{1,k}\theta_{1,k'}F_{l}}\ell(F^*_t;\theta_{01})(\widehat{F}_{t,l}-F_{t,l})\Big)\\
& =: & K_1+K_2,
\end{eqnarray*}}
where $F^*_t=(f^{*\top}_t,\ldots,f^{*\top}_{t-q})^\top$ some random vector s.t. $\|f^*_t-f_t\|_2\leq \|\widehat{f}_t-f_t\|_2$ for any $t$. 
First, consider $K_1$. For any $b>0$:
\begin{eqnarray*}
\Pb(\underset{\uu:\|\uu\|_2=C_{\eps}}{\sup}\nu^2_T/2|\overset{d_1}{\underset{k,k'=1}{\sum}}\uu_k\uu_{k'}\big\{\partial^2_{\theta_{1,k}\theta_{1,k'}}\frac{1}{T}\Lb_T(\mathbf{F};\theta_{01})-\Eb[\partial^2_{\theta_{1,k}\theta_{1,k'}}\ell(F_t;\theta_{01})]\big\}|>b/2) \leq \frac{C_0 \nu^4_TC^4_{\eps} d^2_1}{T b^2},
\end{eqnarray*}
for some constant $C_0>0$ under Assumption \ref{assumption_gradient_second_dev}. Moreover, for $K_2$, by Cauchy-Schwarz inequality, we get for any $b>0$:
\textcolor{black}{
\begin{align*}
\Pb(\underset{\uu:\|\uu\|_2=C_{\eps}}{\sup}\nu^2_T/2 \|\uu\|^2_2 \|\frac{1}{T}\overset{T}{\underset{t=1}{\sum}}  \nabla_{F^\top}\big[\nabla^2_{\theta_{1}\theta^\top_{1}}\ell(F^*_t;\theta_{01})\Big](\widehat{F}_{t}-F_{t})\|_2>b)&\\
\leq C_1\frac{\nu^4_TC^4_{\eps}d^2_1q}{b^2T} C^2_{\kappa}(q+1)^2b^2_{T,p} + \Pb(\max_{t \leq T}\|\widehat{f}_t-f_t\|_2>C_{\kappa}b_{T,p})&,
\end{align*}
with $C_1>0$ a finite constant, under Assumption \ref{assumption_ell_F_control_gradient_hessian}-(ii). Under $q^4b^2_{T,p}=o(T)$}, we conclude:
\begin{eqnarray*}
\lefteqn{\nu^2_T\uu^\top \nabla^2_{\theta_1\theta^\top_1} \frac{1}{T}\Lb_T(\widehat{\mathbf{F}};\theta_{01})\uu}\\
&= &\nu^2_T \uu^\top\Eb[\nabla^2_{\theta_{1}\theta^\top_{1}}\ell(F_t;\theta_{01})]\uu + O_p(\textcolor{black}{\big(\textcolor{black}{\sqrt{q/T}+\sqrt{q/T}\|\mathbf{\Psi}_ {0q}\|_{s,1}+\sqrt{q/T}\,q\,b_{T,p}}+\sqrt{q}\sum_{k>q}\|\Psi_{0k}\|_{s}e_{k,\max}\big)})\|\uu\|^2_2 \nu^2_T .
\end{eqnarray*}
Now $T_1+T_2 = \frac{\nu^2_T}{2}\uu^\top \Hb(\theta_{01}) \uu (1+o_p(1))$ and since the latter term is larger than $C^2_{\eps}\lambda_{\min}(\Hb(\theta_{01}))\nu^2_T/2>0$, we deduce (\ref{bound_obj}) and so $\|\widetilde{\theta}_1-\theta_{01}\|=O_p(\nu_T)$.
\end{proof}
This first-step consistent estimator is plugged in (\ref{obj_crit_first_step}). We now show the existence of a consistent penalized estimator $\widehat{\theta}_1$ of (\ref{obj_crit_first_step}).

\begin{theorem}\label{bound_prob_asym}
\textcolor{black}{Under the conditions of Theorem \ref{bound_proba_first_step_estimator},} there exists a sequence $(\widehat{\theta}_1)$ of solutions of problem (\ref{obj_crit_first_step}) that satisfies 
$$\|\widehat{\theta}_1-\theta_{01}\|_2 = \textcolor{black}{O_p\big(\sqrt{q/T} + \sqrt{q/T}\big(\|\mathbf{\Psi}_ {0q}\|_{s,1}+q\,b_{T,p}\big)+\sqrt{q}\sum_{k>q}\|\Psi_{0k}\|_{s}e_{k,\max}+\sqrt{|\Sc|}\lambda_T\big)},$$
\textcolor{black}{where $\|\mathbf{\Psi}_ {0q}\|_{s,1}:=\sum_{k>q}\|\Psi_{0k}\|_s$, $\Sc:=\{1\leq k \leq d_1: \theta_{01,k}\neq 0\}$, $e_{k,\max}=\underset{q}{\sup}
\underset{1\leq j \leq m}{\underset{1 \leq l \leq qm}{\max}}\Eb[|x_{j,t-k}Z_{l,q,t-1}|]$.}
\end{theorem}

\begin{proof}[Proof of Theorem \ref{bound_prob_asym}.]
The proof follows similar steps as in the proof of Theorem \ref{bound_proba_first_step_estimator}. Define $\textcolor{black}{\alpha_T} = \textcolor{black}{\sqrt{q/T} + \sqrt{q/T}\|\mathbf{\Psi}_ {0q}\|_{s,1}+\sqrt{q/T}\,q\,b_{T,p} + \sqrt{q}\sum_{k>q}\|\Psi_{0k}\|_{s}e_{k,\max}+\sqrt{|\Sc|}\lambda_T}$, and $\Lb^{\text{pen}}_T(\widehat{\mathbf{F}};\theta_1) = \Lb_T(\widehat{\mathbf{F}};\theta_1)+T\lambda_T \sum^{d_1}_{k=1}\tau(\widetilde{\theta}_{1,k}) |\theta_{1,k}| $. We would like to prove that for any $\eps > 0$, there exists $C_{\eps} > 0$ such that $\Pb(\cfrac{1}{\textcolor{black}{\alpha_T}}\|\widehat{\theta}_1 - \theta_{01}\|_2 > C_{\eps}) < \eps$.
We have
\begin{equation*}
\Pb(\cfrac{1}{\textcolor{black}{\alpha_T}} \|\widehat{\theta}_1 - \theta_{01}\|_2 > C_{\eps}) \leq \Pb(\exists \uu, \|\uu\|_2 = C_{\eps}: \Lb^{\text{pen}}_T(\widehat{\mathbf{F}};\theta_{01}+\textcolor{black}{\alpha_T}\uu) \leq \Lb^{\text{pen}}_T(\widehat{\mathbf{F}};\theta_{01})),
\end{equation*}
which implies that there is a minimum in the ball $\{\theta_{01}+\textcolor{black}{\alpha_T}\uu, \|\uu\|_2 \leq C_{\eps}\}$ so that the minimum $\widehat{\theta}_1$ satisfies $\|\widehat{\theta}_1 - \theta_{01}\|_2 = O_p(\textcolor{black}{\alpha_T})$. Now by a Taylor expansion of the penalized loss, we obtain
\begin{eqnarray*}
\lefteqn{\frac{1}{T}\Lb^{\text{pen}}_T(\widehat{\mathbf{F}};\theta_{01}+\textcolor{black}{\alpha_T}\uu)-\frac{1}{T}\Lb^{\text{pen}}_T(\widehat{\mathbf{F}};\theta_{01}) }\\
& \geq &  \textcolor{black}{\alpha_T} \uu^\top \nabla_{\theta_1} \frac{1}{T}\Lb_T(\widehat{\mathbf{F}};\theta_{01}) + \frac{\textcolor{black}{\alpha^2_T}}{2} \uu^\top \nabla^2_{\theta_1 \theta^\top_1} \frac{1}{T}\Lb_T(\widehat{\mathbf{F}};\theta_{01})\uu + \lambda_T \underset{k \in \Sc}{\sum}\tau(\widetilde{\theta}_{1,k}) \big\{|\theta_{01,k}+\textcolor{black}{\alpha_T}\uu_k|-|\theta_{01,k}|\big\}\\
& =: & T_1+T_2+T_3.
\end{eqnarray*}
We want to prove
\begin{eqnarray} \label{bound_obj_pen}
\Pb(\exists \uu, \|\uu\|_2 = C_{\eps} :T_1+T_2+T_3\leq 0) < \eps.
\end{eqnarray}
$T_1,T_2$ can be treated as in the proof of Theorem \ref{bound_proba_first_step_estimator}. As for \textcolor{black}{$T_3$}, note that we have $|\widetilde{\theta}_{1,k}|^{-\gamma}= O_p(1)$ for $k\in\Sc$ by consistency in Theorem \ref{bound_proba_first_step_estimator} and $\theta_{01,k}\neq 0$, so we get
\textcolor{black}{\begin{eqnarray*}
|\lambda_T \underset{k \in \Sc}{\sum}|\widetilde{\theta}_{1,k}|^{-\gamma} \big\{|\theta_{01,k}+\textcolor{black}{\alpha_T}\uu_k|-|\theta_{01,k}|\big\}|\leq\lambda_T \textcolor{black}{\alpha_T} \sqrt{|\Sc|}\|\uu\|_2\max_{k \in\Sc}|\tilde{\theta}_{1,k}|^{-\gamma} \leq \lambda_T\textcolor{black}{\alpha_T}\sqrt{|\Sc|}\|\uu\|_2O_p(1).
\end{eqnarray*}
Therefore, $|T_3| = \|\uu\|_2 O_p(\lambda_T\textcolor{black}{\alpha_T}\sqrt{|\Sc|})\leq \|\uu\|_2 O_p(\alpha^2_T)$.} We deduce
\begin{equation*}
T_1+T_2+T_3 = \frac{\nu^2_T}{2}\uu^\top \Hb(\theta_{01})\uu (1+o_p(1)),
\end{equation*}
and since the latter term is larger than $C^2_{\eps}\lambda_{\min}(\Hb(\theta_{01}))\textcolor{black}{\alpha^2_T}/2>0$, we deduce (\ref{bound_obj_pen}) and so we conclude $\|\widehat{\theta}_1-\theta_{01}\|_2=O_p(\textcolor{black}{\alpha_T})$.
\end{proof}

\textcolor{black}{Hereafter, we will denote $\nu_T=\sqrt{q/T} + \sqrt{q/T}\big(\|\mathbf{\Psi}_ {0q}\|_{s,1}+q\,b_{T,p}\big)+\sqrt{q}\sum_{k>q}\|\Psi_{0k}\|_{s}e_{k,\max}$. We now consider the recovery of the true zero parameters. } 
\begin{theorem}\label{sparsistency}
Under the conditions of Theorem \ref{bound_prob_asym}, assume \textcolor{black}{$\min_{k \in \Sc}|\theta_{01,k}|/\lambda_T\rightarrow \infty$} and that \textcolor{black}{$\sqrt{q}\lambda_T=o(\nu_T)$,} $\textcolor{black}{\nu^{-(1+\gamma)}_T}\lambda_T \rightarrow \infty$ hold, then with probability tending to one, the estimator $\widehat{\theta}_1$ of Theorem \ref{bound_prob_asym} satisfies $\widehat{\theta}_{1\Sc^c}=0$. 
\end{theorem}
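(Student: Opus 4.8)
The plan is to argue by contradiction from the first-order (KKT) conditions of the penalized criterion \eqref{obj_crit_first_step}. Suppose that, on an event of probability bounded away from zero, there is an index $k\in\Sc^c$ with $\widehat{\theta}_{1,k}\neq 0$. Since the adaptive LASSO penalty is differentiable away from the origin, stationarity of $\widehat{\theta}_1$ at coordinate $k$ forces $|\partial_{\theta_{1,k}}\Lb_T(\widehat{\mathbf F};\widehat{\theta}_1)| = T\lambda_T\,\tau(\widetilde{\theta}_{1,k})$. I would then show that the left-hand side is $O_p(\sqrt T)$ whereas the right-hand side is of strictly larger order, namely $\sqrt T\cdot T^{(1+\gamma)/2}\lambda_T$ up to a factor bounded below in probability; since $T^{(1+\gamma)/2}\lambda_T\to\infty$, this is impossible for $T$ large, and a union bound over the finitely many coordinates of $\Sc^c$ (recall $m$, hence $d_1$, is fixed) gives $\Pb(\widehat{\theta}_{1\Sc^c}=0)\to 1$.

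For the left-hand side, observe that $\Lb_T(\widehat{\mathbf F};\cdot)$ is exactly quadratic in $\theta_1$ with constant Hessian $\nabla^2_{\theta_1\theta_1^\top}\Lb_T(\widehat{\mathbf F}) = \sum_{t=1}^T (Z_{q,t-1}Z_{q,t-1}^\top\otimes I_m)$, so
\[
\partial_{\theta_{1,k}}\Lb_T(\widehat{\mathbf F};\widehat{\theta}_1) = \partial_{\theta_{1,k}}\Lb_T(\widehat{\mathbf F};\theta_{01}) + \big[\nabla^2_{\theta_1\theta_1^\top}\Lb_T(\widehat{\mathbf F})(\widehat{\theta}_1-\theta_{01})\big]_k .
\]
The Hessian is $O_p(T)$ by Assumption \ref{assumption_regularity_matrix}, after substituting $\widehat{\mathbf F}$ for $\mathbf F$ exactly as in the $K_1,K_2,K_3$ bookkeeping of the proof of Theorem \ref{bound_proba_first_step_estimator} (using Assumption \ref{assumption_ell_F_control_hessian} and Assumption \ref{unif_consistency_factors}), and $\|\widehat{\theta}_1-\theta_{01}\|_2=O_p(T^{-1/2})$ by Theorem \ref{bound_prob_asym}, so the second term is $O_p(\sqrt T)$. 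For $\partial_{\theta_{1,k}}\Lb_T(\widehat{\mathbf F};\theta_{01})$, the same factor-substitution Taylor expansion as in Theorem \ref{bound_proba_first_step_estimator} writes it as $\partial_{\theta_{1,k}}\Lb_T(\mathbf F;\theta_{01})$ plus remainders bounded in terms of $\max_{t\le T}\|\widehat f_t-f_t\|$ and its square; since $\Eb[\partial_{\theta_{1,k}}\ell(F_t;\theta_{01})]=0$ with the covariance summability of Assumption \ref{assumption_gradient_second_dev}-(i), the leading piece is $O_p(\sqrt T)$, while the remainders are $o_p(\sqrt T)$ under Assumptions \ref{assumption_ell_F_control_gradient}--\ref{assumption_ell_F_control_hessian} and \ref{unif_consistency_factors}. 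Hence $|\partial_{\theta_{1,k}}\Lb_T(\widehat{\mathbf F};\widehat{\theta}_1)|=O_p(\sqrt T)$.

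For the right-hand side, since $k\in\Sc^c$ we have $\theta_{01,k}=0$, so $|\widetilde{\theta}_{1,k}|\le\|\widetilde{\theta}_1-\theta_{01}\|_2=O_p(T^{-1/2})$ by Theorem \ref{bound_proba_first_step_estimator}; thus $(\sqrt T\,|\widetilde{\theta}_{1,k}|)^{-\gamma}$ is bounded below away from zero in probability, i.e.\ $\tau(\widetilde{\theta}_{1,k})=|\widetilde{\theta}_{1,k}|^{-\gamma}\ge c\,T^{\gamma/2}$ with probability close to one for some $c>0$. Therefore $T\lambda_T\tau(\widetilde{\theta}_{1,k})\ge c\,T^{1+\gamma/2}\lambda_T = c\,\sqrt T\,\big(T^{(1+\gamma)/2}\lambda_T\big)$, and dividing the KKT equality by $\sqrt T$ we obtain that the left side stays $O_p(1)$ while the right side diverges to infinity, a contradiction. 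This forces $\widehat{\theta}_{1,k}=0$ for every $k\in\Sc^c$ with probability tending to one.

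The main obstacle I anticipate is the bound $\partial_{\theta_{1,k}}\Lb_T(\widehat{\mathbf F};\widehat{\theta}_1)=O_p(\sqrt T)$, because the gradient is evaluated simultaneously at the estimated factors $\widehat f_t$ and at the estimator $\widehat{\theta}_1$ rather than at the truth. Handling this cleanly requires recycling the factor-estimation-error decompositions from the proof of Theorem \ref{bound_proba_first_step_estimator} (the $L_1,L_2,L_3$ and $K_1,K_2,K_3$ terms, leaning on Assumptions \ref{assumption_ell_F_control_gradient}--\ref{assumption_ell_F_control_hessian} and the uniform consistency in Assumption \ref{unif_consistency_factors}) and combining them with the $\sqrt T$-rate of Theorem \ref{bound_prob_asym}; once this bookkeeping is in place, the order comparison that yields sparsistency is immediate.
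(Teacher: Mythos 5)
Your proposal is correct and follows essentially the same route as the paper: both arguments reduce to comparing the loss gradient at the estimator, decomposed as (gradient at $\theta_{01}$) $+$ (Hessian)$\times(\widehat{\theta}_1-\theta_{01}) = O_p(\sqrt{T})$ after the same factor-substitution bookkeeping, against the adaptive penalty derivative $T\lambda_T|\widetilde{\theta}_{1,k}|^{-\gamma}\geq c\,\sqrt{T}\,T^{(1+\gamma)/2}\lambda_T$ for $k\in\Sc^c$. The paper phrases the conclusion as a Fan--Li sign-of-derivative argument while you phrase it as a KKT stationarity contradiction, but these are equivalent formulations of the same order comparison.
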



\begin{proof}[Proof of Theorem \ref{sparsistency}.]
The proof is performed in the same spirit as in \cite{fan2001}. Consider an estimator $\widehat{\theta}_1=(\widehat{\theta}^\top_{1\Sc},\widehat{\theta}^\top_{1\Sc^c})^\top$ of $\theta_{01}$ such that $\|\widehat{\theta}_1-\theta_{01}\|_2 = O_p(\textcolor{black}{\nu_T})$. The correct identification of the zero entries holds as $T \rightarrow \infty$ when
\begin{equation}\label{sparsistency_obj}
\Lb^{\text{pen}}_T(\widehat{\mathbf{F}};(\widehat{\theta}^\top_{1\Sc},0^\top_{\Sc^c})^\top) = \underset{\|\theta_{1\Ac^c}\|_2\leq C\textcolor{black}{\nu_T}}{\arg\,\min} \; \Lb^{\text{pen}}_T(\widehat{\mathbf{F}};(\widehat{\theta}^\top_{1\Sc},\theta^\top_{1\Sc^c})^\top),
\end{equation}
for any constant $C>0$ with probability tending to one. Define $\eps_T := C\textcolor{black}{\nu_T}$. To prove (\ref{sparsistency_obj}), it is sufficient to show that for any $\theta_1 \in \Theta_1$ such that $\|\theta_1-\theta_{01}\|_2\leq \eps_T$, we have with probability tending to one 
\begin{equation*}
\partial_{\theta_{1,k}}\Lb^{\text{pen}}_T(\widehat{\mathbf{F}};\theta_1)>0 \; \text{when} \; 0 < \theta_{1,k} < \eps_T; \; \partial_{\theta_{1,k}}\Lb^{\text{pen}}_T(\widehat{\mathbf{F}};\theta_1)<0 \; \text{when} \; -\eps_T < \theta_{1,k} < 0,
\end{equation*}
for any $k \in \Sc^c$. By a Taylor expansion of the partial derivative around $\theta_{01}$, we obtain
\begin{eqnarray*}
\lefteqn{\partial_{\theta_{1,k}}\Lb^{\text{pen}}_T(\widehat{\mathbf{F}};\theta_1) = \partial_{\theta_{1,k}}\Lb_T(\widehat{\mathbf{F}};\theta_{1})+T\lambda_T\tau(\widetilde{\theta}_{1,k})\text{sgn}(\theta_{1,k})}\\ 
& = & \partial_{\theta_{1,k}}\Lb_T(\widehat{\mathbf{F}};\theta_{01})+\overset{d_1}{\underset{k'=1}{\sum}}\partial^2_{\theta_{1,k} \theta_{1,k'}}\Lb_T(\widehat{\mathbf{F}};\theta_{01}) (\theta_{1,k'}-\theta_{01,k'})+T\lambda_T\tau(\widetilde{\theta}_{1,k})\text{sgn}(\theta_{1,k})\\
& =: & K_1+K_2+K_3.
\end{eqnarray*}
Using the same steps as in the proof of Theorem \ref{bound_prob_asym}, we get $|\frac{1}{T}K_1| = O_p(\textcolor{black}{\nu_T})$. 
\textcolor{black}{The term $K_2$ can be expanded as:
\begin{align*}
\frac{1}{T}K_2&=\overset{d_1}{\underset{k'=1}{\sum}}\Big(\partial^2_{\theta_{1,k} \theta_{1,k'}}\frac{1}{T}\Lb_T(\widehat{\mathbf{F}};\theta_{01})-\Eb[\partial^2_{\theta_{1,k}\theta_{1,k'}}\ell(F_t;\theta_{01})] \Big)(\theta_{1,k'}-\theta_{01,k'})\\
& +\overset{d_1}{\underset{k'=1}{\sum}}\Eb[\partial^2_{\theta_{1,k}\theta_{1,k'}}\ell(F_t;\theta_{01})](\theta_{1,k'}-\theta_{01,k'})=:K_{21}+K_{22}.
\end{align*}
We have the expansion
\begin{eqnarray*}
\lefteqn{\partial^2_{\theta_{1,k} \theta_{1,k'}}\frac{1}{T}\Lb_T(\widehat{\mathbf{F}};\theta_{01})-\Eb[\partial^2_{\theta_{1,k}\theta_{1,k'}}\ell(F_t;\theta_{01})] }\\
& = & \partial^2_{\theta_{1,k} \theta_{1,k'}}\frac{1}{T}\Lb_T(\mathbf{F};\theta_{01})-\Eb[\partial^2_{\theta_{1,k}\theta_{1,k'}}\ell(F_t;\theta_{01})] +\frac{1}{T}\sum^T_{t=1}\sum^{m(q+1)}_{l=1}\partial^3_{\theta_{1,k}\theta_{1,k'}F_l}\ell(F^*_t;\theta_{01})(\widehat{F}_{t,l}-F_{t,l}).
\end{eqnarray*}
Now by Cauchy–Schwarz inequality and using $\|\theta_1-\theta_{01}\|_2=O_p(\nu_T)$,
\begin{align*}
|K_{21}| &\leq O_p(\nu_T)\Big[\sum^{d_1}_{k'=1}\Big(\partial^2_{\theta_{1,k} \theta_{1,k'}}\frac{1}{T}\Lb_T(\widehat{\mathbf{F}};\theta_{01})-\Eb[\partial^2_{\theta_{1,k}\theta_{1,k'}}\ell(F_t;\theta_{01})] \Big)^2\Big]^{1/2}\\
& + O_p(\nu_T)\Big[\sum^{d_1}_{k'=1}\Big(\frac{1}{T}\sum^T_{t=1}\sum^{m(q+1)}_{l=1}\partial^3_{\theta_{1,k}\theta_{1,k'}F_l}\ell(F^*_t;\theta_{01})(\widehat{F}_{t,l}-F_{t,l})\Big)^2\Big]^{1/2}
\end{align*}
and
\begin{equation*}
|K_{22}| \leq O_p(\nu_T)\Big[\sum^{d_1}_{k'=1}\Big(\Eb[\partial^2_{\theta_{1,k}\theta_{1,k'}}\ell(F_t;\theta_{01})] \Big)^2\Big]^{1/2}.
\end{equation*}
Under Assumption \ref{assumption_regularity_matrix}, $|K_{22}|=O_p(\nu_T)$, and by Assumption \ref{assumption_ell_F_control_gradient_hessian}, $|K_{21}|=O_p(\nu_T)$. Moreover, we have that $|\widetilde{\theta}_{1,k}|\leq \|\widetilde{\theta}_1-\theta_{01}\|_2\leq \eps_T$. Therefore, we deduce 
\begin{equation*}
\partial_{\theta_{1,k}}\Lb^{\text{pen}}_T(\widehat{\mathbf{F}};\theta_1) = T\Big\{O_p(\textcolor{black}{\nu_T}) + \lambda_T\nu^{-\gamma}_TC^{-\gamma}\text{sgn}(\theta_{1,k})\Big\}.
\end{equation*}
Under $\nu_T^{-(1+\gamma)}\lambda_T\rightarrow \infty$, the sign of the derivative is determined by the sign of $\theta_{1,k}$. Thus all the zero components of $\theta_{01}$ will be estimated as zero with probability tending to one.}
\end{proof}

We now consider the second step estimator $\widehat{\theta}_2$. We rely on the following conditions.

\begin{assumption}\label{assumption_regularity_matrix_2}
\textcolor{black}{The parameter space $\Theta_{2}$ is a borelian subset of $\Rb^{d_2}$.} For any $F_t$, the map $g(F_t;\theta_{01};\theta_{02})$ is twice differentiable on $\Theta_{2}$. Let $\nabla^2_{\theta_{2}\theta^\top_{2}}g(F_t;\theta_{01};\theta_{02}) = \big(K_{t-1}(\theta_{01})K_{t-1}(\theta_{01})^\top \otimes I_m\big)$, \textcolor{black}{with $K_{t-1}(\theta_{01})=(1,x^\top_{t-1},u^{(q)\top}_{t-1})^\top$, $u^{(q)}_t = u_t +\sum_{m>q}\Psi_{0k}x_{t-k}$, and $\Gamma_0 = (c^*_0,\Phi_0,\Xi_0) \in \Rb^{m\times (1+2m)}$.}
Moreover, for any $k,k'=1,\ldots,d_2$ and $l=1,\ldots,d_1$, the maps $F_t\mapsto \partial_{\theta_{2,k}} g(F_t;\theta_{01};\theta_{02}), F_t\mapsto \partial^2_{\theta_{2,k}\theta_{2,k'}} g(F_t;\theta_{01};\theta_{02})$, $F_t\mapsto \partial^2_{\theta_{2,k}\theta_{1,j}} g(F_t;\theta_{01};\theta_{02})$ and $F_t\mapsto \partial^3_{\theta_{2,k}\theta_{2,k'}\theta_{1,j}} g(F_t;\theta_{01};\theta_{02})$ are differentiable on $\Rb^{m(q+1)}$. The $d_2 \times d_2$ matrix $\Hb(\theta_{01};\theta_{02}) := \Eb[\nabla^2_{\theta_2 \theta^\top_2}g(F_t;\theta_{01};\theta_{02})]$ exists and $ \exists \gamma_1,\gamma_2: 0 < \gamma_1 < \gamma_2 < \infty$ such that $\gamma_1 < \lambda_{\min}(\Hb(\theta_{01};\theta_{02})) < \lambda_{\max}(\Hb(\theta_{01};\theta_{02})) < \gamma_2$.
\end{assumption}

\begin{assumption}\label{assumption_gradient_second_dev_2}
\begin{itemize}
\item[(i)] \textcolor{black}{There are functions $\chi_r(\cdot),r=1,\ldots,6$, such that for any $T$: 
{\small{\begin{align*}
&\Eb[\underset{1 \leq l,j \leq m}{\max}|u_{l,t}x_{j,t-1}|\underset{1 \leq l,j \leq m}{\max}|u_{l,t'}x_{j,t'-1}|] \leq \chi_1(|t-t'|), \\
& \underset{k>q}{\sup}\,\Eb[\underset{1 \leq j,l \leq m}{\max}|x_{j,t}x_{l,t-k'}-\Eb[x_{j,t}x_{l,t-k'}]|\underset{1 \leq j,l \leq m}{\max}|x_{j,t'}x_{l,t'-k'}-\Eb[x_{j,t'}x_{l,t'-k'}]|]\leq \chi_2(|t-t'|)\\
&\Eb[\underset{1 \leq l,j \leq m}{\max}|u_{l,t}u_{j,t-1}|\underset{1 \leq l,j \leq m}{\max}|u_{l,t'}u_{j,t'-1}|] \leq \chi_3(|t-t'|), \\
&\underset{k,k'>q}{\sup}\Eb[\underset{1 \leq j,l \leq m}{\max}|x_{j,t-k}x_{l,t-k'}-\Eb[x_{j,t-k}x_{l,t-k'}]|\underset{1 \leq j,l \leq m}{\max}|x_{j,t'-k}x_{l,t'-k'}-\Eb[x_{j,t'-k}x_{l,t'-k'}]|]\leq \chi_4(|t-t'|), \\
& \sup_{k,k'>q}\Eb[\underset{1 \leq l,j \leq m}{\max}|u_{l,t}x_{j,t-1-k}|\underset{1 \leq l,j\leq m}{\max}|u_{l,t'}x_{j,t'-1-k'}|] \leq \chi_5(|t-t'|),\\
& \sup_{k,k'>q}\Eb[\underset{1 \leq l,j \leq m}{\max}|u_{l,t-1}x_{j,t-1-k}|\underset{1 \leq l,j \leq m}{\max}|u_{l,t-1'}x_{j,t'-1-k'}|] \leq \chi_6(|t-t'|),
\end{align*}}}
and $\underset{T>0}{\sup}\; \frac{1}{T}\overset{T}{\underset{t,t'=1}{\sum}} \chi_r(|t-t'|)<\infty$, $r=1,\ldots,6$.}
\item[(ii)] \textcolor{black}{Let $\partial^2_{\theta_{2,k}\theta_{2,l}}g(F_t;\theta_{01};\theta_{02}) = \big(K_{t-1}(\theta_{01})K_{t-1}(\theta_{01})^\top \otimes I_m\big)_{kl}$, $k,l=1,\cdots,d_2$, and $\nu_{kl,t} = \partial^2_{\theta_{2,k} \theta_{2,l}} g(F_t;\theta_{01};\theta_{02}) - \Eb[\partial^2_{\theta_{2,k} \theta_{2,l}} g(F_t;\theta_{01};\theta_{02})]$. Then there exists some function $\Phi_1(\cdot)$ such that for any $T$: $|\Eb[\nu_{kl,t} \nu_{kl,t'}]| \leq \Phi_1(|t-t'|), \;\; \text{and} \;\; \underset{T>0}{\sup} \; \frac{1}{T}\overset{T}{\underset{t,t'=1}{\sum}} \Phi_1(|t-t'|) <\infty$.}
\end{itemize}
\end{assumption}

\begin{assumption}\label{assumption_ell_F_control_gradient_2}
\begin{itemize}
    \item[(i)] Let $v_t(C) = \underset{\underset{1\leq l \leq d_1}{1\leq k \leq d_2}}{\sup}\underset{\theta_1:\|\theta_1-\theta_{01}\|\leq C\textcolor{black}{\nu_T}}{\sup}\partial^2_{\theta_{2,k}\theta_{1,l}}g(F_t;\theta_1;\theta_{02})$, with $C>0$ a finite constant. Then $\frac{1}{T}\overset{T}{\underset{t,t'=1}{\sum}} \Eb[v_t(C)v_{t'}(C)] <\infty$.
    \item[(ii)] There exists some measurable function $\kappa_1(\cdot)$ such that:
    {\small{\begin{equation*}
    \underset{\underset{1\leq l\leq m(q+1)}{1 \leq k\leq d_2}}{\sup}\underset{U_t: \|U_t-F_t\|_2\leq \textcolor{black}{L\sqrt{q}b_{T,p}}}{\sup} |\partial^2_{\theta_{2,k}F_{l}}g(U_t;\theta_{01};\theta_{02})| \leq \kappa_{1}(F_t), \; \text{and} \;\underset{T>0}{\sup} \; \frac{1}{T}\overset{T}{\underset{t,t'=1}{\sum}} \Eb[\kappa_{1}(F_t)\kappa_{1}(F_{t'})] <\infty, \; \textcolor{black}{\text{with} \; L>0}.
    \end{equation*}}}
    \item[(iii)] Let $C>0, L>0$. There exists some measurable function $\kappa_2(\cdot)$ such that:  
    \begin{equation*}
    \underset{\underset{1\leq j \leq d_1,1\leq l\leq m(q+1)}{1\leq k \leq d_2}}{\sup}\underset{\theta_1:\|\theta_1-\theta_{01}\|_2\leq C\textcolor{black}{\nu_T}}{\sup}\underset{U_t: \|U_t-F_t\|_2\leq \textcolor{black}{L\sqrt{q}b_{T,p}}}{\sup} |\partial^3_{\theta_{2,k}\theta_{1,j}F_{l}}g(U_t;\theta_{1};\theta_{02})| \leq \kappa_{2}(F_t),  
    \end{equation*}
    and $\underset{T>0}{\sup} \; \frac{1}{T}\overset{T}{\underset{t,t'=1}{\sum}} \Eb[\kappa_{2}(F_t)\kappa_{2}(F_{t'})] <\infty$.
\end{itemize}
\end{assumption}

\begin{assumption}\label{assumption_ell_F_control_hessian_2}
\begin{itemize}
    \item[(i)] Let $h_t(C) = \underset{\underset{1\leq l \leq d_1}{1\leq k,k' \leq d_2}}{\sup}\underset{\theta_1:\|\theta_1-\theta_{01}\|_2\leq C\textcolor{black}{\nu_T}}{\sup}\partial^3_{\theta_{2,k}\theta_{2,k'}\theta_{1,l}}g(F_t;\theta_1;\theta_{02})$, with $C>0$ a finite constant. Then $\frac{1}{T}\overset{T}{\underset{t,t'=1}{\sum}} \Eb[h_t(C)h_{t'}(C)] <\infty$.
    \item[(ii)] Let $L>0$. There exists some measurable function $\omega_1(\cdot)$ such that:  
    \begin{equation*}
    \underset{\underset{1\leq l\leq m(q+1)}{1 \leq k,k'\leq d_2}}{\sup}\underset{U_t: \|U_t-F_t\|_2\leq \textcolor{black}{L\sqrt{q}b_{T,p}}}{\sup} |\partial^3_{\theta_{2,k}\theta_{2,k'}F_{l}}g(U_t;\theta_{01};\theta_{02})| \leq \omega_1(F_t), \; \text{and} \;\underset{T>0}{\sup} \; \frac{1}{T}\overset{T}{\underset{t,t'=1}{\sum}} \Eb[\omega_1(F_t)\omega_1(F_{t'})] <\infty. 
    \end{equation*}
    \item[(iii)] Let $C>0, L>0$. There exists some measurable function $\omega_2(\cdot)$ such that:  
    \begin{equation*}
    \underset{\underset{1\leq j \leq d_1,1\leq l\leq m(q+1)}{1\leq k,k' \leq d_2}}{\sup}\underset{\theta_1:\|\theta_1-\theta_{01}\|_2\leq C\textcolor{black}{\nu_T}}{\sup}\underset{U_t: \|U_t-F_t\|_2\leq \textcolor{black}{L\sqrt{q}b_{T,p}}}{\sup} |\partial^4_{\theta_{2,k}\theta_{2,k'}\theta_{1,j}F_{l}}g(U_t;\theta_{1};\theta_{02})| \leq \omega_2(F_t),
    \end{equation*}
    and $\underset{T>0}{\sup} \; \frac{1}{T}\overset{T}{\underset{t,t'=1}{\sum}} \Eb[\omega_2(F_t)\omega_2(F_{t'})] <\infty$.
\end{itemize}
\end{assumption}

\begin{theorem}\label{bound_proba_second_step_estimator}
Under the conditions of Theorem \ref{sparsistency}, under Assumptions \ref{assumption_regularity_matrix_2}-\ref{assumption_ell_F_control_hessian_2}, there exists a sequence $(\widehat{\theta}_2)$ of (\ref{obj_crit_second}) that satisfies 
\textcolor{black}{$$\|\widehat{\theta}_2-\theta_{02}\|_2 = O_p(T^{-1/2}+T^{-1/2}(\|\mathbf{\Psi}_{0q}\|_{s,1}+\|\mathbf{\Psi}_{0q}\|^2_{s,1})+B(\mathbf{\Psi}_{0q})+\nu_T),$$ 
with $\nu_T$ defined in Theorem \ref{sparsistency} and $B(\mathbf{\Psi}_{0q})$ defined by
\begin{align*}
B(\mathbf{\Psi}_{0q}):=&\underset{k>q}{\sum}\|\Psi_{0k}\|_s(\underset{1 \leq j \leq m}{\max}|\Eb[x_{j,t-1-k}]|\vee\underset{1 \leq j,l \leq m}{\max}|\Eb[x_{j,t-1-k}x_{l,t-1}]|)\\
&+\sum_{k>q}\sum_{k'>q}\|\Psi_{0k}\|_s\|\Psi_{0k'}\|_s\underset{1\leq j,l\leq m}{\max}|\Eb[x_{j,t-1-k}x_{l,t-1-k'}]|.
\end{align*}
}
\end{theorem}

\begin{proof}[Proof of Theorem \ref{bound_proba_second_step_estimator}.]
We follow the same steps as in the proof of Theorem \ref{bound_proba_first_step_estimator}. Let us define \textcolor{black}{$\pi_T = T^{-1/2}+T^{-1/2}(\|\mathbf{\Psi}_{0q}\|_{s,1}+\|\mathbf{\Psi}_{0q}\|^2_{s,1})+B(\mathbf{\Psi}_{0q})+\nu_T$, with $\nu_T$ defined in Theorem \ref{sparsistency}, and 
\begin{align*}
B(\mathbf{\Psi}_{0q}):=&\sum_{k>q}\|\Psi_{0k}\|_s(\underset{1 \leq j \leq m}{\max}|\Eb[x_{j,t-1-k}]|\vee\underset{1 \leq j,l \leq m}{\max}|\Eb[x_{j,t-1-k}x_{l,t-1}]|)\\
&+\sum_{k>q}\sum_{k'>q}\|\Psi_{0k}\|_s\|\Psi_{0k'}\|_s\underset{1\leq j,l\leq m}{\max}|\Eb[x_{j,t-1-k}x_{l,t-1-k'}]|.
\end{align*}}
\textcolor{black}{We aim to show that for any $\pi>0$, there exists $C_{\pi}>0$ such that $\Pb(\cfrac{1}{\pi_T} \|\widehat{\theta}_2 - \theta_{02}\|_2 > C_{\pi})<\pi$. We have
\begin{equation*}
\Pb(\cfrac{1}{\pi_T} \|\widehat{\theta}_2 - \theta_{02}\|_2 > C_{\pi}) \leq \Pb(\exists \uu, \|\uu\|_2 = C_{\pi}: \Gb_T(\widehat{\mathbf{F}};\widehat{\theta}_1;\theta_{02}+\pi_T\uu) \leq \Gb_T(\widehat{\mathbf{F}};\widehat{\theta}_1;\theta_{02})),
\end{equation*}
which implies that there is a minimum in the ball $\{\theta_{02}+\pi_T\uu, \|\uu\|_2 \leq C_{\pi}\}$} so that the minimum $\widehat{\theta}_2$ satisfies $\|\widehat{\theta}_2 - \theta_{02}\|_2 = O_p(\pi_T)$. Now by a Taylor expansion of the loss, we obtain
\begin{eqnarray*}
\lefteqn{\frac{1}{T}\Gb_T(\widehat{\mathbf{F}};\widehat{\theta}_1;\theta_{02}+\pi_T\uu)-\frac{1}{T}\Gb_T(\widehat{\mathbf{F}};\widehat{\theta}_1;\theta_{02})}\\
& = & \pi_T \uu^\top \nabla_{\theta_2} \frac{1}{T}\Gb_T(\widehat{\mathbf{F}};\widehat{\theta}_1;\theta_{02}) + \frac{\pi^2_T}{2} \uu^\top \nabla^2_{\theta_2 \theta^\top_2} \frac{1}{T}\Gb_T(\widehat{\mathbf{F}};\widehat{\theta}_1;\theta_{02})\uu=:T_1+T_2.
\end{eqnarray*}
We want to prove
\begin{equation} \label{bound_obj_second_step}
\Pb(\exists \uu, \|\uu\|_2 = C_{\textcolor{black}{\pi}} : T_1+T_2 \leq 0) < \pi.
\end{equation}
Let us consider the term \textcolor{black}{$T_1$}. For any $k=1,\ldots,d_2$, it can be expanded as follows:
{\small{\begin{eqnarray*}
\lefteqn{\partial_{\theta_{2,k}} \frac{1}{T}\Gb_T(\widehat{\mathbf{F}};\widehat{\theta}_{1};\theta_{02}) = \partial_{\theta_{2,k}} \frac{1}{T}\Gb_T(\widehat{\mathbf{F}};\theta_{01};\theta_{02}) + \frac{1}{T}\overset{d_1}{\underset{j=1}{\sum}}\partial^2_{\theta_{2,k}\theta_{1,j}} \Gb_T(\widehat{\mathbf{F}};\theta^*_1;\theta_{02})(\widehat{\theta}_{1,j}-\theta_{01,j})}\\
& = & \partial_{\theta_{2,k}}\frac{1}{T}\Gb_T(\mathbf{F};\theta_{01};\theta_{02}) + \frac{1}{T}\overset{T}{\underset{t=1}{\sum}}\overset{m(q+1)}{\underset{l=1}{\sum}} \partial^2_{\theta_{2,k}F_l} g(F^*_t;\theta_{01};\theta_{02})(\widehat{F}_{t,l}-F_{t,l})  + \frac{1}{T}\overset{d_1}{\underset{j=1}{\sum}}\partial^2_{\theta_{2,k}\theta_{1,j}} \Gb_T(\mathbf{F};\theta^*_1;\theta_{02})(\widehat{\theta}_{1,j}-\theta_{01,j}) \\
& + & \frac{1}{T}\overset{T}{\underset{t=1}{\sum}}\overset{d_1}{\underset{j=1}{\sum}}\overset{m(q+1)}{\underset{l=1}{\sum}}\partial^3_{\theta_{2,k}\theta_{1,j}F_l} g(F^*_t;\theta^*_1;\theta_{02})(\widehat{\theta}_{1,j}-\theta_{01,j})(\widehat{F}_{t,l}-F_{t,l}),
\end{eqnarray*}}}
with $\|\theta^*_1-\theta_{01}\|_2\leq\|\widehat{\theta}_1-\theta_{01}\|_2$ and $\|f^*_t-f_t\|_2 \leq \|\widehat{f}_t-f_t\|_2, t = 1,\ldots,T$. 
For the first term of the expansion, we have $\nabla_{\theta_2}g(F_t;\theta_{01};\theta_{02}) = -\text{vec}(w_tK_{t-1}(\theta_{01})^\top)$ with $w_t=x_t-c^*-\Phi x_{t-1}-\Xi u^{(q)}_{t-1}$. Moreover, $w_t = x_t-c^*_0-\Phi_0 x_{t-1}-\Xi_0 u_{t-1} - \Xi_0 (u^{(q)}_{t-1}-u_{t-1})= u_t-\Xi_0(u^{(q)}_{t-1}-u_{t-1})$ using the VARMA(1,1) representation. 
Therefore, we have
\textcolor{black}{\begin{eqnarray*}
\lefteqn{\|\nabla_{\theta_2} \frac{1}{T}\Gb_T(\mathbf{F};\theta_{01};\theta_{02})\|_2\leq \|\nabla_{\theta_2} \frac{1}{T}\Gb_T(\mathbf{F};\theta_{01};\theta_{02})-\Eb[\nabla_{\theta_2} \frac{1}{T}\Gb_T(\mathbf{F};\theta_{01};\theta_{02})]\|_2+\|\Eb[\nabla_{\theta_2} \frac{1}{T}\Gb_T(\mathbf{F};\theta_{01};\theta_{02})]\|_2}\\
&\leq & \|\frac{1}{T}\sum^T_{t=1}w_t-\Eb[w_t]\|_2+ \|\frac{1}{T}\sum^T_{t=1}w_tx^\top_{t-1}-\Eb[w_tx^\top_{t-1}]\|_F + \|\frac{1}{T}\sum^T_{t=1}w_tu^{(q)\top}_{t-1}-\Eb[w_tu^{(q)\top}_{t-1}]\|_F\\
&&+ \|\Eb[\nabla_{\theta_2} \frac{1}{T}\Gb_T(\mathbf{F};\theta_{01};\theta_{02})]\|_2.
\end{eqnarray*}}
We have
\begin{align*}
\|\frac{1}{T}\sum^T_{t=1}w_t-\Eb[w_t]\|_2&\leq  \|\frac{1}{T}\sum^T_{t=1}u_{t}\|_2+\|\Xi_0\|_s\|\frac{1}{T}\sum^T_{t=1}\big(\sum_{k>q}\Psi_{0k}\big(x_{t-1-k}-\Eb[x_{t-1-k}]\big)\big)\|_2.
\end{align*}
Furthermore, 
\begin{eqnarray*}
\lefteqn{\|\frac{1}{T}\sum^T_{t=1}w_tx^\top_{t-1}-\Eb[w_tx^\top_{t-1}]\|_F}\\
& \leq& \|\frac{1}{T}\sum^T_{t=1}u_{t}x^\top_{t-1}-\Eb[u_{t}x^\top_{t-1}]\|_F+\|\Xi_0\|_s\|\frac{1}{T}\sum^T_{t=1}\big(\sum_{k>q}\Psi_{0k}\big(x_{t-1-k}x^\top_{t-1}-\Eb[x_{t-1-k}x^\top_{t-1}]\big)\|_F,
\end{eqnarray*}
and
\begin{eqnarray*}
\lefteqn{\|\frac{1}{T}\sum^T_{t=1}w_tu^{(q)\top}_{t-1}-\Eb[w_tu^{(q)\top}_{t-1}]\|_F \leq \|\frac{1}{T}\sum^T_{t=1}u_{t}u^\top_{t-1}-\Eb[u_{t}u^\top_{t-1}]\|_F} \\
& & + \|\frac{1}{T}\sum^T_{t=1}u_t(\sum_{k>q}\Psi_{0k}x_{t-1-k})^\top-\Eb[u_t(\sum_{k>q}\Psi_{0k}x_{t-1-k})^\top]\|_F\\
&&+ \|\Xi_0\|_s\|\frac{1}{T}\sum^T_{t=1}\sum_{k>q}\Psi_{0k}x_{t-1-k}u^\top_{t-1}-\Eb[\sum_{k>q}\Psi_{0k}x_{t-1-k}u^\top_{t-1}]\|_F \\
&& + \|\Xi_0\|_s\|\frac{1}{T}\sum^T_{t=1}\sum_{k>q}\Psi_{0k}x_{t-1-k}(\sum_{k'>q}\Psi_{0k'}x_{t-1-k'})^\top-\Eb[\sum_{k>q}\Psi_{0k}x_{t-1-k}(\sum_{k'>q}\Psi_{0k'}x_{t-1-k'})^\top]\|_F.
\end{eqnarray*}
\textcolor{black}{We now consider $\|\Eb[\nabla_{\theta_2}g(F_t;\theta_{01};\theta_{02})]\|_F$. We have
\begin{equation*}
\|\Eb[\nabla_{\theta_2}g(F_t;\theta_{01};\theta_{02})]\|_F \leq \|\Eb[w_t]\|_2+\|\Eb[w_tx^\top_{t-1}]\|_F+\|\Eb[w_tu^{(q)}_{t-1}]\|_F=A_1+A_2+A_3.
\end{equation*}
Since $\Eb[u_t]=0$, we get $A_1 \leq \|\Xi_0\|_s m^{1/2}\sum_{k>q}\|\Psi_{0k}\|_s\max_{1 \leq j \leq m}|\Eb[x_{j,t-1-k}]|$.
As for $A_2$, we obtain $A_2 \leq \|\Xi_0\|_s m\sum_{k>q}\|\Psi_{0k}\|_s\max_{1 \leq j,l \leq m}|\Eb[x_{j,t-1-k}x_{l,t-1}]|$.
Finally, we have
\begin{align*}
A_3 &\leq \|\Xi_0\|_sm\sum_{k>q}\|\Psi_{0k}\|_s\max_{1\leq j,l\leq m}|\Eb[x_{j,t-1-k}u_{l,t-1}]|\\
&+\|\Xi_0\|_sm\sum_{k>q}\sum_{k'>q}\|\Psi_{0k}\|_s\|\Psi_{0k'}\|_s\max_{1\leq j,l\leq m}|\Eb[x_{j,t-1-k}x_{l,t-1-k'}]|.
\end{align*}
Since $\|\Xi_0\|_s=O(1)$, and $\Eb[x_{j,t-1-k}u_{l,t-1}]=0$ for $k>0$, we deduce:
\begin{align*}
\|\Eb[\nabla_{\theta_2}g(F_t;\theta_{01};\theta_{02})]\|_F &\leq C_0\Big[\sum_{k>q}\|\Psi_{0k}\|_s\max_{1 \leq j \leq m}|\Eb[x_{j,t-1-k}]|+\sum_{k>q}\|\Psi_{0k}\|_s\max_{1 \leq j,l \leq m}|\Eb[x_{j,t-1-k}x_{l,t-1}]|\\
&+
\sum_{k>q}\sum_{k'>q}\|\Psi_{0k}\|_s\|\Psi_{0k'}\|_s\max_{1\leq j,l\leq m}|\Eb[x_{j,t-1-k}x_{l,t-1-k'}]|\Big],
\end{align*}
with $C_0>0$ sufficiently large.
Under Assumption \ref{assumption_gradient_second_dev_2}-(i), we deduce
\begin{align*}
\pi_T \|\nabla_{\theta_2} \frac{1}{T}\Gb_T(\mathbf{F};\theta_{01};\theta_{02})\|_2\|\uu\|_2&=O_p(\pi_T \Big\{T^{-1/2}+T^{-1/2}\|\mathbf{\Psi}_{0q}\|_{s,1}+T^{-1/2}\|\mathbf{\Psi}_{0q}\|^2_{s,1}\\
&+\sum_{k>q}\|\Psi_{0k}\|_s(\max_{1 \leq j \leq m}|\Eb[x_{j,t-1-k}]|\vee\max_{1 \leq j,l \leq m}|\Eb[x_{j,t-1-k}x_{l,t-1}]|)\\
&+\sum_{k>q}\sum_{k'>q}\|\Psi_{0k}\|_s\|\Psi_{0k'}\|_s\max_{1\leq j,l\leq m}|\Eb[x_{j,t-1-k}x_{l,t-1-k'}]|\Big\} )\|\uu\|_2.
\end{align*}}
As for the second term, for any $a>0$, we get
\begin{eqnarray*}
\Pb(\underset{\uu:\|\uu\|_2=C_{\pi}}{\sup}\pi_T\|\uu\|_2 \|\frac{1}{T}\overset{T}{\underset{t=1}{\sum}} \nabla^2_{\theta_2F^\top}g(F^*_t;\theta_{01};\theta_{02})(\widehat{F}_{t}-F_{t})\|_2>a/4)
& \leq & C_1\frac{C^2_{\pi}\pi^2_T}{a^2} d_2\frac{1}{T}(q+1)^2 b^2_{T,p},
\end{eqnarray*}
with $C_1>0$.
By assumption \ref{assumption_ell_F_control_gradient_2}, under $\|\widehat{\theta}_1-\theta_{01}\|_2 \leq C_\eps \nu_T$ with probability tending to one, the third term can be bounded as:
\begin{eqnarray*}
\Pb(\underset{\uu:\|\uu\|_2=C_{\pi}}{\sup}\nu_T\|\uu\|_2 \|\frac{1}{T}\overset{T}{\underset{t=1}{\sum}} \nabla^2_{\theta_2\theta^\top_1}g(F_t;\theta^*_{1};\theta_{02})(\widehat{\theta}_1-\theta_{01})\|_2>a/4)
& \leq & C_2\frac{C^2_{\pi}\pi^2_T}{a^2} d_2\frac{q}{T} C^2_{\epsilon}\nu^2_T.
\end{eqnarray*}
with $C_2>0$ finite. Finally, let $A_{k,t}:=\overset{d_1}{\underset{j=1}{\sum}}\overset{m(q+1)}{\underset{l=1}{\sum}}\partial^3_{\theta_{2,k}\theta_{1,j}F_l} g(F^*_t;\theta^*_1;\theta_{02})(\widehat{\theta}_{1,j}-\theta_{01,j})(\widehat{F}_{t,l}-F_{t,l})$, for $k=1,\ldots,d_2$ and let $A_t =(A_{k,t})_{1 \leq k \leq d_2}$. Then, for the last quantity of the expansion, we have
\begin{eqnarray*}
\Pb(\underset{\uu:\|\uu\|_2=C_{\pi}}{\sup}\pi_T\|\uu\|_2 \|\frac{1}{T}\sum^T_{t=1}A_{t}\|_2>a/4) \leq C_3\frac{C^2_{\pi}\pi^2_T}{a^2} d_2\frac{q}{T}(q+1)b^2_{T,p} C^2_{\epsilon}\nu^2_T,
\end{eqnarray*}
with $C_3>0$ finite.
Consequently, we deduce $|T_1|=O_p(\|\uu\|_2\pi^2_T)$.

\textcolor{black}{Similarly, to study the term $T_2$, we consider the expansion
\begin{eqnarray*}
\lefteqn{\partial^2_{\theta_{2,k} \theta_{2,k'}} \frac{1}{T}\Gb_T(\widehat{\mathbf{F}};\widehat{\theta}_{1};\theta_{02}) - \Eb[\partial^2_{\theta_{2,k} \theta_{2,k'}} g(F_t;\theta_{01};\theta_{02})]}\\
& = & \partial^2_{\theta_{2,k} \theta_{2,k'}} \frac{1}{T}\Gb_T(\widehat{\mathbf{F}};\theta_{01};\theta_{02})-\Eb[\partial^2_{\theta_{2,k} \theta_{2,k'}} g(F_t;\theta_{01};\theta_{02})]+ \frac{1}{T}\overset{d_1}{\underset{j=1}{\sum}}\partial^3_{\theta_{2,k} \theta_{2,k'}\theta_{1,j}}\Gb_T(\widehat{\mathbf{F}};\theta^*_1;\theta_{02})(\widehat{\theta}_{1,j}-\theta_{01,j})\\
& = & \partial^2_{\theta_{2,k} \theta_{2,k'}} \frac{1}{T}\Gb_T(\mathbf{F};\theta_{01};\theta_{02})-\Eb[\partial^2_{\theta_{2,k} \theta_{2,k'}} g(F_t;\theta_{01};\theta_{02})]\\
&+& \frac{1}{T}\overset{T}{\underset{t=1}{\sum}}\overset{m(q+1)}{\underset{l=1}{\sum}} \partial^3_{\theta_{2,k}\theta_{2,k'} F_l} g(F^*_t;\theta_{01};\theta_{02})(\widehat{F}_{t,l}-F_{t,l})+ \frac{1}{T}\overset{d_1}{\underset{j=1}{\sum}}\partial^3_{\theta_{2,k}\theta_{2,k'}\theta_{1,j}} \Gb_T(\mathbf{F};\theta^*_1;\theta_{02})(\widehat{\theta}_{1,j}-\theta_{01,j})\\
& + & \frac{1}{T}\overset{T}{\underset{t=1}{\sum}}\overset{d_1}{\underset{j=1}{\sum}}\overset{m(q+1)}{\underset{l=1}{\sum}}\partial^4_{\theta_{2,k}\theta_{2,k'}\theta_{1,j}F_l} g(F^*_t;\theta^*_1;\theta_{02})(\widehat{\theta}_{1,j}-\theta_{01,j})(\widehat{F}_{t,l}-F_{t,l}).
\end{eqnarray*}}
Using the same reasoning as in \textcolor{black}{$T_1$}, under Assumption \ref{assumption_ell_F_control_hessian_2}, we obtain
\begin{equation*}
T_2 = \frac{\pi^2_T}{2}\uu^\top\Eb[\nabla^2_{\theta_2\theta^\top_2}g(F_t;\theta_{01};\theta_{0,2})]\uu +O_p(\pi^2_T\|\uu\|^2_2).
\end{equation*}
Hence, we conclude $T_1+T_2 = \frac{\pi^2_T}{2}\uu^\top \Hb(\theta_{01};\theta_{02})\uu(1+o_p(1))$. The latter term is larger than $C^2_\eps\lambda_{\min}(\Hb(\theta_{01};\theta_{02}))/2>0$, which implies $\|\widehat{\theta}_{2}-\theta_{02}\|=O_p(\pi_T)$.
\end{proof}

\section{Factor estimation}\label{appendix_factor_estim}

\textcolor{black}{Throughout this paper, we work with the GLS estimator $\widehat{f}_t$ of the factor variable $f_t$ that satisfies $\widehat{f}_t=(\widehat{\Lambda}^\top \widehat{\Sigma}_\varepsilon^{-1} \widehat{\Lambda})^{-1} 
\widehat{\Lambda}^\top \widehat{\Sigma}_\varepsilon^{-1} y_t$ for any $t$. By Theorem 5.1 of \cite{bai2012}, $\|\widehat{\Lambda}-\Lambda_0\|_F=O_p(\sqrt{\frac{p}{T}})$ and $\|\widehat{\Sigma}_{\varepsilon}-\Sigma_{0\varepsilon}\|_F=O_p(\sqrt{\frac{p}{T}})$, and by Theorem 6.1, $\|\widehat{f}_t-f_t\|_2=O_p(\sqrt{\frac{1}{p}})$ for any $t$. Recall that the parameter $\Sigma_{\varepsilon}$ is a diagonal matrix, with diagonal element being bounded (as in Assumption C in \cite{bai2012}). Let us consider the consistency of $\widehat{f}_t$ in a uniform sense, which is specified in Assumption \ref{assumption_unif_consistency_factors}. Assume the following conditions:
\begin{assumption}\label{assumption_factor_moments}
\begin{itemize}
    \item[(i)] $\exists M>0$ such that $\sup_t \Eb[m^{-1}\|f_t\|^4_2]<M$.
    \item[(ii)] $\exists M>0$ such that $\sup_t \sup_{1 \leq k \leq p}\Eb[|\varepsilon_{k,t}|^4] < M$
    \item[(iii)] $\exists M>0$ such that: $\sup_t\Eb[\|\frac{1}{\sqrt{p}}\sum^p_{k=1}\Lambda_{k,0}\varepsilon_{k,t}\|^4_2]<M$, with $\Lambda_{k,0}$ the $k$-th row of $\Lambda_0$.
\end{itemize}
\end{assumption}
Condition (i) implies the existence of the fourth order moments of the factor variables. Based on equation (\ref{eq:yt}) of the fMSV model, $\Eb[f^4_{k,t}] = \Eb[\exp(2h_{k,t})\zeta^4_{k,t}]$, and since $(\zeta_t)$ is independent of $(h_t)$, the moment condition requires $\Eb[\zeta^4_{k,t}] < \infty$ and $\Eb[\exp(2h_{k,t})] < \infty$. Condition (ii) requires that the idiosyncratic variable $\varepsilon_t$ has finite fourth moments. Finally, condition (iii) is verified under pervasive factors, that is the situation where $\lambda_{\max}(p^{-1}\Lambda^\top_0\Lambda_0) \leq \delta$ with $\delta>0$ some constant. Under IC2 and since $\|\Sigma_{\varepsilon}\|_s=O(1)$, $\Lambda_0$ also satisfies the pervasive condition.
First, note that
\begin{equation*}
\forall t, \; \widehat{f}_t-f_t= -(\widehat{\Lambda}^\top\widehat{\Sigma}^{-1}_{\varepsilon} \widehat{\Lambda})^{-1} 
\widehat{\Lambda}^\top \widehat{\Sigma}^{-1}_{\varepsilon}(\widehat{\Lambda}-\Lambda_0)f_t + (\widehat{\Lambda}^\top\widehat{\Sigma}^{-1}_{\varepsilon} \widehat{\Lambda})^{-1} 
\widehat{\Lambda}^\top \widehat{\Sigma}^{-1}_{\varepsilon} \varepsilon_t =:M_{1,t}+M_{2,t}.
\end{equation*}
Take $M_{1,t}$:
\begin{eqnarray*}
\max_{t\leq T}\|M_{1,t}\| \leq O_p(p^{-1})O_p(p^{1/2})O_p(1)\|\widehat{\Lambda}-\Lambda_0\|_F\max_{t\leq T}\|f_t\|_2 \leq O_p(\sqrt{\frac{1}{T}})\max_{t\leq T}\|f_t\|_2,
\end{eqnarray*}
as $\|\widehat{\Lambda}\|_F=O_p(p^{1/2})$, $\|\widehat\Sigma^{-1}_{\varepsilon}\|_s=O_p(1)$. Note that the previous inequality also follows from the condition for identification $p^{-1}\widehat{\Lambda}^\top \widehat{\Sigma}^{-1}_{\varepsilon}\widehat{\Lambda}=I_m$. More generally, the pervasive factor condition $\lambda_{\max}(p^{-1}\widehat{\Lambda}^\top\widehat{\Lambda}) \leq \delta$, with $\delta>0$ some constant, would also imply this inequality. Under Assumption \ref{assumption_factor_moments}-(i), $\max_{t\leq T}\|f_t\|_2 =O_p(T^{1/4})$, which implies $\max_{t\leq T}|M_{1,t}|=O_p(T^{-1/4})$.
Take $M_{2,t}$:
\begin{equation*}
M_{2,t} = (\widehat{\Lambda}^\top\widehat{\Sigma}^{-1}_{\varepsilon} \widehat{\Lambda})^{-1} 
\Lambda^{\top}_0 \Sigma^{-1}_{0,\varepsilon} \varepsilon_t + (\widehat{\Lambda}^\top\widehat{\Sigma}^{-1}_{\varepsilon} \widehat{\Lambda})^{-1} 
(\widehat{\Lambda}^\top \widehat{\Sigma}^{-1}_{\varepsilon}-\Lambda^{\top}_0 \Sigma^{-1}_{0,\varepsilon})\varepsilon_t =: L_{1,t}+L_{2,t}.
\end{equation*}
We have:
\begin{eqnarray*}
\lefteqn{\|L_{2,t}\|_2 \leq  O_p(p^{-1})\|(\widehat{\Lambda}^\top \widehat{\Sigma}^{-1}_{\varepsilon}-\Lambda^{\top}_0 \Sigma^{-1}_{0,\varepsilon})\varepsilon_t\|_F}\\
&  & \leq O_p(p^{-1})\Big(\|(\widehat{\Lambda}^\top-\Lambda^*_0)\widehat{\Sigma}^{-1}_{\varepsilon}\varepsilon_t\|_F+\|\Lambda^{\top}_0(\widehat{\Sigma}^{-1}_{\varepsilon}-\Sigma^{-1}_{0,\varepsilon})\varepsilon_t\|_F\Big)=:P_{1,t}+P_{2,t}.
\end{eqnarray*}
Define by $\widehat{v}_{k,t}$ the $k$-th entry of $\widehat{\Sigma}^{-1}_{\varepsilon}\varepsilon_t$. Then
\begin{equation*}
\max_{t\leq T} P_{1,t} \leq O_p(p^{-1})\sqrt{\sum^{p}_{k=1}\|(\widehat{\lambda}_{k}-\lambda_{0,k})\widehat{v}_{k,t}\|^2_2} =O_p(p^{-1})\max_{t\leq T}\max_{k\leq p}|\widehat{v}_{k,t}|\|\widehat{\Lambda}-\Lambda_0\|_F.
\end{equation*}
Now note that $P_{2,t} = O_p(p^{-1})\|\Lambda^{\top}_0\widehat{\Sigma}^{-1}_{\varepsilon}(\Sigma_{0,\varepsilon}-\widehat{\Sigma}_{\varepsilon})\Sigma^{-1}_{0,\varepsilon}\varepsilon_t\|_F$. 
Then define $\widehat{\xi}_k$ as the $k$-th column of $\Lambda^{\top}_0\widehat{\Sigma}^{-1}_{\varepsilon}$ and $v_{k,t}$ the $k$-th entry of $\Sigma^{-1}_{0,\varepsilon}\eps_t$. We obtain
\begin{align*}
\|\Lambda^{\top}_0(\widehat{\Sigma}^{-1}_{\varepsilon}-\Sigma^{-1}_{0,\varepsilon})\varepsilon_t\|_F &= \sqrt{\sum^{p}_{k=1}\|\widehat{\xi}_k(\Sigma_{0,\varepsilon,kk}-\widehat{\Sigma}_{\varepsilon,kk})v_{k,t}\|^2_2}\\
&\leq \sqrt{\sum^{p}_{k=1}\|\widehat{\xi}_k\|^2|v_{k,t}|^2|\widehat{\Sigma}_{\varepsilon,kk}-\Sigma_{0,\varepsilon,kk}|^2} \leq  O_p(1)\sqrt{\sum^{p}_{k=1}|v_{k,t}|^2|\widehat{\Sigma}_{\varepsilon,kk}-\Sigma_{0,\varepsilon,kk}|^2} 
\end{align*}
Since $\max_k\|\widehat{\xi}_k\|_2=O_p(1)$, we deduce
\begin{equation*}
\|L_{2,t}\|_2 \leq O_p(p^{-1})\max_{t\leq T}\max_{k \leq p}|\widehat{v}_{k,t}|\|\widehat{\Lambda}-\Lambda_0\|_F+O_p(p^{-1})O_p(1)\max_{t\leq T}\sqrt{\sum^{p}_{k=1}|v_{k,t}|^2|\widehat{\Sigma}_{\varepsilon,kk}-\Sigma_{0,\varepsilon,kk}|^2}.
\end{equation*}
Under Assumption \ref{assumption_factor_moments}-(ii), $\max_{t\leq T}\max_{k \leq p}|\widehat{v}_{k,t}|=O_p((pT)^{1/4})$ and $\max_{t\leq T}\max_{k \leq p}|v_{k,t}|=O_p((pT)^{1/4})$. So $\max_{t\leq T}P_{1,t} =O_p((pT)^{-1/4})$, $\max_{t\leq T}P_{2,t} =O_p((pT)^{-1/4})$.
Furthermore, under Assumption \ref{assumption_factor_moments}-(iii), with $\Lambda_{k,0}$ the $k$-th row of $\Lambda_0$, we have
\begin{equation*}
\max_{t\leq T}\|L_{1,t}\|_2 \leq O_p(p^{-1})
\max_{t\leq T}\sum^{p}_{k=1}\|\varepsilon_{k,t}\xi_k\|_2=O_p(p^{-1})O_p(\sqrt{p}T^{1/4}),
\end{equation*}
where $\xi_k$ the $k$-th column of $\Lambda^{\top}_0\Sigma^{-1}_{0,\varepsilon}$.
We deduce $\max_{t\leq T}\|\widehat{f}_t-f_t\|_2 = O_p(\frac{1}{T^{1/4}}+\frac{T^{1/4}}{\sqrt{p}})$.}

\section{Derivative formulas}\label{appendix_derivative}

In this section, we derive the gradients and Hessians for both $\Lb_T(\mathbf{F};\cdot)$ and $\Gb_T(\mathbf{F};\cdot;\cdot)$. 

\vspace*{0.3cm}

\noindent\textbf{Computation of $\nabla_{\theta_1}\Lb_T(\mathbf{F};\theta_1), \nabla^2_{\theta_1\theta^\top_1}\Lb_T(\mathbf{F};\theta_1)$.} To derive the gradient function $\nabla_{\theta_1}\Lb_T(\mathbf{F};\theta_1)$, we consider the the differential $\dd \Lb_T(\mathbf{F};\theta_1)$ with respect to $\underline{\Psi}$, which is
{\small{\begin{eqnarray*}
\lefteqn{\dd\Lb_T(\mathbf{F};\theta_1)=\dd\Big[\frac{1}{2}\overset{T}{\underset{t=1}{\sum}}\big(x_t-\underline{\Psi}Z_{q,t-1}\big)^\top \big(x_t-\underline{\Psi}Z_{q,t-1}\big) \Big]}\\
& = & \frac{1}{2}\overset{T}{\underset{t=1}{\sum}} \tr\big(-x^\top_t (\dd\underline{\Psi})Z_{q,t-1}-Z^\top_{q,t-1}(\dd\underline{\Psi})^\top x_t+Z^\top_{q,t-1}(\dd\underline{\Psi})^\top\underline{\Psi}Z_{q,t-1}+Z^\top_{q,t-1}\underline{\Psi}^\top(\dd\underline{\Psi})Z_{q,t-1}\big).
\end{eqnarray*}}}
By identification, using Chapter 13.2 of \cite{abadir2005}, $\nabla_{\theta_1} \Lb_T(\mathbf{F};\theta_{1}) =  - \text{vec}\big(\overset{T}{\underset{t=1}{\sum}} \big(x_t-\underline{\Psi} Z_{q,t-1}\big)Z^\top_{q,t-1}\big)$. To get the Hessian, we compute the differential operator twice with respect $\underline{\Psi}$, which is
\begin{equation*}
\dd^2\Lb_T(\mathbf{F};\theta_{1}) = \overset{T}{\underset{t=1}{\sum}} \text{tr}\big(Z^\top_{q,t-1}(\dd \underline{\Psi})^\top(\dd\underline{\Psi})Z_{q,t-1}\big) = \overset{T}{\underset{t=1}{\sum}} \text{tr}\big(Z_{q,t-1}Z^\top_{q,t-1}(\dd \underline{\Psi})^\top I_m(\dd\underline{\Psi})\big). 
\end{equation*}
Hence, using 13.49 of \cite{abadir2005}, we get $\nabla^2_{\theta_1 \theta^\top_1} \Lb_T(\mathbf{F};\theta_{1}) = \overset{T}{\underset{t=1}{\sum}} (Z_{q,t-1} Z^\top_{q,t-1} \otimes I_m)$.

\vspace*{0.3cm}

\noindent \textbf{Computation of $\nabla_{\theta_2}\Gb_T(\mathbf{F};\theta_1;\theta_2), \nabla^2_{\theta_2\theta^\top_2}\Gb_T(\mathbf{F};\theta_1;\theta_2)$.} Let us now consider the second step loss function. \textcolor{black}{To simplify the notations, we write $K_{t-1}=K_{t-1}(\theta_{01})$}. The score $\nabla_{\theta_2}\Gb_T(\mathbf{F};\theta_1;\theta_2)$ and Hessian $\nabla^2_{\theta_2\theta^\top_2}\Gb_T(\mathbf{F};\theta_1;\theta_2)$ can be obtained in a similar manner, where $\theta_2 = \text{vec}(\Gamma)$. The first order differential is
\begin{align*}
\dd \Gb_T(\mathbf{F};\theta_1;\theta_2)
&= \frac{1}{2}\overset{T}{\underset{t=1}{\sum}}\tr\big(-x^\top_t(\dd\Gamma)K_{t-1}-K^\top_{t-1}(\dd\Gamma)^\top x_t + K^\top_{t-1}(\dd\Gamma)^\top \Gamma K_{t-1} + K^\top_{t-1}\Gamma^\top (\dd\Gamma)K_{t-1}\big)\\
& = \frac{1}{2}\overset{T}{\underset{t=1}{\sum}}\tr\big(-2 x^\top_t(\dd\Gamma)K_{t-1} + 2 K^\top_{t-1}\Gamma^\top(\dd\Gamma)K_{t-1}\big).
\end{align*}
As a consequence, by identification, the first order derivative is $\nabla_{\theta_2}\Gb_T(\mathbf{F};\theta_1;\theta_2)= - \text{vec}\big(\overset{T}{\underset{t=1}{\sum}}\big(x_t-\Gamma K_{t-1}\big)K^\top_{t-1}\big)$.
The second order differential and Hessian are
\begin{equation*}
\dd^2\Gb_T(\mathbf{F};\theta_1;\theta_2)=
 \overset{T}{\underset{t=1}{\sum}}\tr\big(K^\top_{t-1}(\dd\Gamma)^\top (\dd\Gamma)K_{t-1}\big), \; \nabla^2_{\theta_2\theta^\top_2}\Gb_T(\mathbf{F};\theta_1;\theta_2) = \overset{T}{\underset{t=1}{\sum}}\big( K_{t-1}K^\top_{t-1}\otimes I_m\big).
\end{equation*}

\section{\textcolor{black}{VARMA(1,1) representation for $x_t$}}\label{appendix_varma}

\textcolor{black}{Rewriting equation (\ref{eq:ssf}), we obtain:
\[
\forall t, \; x_t = \nu^* + \Phi(x_t -\nu^*) + w_t, \quad w_t = \xi_t^* - \Phi \xi_{t-1}^* + \eta_{t-1}.
\]
We use the approach of Subsection 4.7 of \cite{hamilton1994} to show that $w_t$ follows the VMA(1) (vector moving-average process of order 1), which indicates that $x_t$ follows the VARMA(1,1) process. The autocovariance function of $w_t$ is given by
\begin{align*}
\Eb[w_t w_{t-i}^\top] = \left\{ \begin{array}{ll}
\Sigma_{\xi} + \Phi \Sigma_{\xi} \Phi^\top + \Sigma_{\eta} & \mbox{for } i=0,\\
- \Phi \Sigma_{\xi} & \mbox{for } i= \pm 1, \\
0 & \mbox{otherwise},
\end{array} \right.
\end{align*}
showing that $w_t$ is covariance stationary, and its autocovariances are zero beyond one lag, as are those for VMA(1). Hence, we consider the VMA(1) representation of $w_t$, as 
\[\forall t,\;  w_t = u_t - \Upsilon u_{t-1}, \quad u_t \sim WN(0,\Sigma_u), \]
which satisfies the moment conditions,
\begin{align*}
\Eb[w_tw_t^\top] &= \Sigma_u + \Upsilon \Sigma_u \Upsilon^\top 
=  \Sigma_{\xi} + \Phi \Sigma_{\xi} + \Phi^\top + \Sigma_{\eta}, \\
\Eb[w_t w_{t-1}^\top] &= - \Upsilon \Sigma_u = - \Phi \Sigma_{\xi}.
\end{align*}
By the second equation, we obtain $\Sigma_u = \Upsilon^{-1} \Phi \Sigma_{\xi}$.
Substituting $\Sigma_u$ in the first equation with it, we obtain,
\[
\Upsilon (\Phi \Sigma_{\xi}) \Upsilon^\top - \Upsilon (\Sigma_{\xi} + \Phi \Sigma_{\xi} \Phi^\top + \Sigma_{\eta})
+ \Phi \Sigma_{\xi} =0.
\]
Thus the $\Upsilon$ is the solution which satisfies the invertible condition.
}

\textcolor{black}{Hence $x_t$ follows the VAMA process (\ref{eq:varma}) with $u_t = (I_m - \Upsilon L)^{-1}(I_m - \Phi L) \xi_t^* + (I_m - \Upsilon L)^{-1} \eta_{t-1}$, where $L$ is the lag operator.
In spite of the structure, $u_t$ and $u_{t-i}$ ($i=\pm 1,\pm 2, \ldots $) are uncorrelated.
}

\section{\textcolor{black}{Implementation details for MSV estimation}}\label{appendix_sec:implementation_MSV}

\textcolor{black}{
In this subsection, we discuss the computational issues relating to criterion (\ref{obj_crit_first_step}) in \textbf{Step 1} for the estimation of the MSV parameters. Hereafter, $\gamma>0$ is fixed. 
The tuning parameter $\lambda_T$ controls the model complexity and must be calibrated.}
\textcolor{black}{To select the latter, several procedures have been proposed to take into account the dependency among observations, but there is no consensual approach. Based on thorough simulated and real data experiments, \cite{cerqueira2020} concluded that good estimation performances are obtained by the ``blocked'' cross-validation when the process is stationary; under non-stationary time series, the most accurate estimations are produced by ``out-of-sample'' methods, where the last part of the data is used for testing. In our experiments, we follow the out-of-sample approach for cross-validation. To be specific, we split the full sample into training and test sets. The training sample corresponds to the first $75\%$ of the entire sample and the test sample to the last $25\%$. For given values of $\lambda_T$, we solve (\ref{obj_crit_first_step}) in the training set and then compute the loss function in the test set using the estimator computed on the training set. This procedure is repeated for all the $\lambda_T$ candidates and we select the one minimizing the loss. Then, we estimate the model over the full sample using the optimal $\lambda_T$. }

\textcolor{black}{Formally, the procedure is as follows: divide the data into two sets, the training set and test set; define the cross-validation score as $\text{CV}(\lambda_T) = T^{-1}_{\text{oos}}\sum_{i \in I_{\text{oos}}}\ell(\widehat{F}_{t};\widehat{\theta}^{\text{in}}_{1}(\lambda_T))$, where $T_{\text{oos}}$ is the size of the test set, $I_{\text{oos}}$ is the index of observations in the test set, $\widehat{\theta}^{\text{in}}_{1}(\lambda_T)$ is the estimator based on the training set with regularization parameter $\lambda_T$, and $\ell(\widehat{F}_t;\cdot)$ is the non-penalized loss. The optimal regularization parameter $\lambda^*_T$ is then selected according to: $\lambda^*_T = \arg \,\min_{\lambda_T}\, \text{CV}(\lambda_T)$. Then, $\lambda^*_T$ is used to obtain the final estimator over the whole data. 
Here, the minimization of the cross-validation score is performed over a grid of $\lambda_T$ candidates, which is set as follows: first, we solve problem (\ref{obj_crit_first_step}) using this out-of-sample technique by the LASSO, i.e., with $\tau(\widetilde{\theta}_{1,k})=1$ for all $k$, and where the smallest (resp. largest) $\lambda_T$ candidate is set as $0$ (resp. the value for which all the estimated coefficients are set to $0$); denote by $\lambda^*_{1,T}$ the corresponding optimal tuning parameter; then the grid to solve the adaptive LASSO problem (\ref{obj_crit_first_step}) is set as $c\lambda^*_{1,T}$ with $c=0.1,0.2,\ldots,5$.  
One advantage of our proposed procedure is that the penalized estimation can be performed equation-by-equation. \textcolor{black}{We rely on CVX (\cite{cvx,gb08}) to solve the penalized problem.} }

\textcolor{black}{It is worth mentioning that the cross-validation procedure may accommodate the selection of $\gamma$. However, this would increase the computational complexity. \cite{zou2006adaptive} considered the values $\{0.5,1,2\}$ and $\gamma\geq 1$ provided the best performances; \cite{fan2009} set $\gamma=0.5$. For simplicity, we set $\gamma=1$ in all our experiments.
Finally, $q$ must be chosen to solve problem (\ref{obj_crit_first_step}). The larger, the better the approximation is: since $\|\mathbf{\Psi}_ {0q}\|_{s,1}\leq \kappa \frac{\beta^{q}}{1-\beta}$ with $0 \leq \beta <1, \kappa>0$ under exponential decay, the truncation bias $\|\mathbf{\Psi}_ {0q}\|_{s,1}$ will tend to zero as $q \rightarrow \infty$. In our simulations and real data experiments, we work with $q = 10$, a value of the same order as $L_1 \log(T)$ ($T=2000$ in our simulations; the sample sizes for real data are of the same order).
}

\section{Competing models}

\subsection{DCC model}\label{dcc_model}

Rather than a direct specification of the variance-covariance matrix process $(H_t)$ of $(y_t)$, an alternative approach is to split the task into two parts: individual conditional volatility dynamics on one side;  conditional correlation dynamics on the other side. The most commonly used correlation process is the Dynamic Conditional Correlation (DCC) of \cite{engle2002}. In its BEKK form, the general DCC model is specified as: $y_t = H^{1/2}_t \eta_t$, with $H_t := \Eb[y_t y^\top_t | \Fc_{t-1}]$ positive definite, where:
\begin{equation*}
H_t = D_t R_t D_t, \; R_t =  Q^{\star-1/2}_t Q_t Q^{\star-1/2}_t, \;
Q_t = \Omega + \overset{q}{\underset{k=1}{\sum}} M_k Q_{t-k} M^\top_k + \overset{r}{\underset{l=1}{\sum}} W_l u_{t-l} u^\top_{t-l} W^\top_l,
\end{equation*}
where $D_t = \text{diag}\left(\sqrt{h_{1,t}},\sqrt{h_{2,t}},\ldots,\sqrt{h_{p,t}}\right)$ with $(h_{j,t})$ the univariate conditional variance dynamic of the $j$-th component, $u_t = \left(u_{1,t},\ldots,u_{p,t}\right)$ with $u_{j,t} = y_{j,t}/\sqrt{h_{j,t}}$, $Q_t = \left[q_{t,uv}\right]$, $Q^{\star}_t = \text{diag}\left(q_{11,t},q_{22,t},\ldots,q_{pp,t}\right)$. $\Fc_{t-1}$ is the sigma field generated by the past information of the $p$-dimensional process $(y_t)_{t \in \mathbb{Z}}$ until (but including) time $t-1$. The model is parameterized by some deterministic matrices $(M_k)_{k = 1,\ldots,q}$, $(W_l)_{l = 1,\ldots,r}$ and a positive definite $p \times p$ matrix $\Omega$. In the original DCC of \cite{engle2002}, the process $(Q_t)$ is defined by $Q_t = \Omega^\star +\sum^q_{k=1} B_k \odot Q_{t-k} +\sum^r_{l=1} A_l \odot u_{t-l} u^\top_{t-l}$, where the deterministic matrices $(B_k)_{k = 1,\ldots,q}$ and $(A_l)_{l = 1,\ldots,r}$ are positive semi-definite. Since the number of parameters of the latter models is of order $O(p^2)$, our applications are restricted to scalar $B_k$'s and $A_l$, say $b_k, a_l$, and we work with $q=r=1$. Moreover, we employ a correlation targeting strategy, that is we replace $\Omega^\star$ by $(1-a_1-b_1)\overline{Q}$ with $\overline{Q}$ the sample covariance of the standardized returns $u_t$. We specify a GARCH(1,1) model for each individual conditional variance dynamic $h_{j,t}$. The estimation is performed by a standard two-step Gaussian QML for the MSCI and S\&P 100 portfolio. For the S\&P 500 portfolio, we employ the composite likelihood method proposed in~\cite{pakel2021} to fix the bias problem caused by the two-step quasi likelihood estimation and to make the likelihood decomposition plausible for large vectors. The composite likelihood method consists of averaging the second step likelihood (correlation part) over $2 \times 2$ correlation-based likelihoods. To be precise, using the $n$-sample of observations, the second step composite likelihood becomes:
\begin{equation*}
L_{2n}(a_1,b_1) = \frac{1}{n} \overset{n}{\underset{t=1}{\sum}}\overset{L}{\underset{l=1}{\sum}} \Big[\log(|R^{(l)}_t|)+u^{(l)\top}_t\big(R^{(l)}_t\big)^{-1}u^{(l)}_t\Big],
\end{equation*}
where $R^{(l)}_t$ is the $2 \times 2$ correlation matrix with $l$ corresponding to a pre-specified pair of indices in $\{1,\ldots,p\}$, $u^{(l)}_t$ is a $2 \times 1$ sub-vector of the standardized residuals $u_t$, with $k$ selecting the pre-specified pair of indices in $\{1,\ldots,p\}$. There are several ways of selecting the pairs: every distinct pair of the assets, i.e, $u^{(1)}_t=(u_{1,t},u_{2,t})^\top,u^{(2)}_t=(u_{1,t},u_{3,t})^\top,\ldots,u^{p(p-1)/2}_t=(u_{p-1,t},u_{p,t})^\top$, so that $L = p(p-1)/2$ in $L_{2n}(a_1,b_1) $; or contiguous overlapping pairs, i.e., $u^{(1)}_t=(u_{1,t},u_{2,t})^\top,u^{(2)}_t=(u_{2,t},u_{3,t})^\top,\ldots,u^{(p-1)}_t=(u_{p-1,t},u_{p,t})^\top$ so that $L=p-1$. We use the contiguous overlapping pairs for computation gains, where the complexity is $O(p)$, in contrast to the $O(p^2)$ complexity if $L=p(p-2)/2$: this pair construction is the 2MSCLE method of \cite{pakel2021}. 

\subsection{BEKK model}\label{bekk_process}

\cite{engle1995} proposed a new way to generate a positive-definite variance-covariance $(H_t)$: the BEKK model (Baba, Engle, Kraft and Kroner). The BEKK(p,q,K) model specifices the dynamic
\begin{equation*}
H_t = \Omega + \sum^q_{i=1}\sum^K_{k=1} A_{ik}y_{t-i}y^\top_{t-i} A^\top_{ik} + \sum^p_{j=1}\sum^K_{k=1} B_{jk}H_{t-j} B^\top_{jk},
\end{equation*}
with $K$ is an integer, $\Omega \succ 0$, $A_{ik}$ and $B_{jk}$ are square $p \times p$ matrices. Clearly, the number of parameters to estimate explodes with $p$. Therefore, it is common to work with the restricted form scalar BEKK(1,1,1) given as
$H_t = \Omega + a^2 y_{t-1}y^\top_{t-1} + b^2H_{t-1}$,
where $a,b$ are scalar parameters. Variance targeting can be performed by simply setting $(1-a^2-b^2)\widehat{S}$ in lieu of $\Omega$, with $\widehat{S}$ the sample variance-covariance matrix. 

\color{black}

\section{Implementation issues}

\subsection{Computational efficiency}\label{comp_cost}

\begin{figure}[h]
\caption{\label{fig:fig01} Computation time 
}
\begin{center}
\includegraphics[width=16cm]{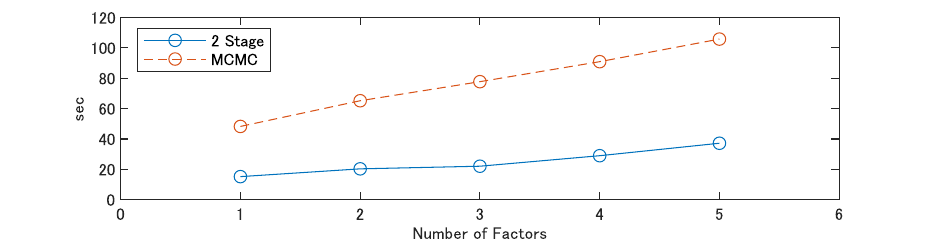}
\end{center}
\end{figure}

We compare the computational time of our two-stage procedure and the MCMC method suggested by \cite{kastner2017}.
For the MCMC estimation, we utilized the R package {\texttt{factorstochvol}}, which is available on CRAN;
All R-based computations were performed using R version 4.5.2.
In contrast, our proposed method was implemented in MATLAB R2025b.
We conducted the computations on a personal computer equipped with an Intel Core i7-12700 processor (2.1 GHz), 32 GB of RAM, and an NVIDIA RTX 3070 GPU.
We focus on the MSCI data ($p=23$, $T=3900$) with small number of factors ($m=1,\ldots,5$).
For the implementation of the MCMC algorithm, we specify $\Phi$ as diagonal, and set the length of MCMC sampler as 2500. Note that $\Phi$ is non-diagonal for the two-stage procedure.
Figure \ref{fig:fig01} shows that the computational time for the two methods increases linearly, as $m$ increases. 
Our two-stage procedure is 3.2 times faster than the MCMC method on average in Figure \ref{fig:fig01}. Although our procedure is faster than the MCMC method in our current setup, the execution speed of the latter could potentially be improved if implemented in MATLAB instead of R.

\subsection{Portfolios}\label{portfolio}

We omit the subscript $t$ here.
For a return vector $y$ with corresponding covariance matrix $H$, we can define a portfolio retunrn as $y_p = w^\top y$ with the portfolio variance $h_p = w^\top Hw$, where $w$ is the $p \times 1$ vector of portfolio weights. Note that $\iota^\top w=1$, where $\iota$ is the $p \times 1$ vector of ones.

The GMVP defined by (\ref{portprob}) can be obtained by equalizing all the marginal risk contributions, $\frac{\partial h_p}{\partial w_i} = \frac{\partial h_p}{\partial w_j}$ $(\forall i,j)$.  On the other hand, the idea behind risk parity portfolio (RPP) is equalizing all the total risk contributions, $w_i \frac{\partial h_p}{\partial w_i} = w_j \frac{\partial h_p}{\partial w_j}$ $(\forall i,j)$.
For this purpose, the RPP is used in the context of long-only portfolios. 
As discussed in \cite{xbai2016}, define $w^c=(w_1^c,\ldots,w_p^c)^\top$ as
\[
w^c = \mbox{arg} \mathop{\mbox{min}}_{w} w^\top Hw - c \sum_{i=1}^p \log w_i, 
\]
where $c$ is a positive constant.
Note that the gradient is given by $Hw - c w^{-1}$, where $w^{-1} = (1/w_1,\ldots,1/w_p)^\top$.
The weights for the RPP is given by $w_{\mbox{\tiny RPP}}=\{ w_i^* \}$ with $w_i^* = w_i^c/\sum_{i=1}^p w_i^c$, which is uniquely determined to be irrelevant to the constant $c$ (See Lemma 2.2 of \cite{xbai2016}).

\end{document}